\newcommand{\itemshort}{
\setlength{\itemsep}{0mm}
\setlength{\parskip}{0mm}
}
\newtheorem{observation}{Observation}
\newtheorem{definition}{Definition}
\newtheorem{lemma}{Lemma}
\newtheorem{theorem}{Theorem}
\newtheorem{corollary}{Corollary}
\newenvironment{proof}{\noindent {\bf Proof: }}{\hfill$\square$}
\newcommand{\qed}{\hfill$\square$}
\title{A self- stabilizing algorithm for maximal matching in link-register model in $O(n\Delta^3)$ moves}
\date{}
\author[1]{Johanne Cohen}
\author[1]{Georges Manoussakis\thanks{This work was partially funding by DIGITEO}}
\author[2]{Laurence Pilard}
\author[2]{Devan Sohier}
\affil[1]{LRI, CNRS, Universit\'e Paris Sud, Universit\'e Paris-Saclay, France\\  \texttt{\{~johanne.cohen,georges.manoussakis\}@lri.fr}}
\affil[2]{LI-PaRAD, Universit\'e Versailles-St. Quentin, Universit\'e Paris-Saclay, France\\
  \texttt{\{~laurence.pilard, devan.sohier\}@uvsq.fr}}
\newcommand{\DEL}[1]{}
\newcommand{\Exec}{\mathcal{\mathcal{E}}}
\newcommand\Voisin{N}
\newcommand{{\minN}}{s} 
\newcommand{{\maxN}}{t} 
\newcommand{\PRaba}{PRabandonment}
\newcommand{\PRReset}{PRreset}
\newcommand{\FCTpointingto}{Correct\_register\_value}
\newcommand{\Idle}{\emph{Idle}}%
\newcommand{\You}{\emph{You}}%
\newcommand{\Other}{\emph{Other}}%
\newcommand{\stylegraphe}{
	\tikzstyle{ann} = [circle,draw=black]
 
	\tikzstyle{node}=[draw,circle,thick]
	\tikzstyle{vertex carre}=[draw,rectangle,line width=2pt]
	\tikzstyle{vertex circle}=[draw,circle,minimum size=0.6cm,inner sep=0pt]
	\tikzstyle{matched}=[line width=2pt]
	\tikzstyle{matched node} = [draw,circle,minimum size=0.6cm,line width=2pt,-,black]
	\tikzstyle{pointer} = [draw, thick,->,>=stealth,black,shorten >=1pt]

	\tikzstyle{alternating} = [circle, scale =0.5,draw=black]
	\tikzstyle{couleur1} = [fill=white]
	\tikzstyle{couleur2} = [fill=gray!20]
	\tikzstyle{couleur3} = [fill=black!80]
	\tikzstyle{couleur4} = [fill=black!30]
	\tikzstyle{couleur5} = [fill=gray!60]
	\tikzstyle{couleur8} = [fill=gray!80]
	 
	\tikzstyle{matched edge} = [draw,line width=3pt,-,black]
	\tikzstyle{edge} = [draw,-,black]
	\tikzstyle{legend}=[rectangle, thin,minimum width=2.5cm, minimum height=1.2cm,black]
}
\begin{document}
\maketitle

\begin{abstract}
 In the matching problem, each node maintains a pointer to one of its neighbor or to $null$, and a maximal matching is computed when each node points either to a neighbor that itself points to it (they are then called married), or to $null$, in which case no neighbor can also point to $null$.  This paper presents a self-stabilizing distributed algorithm to compute a maximal matching in the link-register model under read/write atomicity, with complexity {$O(n\Delta^3)$} moves under the adversarial distributed daemon, where $\Delta$ is the maximum degree of the graph. \end{abstract}

\paragraph{Keywords:}  Self-stabilization, Maximal Matching.

\section{Introduction}

The matching problem consists in building disjoint pairs of adjacent nodes. The matching is maximal if no new pair can be built, \emph{i.e.}, if among any two adjacent nodes, at least one of them is part of a pair. This problem has a wide range of applications in networking and parallel computing, such as the implementation of load balancing.

All nodes participating in the distributed matching algorithm maintain a variable, called the pointer, that can take any neighbor as value, or the special value $null$; a node pointing to $null$ is single, and two nodes pointing one toward the other are called married. The set of pairs of married nodes constitutes the matching.

The matching problem definition is thus local; however, the building process needs to take long range phenomena into account, to avoid cycles in which node pointers form a cycle of size greater than 2. To break such symmetries, we suppose that all nodes have a unique identifier.

\section{State of the art}

The matching problem has recently received much attention, both in graph theory and in distributed computing.

For instance, in graph theory, some (almost) linear time approximation sequential algorithm for the maximum weighted matching problem (the maximum matching being the largest matching in terms of the weight of the edges) have  recently been studied \cite{DrakeH03,Preis1999}. 

Self-stabilizing algorithms for computing maximal matching have been designed  in various models (anonymous  network~\cite{Asada2015}  or not \cite{TurauH11a}, weighted or unweighted, see \cite{GuellatiK10} for a survey).  For an unweighted graph, Hsu and Huang \cite{HsuH92} gave the first  self-stabilizing algorithm and proved a bound of $O(n^3)$ on the number of moves under a sequential adversarial daemon. Hedetniemi \emph{et al.} \cite{HedetniemiJS01} completed the complexity analysis proving a $O(m)$ move complexity. Manne \emph{et al.} \cite{ManneMPT09} gave  a self-stabilizing algorithm that converges in $O(m)$ moves under a distributed adversarial daemon. Cohen \emph{et al.} \cite{cohen2016self} extend this result and  propose a randomized self-stabilizing algorithm for computing a maximal matching in an anonymous network. The complexity is $O(n^{2})$ moves with high probability, under the adversarial distributed daemon.

All these algorithms work in a state model, in which nodes can directly access the variables of adjacent nodes. This model was introduced by Dijkstra in \cite{Dijk74}. However, this model fails to capture the aynchronous phenomena that happen in many real-life distributed systems, which led \cite{DIM93} to introduce the link-register model. In this model, communications are abstracted by registers in which nodes can write and read values. Atomicity conditions define the granularity of the algorithm. A variety of atomicity conditions exist (see \cite{JohnenH08} for a survey and some results on the strength of the different atomicities).

Read/write atomic registers are registers associated to each (directed) link, in which one of the nodes can write, and the other read; each read or write operation on the register is atomic (meaning that it cannot be interrupted), but a read in the register can happen arbitrarily long after the previous write, forcing the reading process to act based on outdated values (as opposed to what happens in Dijkstra-type state model). Read/write atomic registers can be implemented over message-passing models (at a large cost however), as in \cite{MostefaouiPRJ17} for instance. This kind of atomicity is the strongest, meaning that algorithm written under this model also solve the problem under the other classical models.

The possible occurence of faults in the execution of the algorithm is taken into account with the paradigm of self-stabilisation, as defined in \cite{Dijk74}. A fault (or a sequence of faults) can lead the system to an arbitrary configuration of the processes and registers, starting from which the execution (seen as a completely new execution starting from this arbitrary configuration) must eventually resume a correct behavior in finite time. A complete study of this notion can be found in \cite{Dole00}.

We propose the first distributed self-stabilizing algorithm solving the matching problem in a link-register model with read/write atomicity. The algorithm is presented under the form of guarded rules (usual for state model algorithms, but as far as we are aware of, never used before for link-register algorithms). This allows to underscore the granularity of the model, each configuration being the result of the application of an arbitrary subset of guarded rules to the previous configuration.

The algorithm works as follows: the lowest id node of a pair proposes to its neighbor, that accept (or not) the proposition; then the marriage is confirmed in three steps. This scheme can be interrupted at any point, either because of another marriage being concluded, or because of a faulty initialization. Once the marriage is reached, the married nodes do not take any more move, and the algorithm is eventually silent. At worst, the algorithm has to take $O(n\Delta^3)$ moves before reaching a maximal matching.

\section{Model}

The system consists in a set of processors $V$ and a set $E\subset V\times V$ of links. We suppose that communications are bidirectional, so that $(u, v)\in E\Rightarrow(v, u)\in E$. Considering two adjacent processors $u$ and $v$ (\emph{i.e.}, $(u, v)\in E$), there exists a register $r_{uv}$ in which $u$ is the only process allowed to write, and that $v$ can read.

The set of neighbors of a process $u$ is denoted by $\Voisin(u)$ and is the set of all processes adjacent to $u$, and $\Delta$ is the maximum degree of $G$.

All nodes have the same variables; if $var$ is a variable, $var_u$ denotes the instance of this variable on process $u$. Each node $u$ has a unique id $id_u$; in the following, for the sake of simplicity, we do not distinguish between $u$ and $id_u$.

In the matching problem, each node $u$ maintains a variable $p_u\in\Voisin(u)\cup\{null\}$ indicating the neighbor it is married or attempting to marry.

A configuration describes the situation of the algorithm at a point in its execution: in the link-register model, it is the vector of the values of all variables and registers. In particular, a configuration solves the maximal matching if it is such that $\forall u, (p_u\neq null \Rightarrow p_{p_u}=u)\land (p_u=null\Rightarrow \forall v\in\Voisin(u), p_v\neq null)$. The first part of this specification means that if a node points to one of its neighbor, this neighbor points to it; the second one implies maximality: if a node points to no other node, none of its neighbors is in the same situation, since they could marry and create a larger matching.

The algorithm is presented under the form of a set of guarded rules. A guarded rule consists in a guard which is a predicate on the values of the variables of a node, and an atomic action that can be executed if the guard is true. To respect read/write atomicity, if a guard refers to the value of a neighbor's register (which implies the reading of this register), the associated action cannot write in a register. In particular, we decided to introduce the $Write$ guarded rule, that writes the adequate value in a register; other actions never write in any register. Thus, all actions consist either in readings in neighbor's registers and taking local actions, or in writing in its own register, which respects the read/write atomicity. A guarded rule is activable in a configuration if its guard is true.

Link-register algorithms are generally presented as an infinite loop of readings, local actions and writings for each node, and the chosen atomicity allows to decide the points in the algorithm at which the execution of a node can be suspended to let another node take over. We chose to show more explicitly the points at which a node suspends action, and thus the result in terms of configuration of each of its atomic actions: the next configuration in an execution is obtained by applying one or several actions of guarded rules whose guards are true.

A \emph{daemon} is a predicate on the executions. We consider only the most powerful one: the \emph{adversarial distributed daemon} that, at each step, picks a nonempty subset of activable rules (at most one for each node) and executes them. The next configuration is then the result of the concurrent application of the actions of all these rules (as only one rule can be selected for each node, and that no action can modify the variables of other nodes, this concurrent application yields unambiguous result).

The moment when a process $u$ writes in register $r_{uv}$ is the time starting from which the written value $r_{uv}$ is available to $v$, thus, the writing is analogous to a message reception by $v$ in a message-passing model. A node $u$ reads in its register $r_{uv}$ in all guards. This allows it to check that the writing register of a node has reached its correct value. This can be paralleled with an acknowledgement.

An \emph{execution} is then an alternate sequence of configurations and actions  $ \Exec = (C_0,A_0,\ldots,C_i,A_i, \ldots)$, such that:\begin{itemize}
\item $C_i$ is a configuration;
\item $A_i$ is a nonempty set of couples $(R, u)$ such that $R$ is activable for node $u$ in configuration $C_i$, and no two of these couples concern the same node;
\item $C_{i+1}$ is obtained from $C_i$ by executing all actions in $A_i$.
\end{itemize}
We then write $C_i\mapsto_\Exec C_{i+1}$, or $C_i\mapsto C_{i+1}$ if the execution is clear from the context, $C_i\mapsto^*C_j$ if $i\leq j$, and $C_i\mapsto^+C_j$ if $i<j$; in this case, we say for any action $(R, u)\in A_k$ with $i\leq k\leq j$ that there has been a transition $(R, u)$ between $C_i$ and $C_j$.

An algorithm is \emph{self-stabilizing} for a given specification, if there exists a sub-set $\cal L$ of configurations such that every execution starting from a configuration in $\cal L$ verifies the specification  (\emph{correctness}), and every execution, starting from any configuration, eventually reaches a configuration in $\cal L$ (\emph{convergence}). A configuration is \emph{stable} if no process is activable in the configuration. The algorithm presented here, is \emph{silent}: its maximal executions are finite and end in a stable configuration. We call legitimate a stable configuration verifying the maximal matching specification.

\section{Algorithm}

The presented algorithm is based on the algorithm by Manne \emph{et al}~\cite{ManneMPT09} written under the state model. A marriage is contracted in two phases: (1) \emph{the selection} of the edge to add to the matching and (2) \emph{the confirmation} or the \emph{lock} of the edge in the matching in three steps. 


\paragraph{Variables description:}
Each node $u$ has two local variables. Variable $p_u$ is the identifier of the node $u$ points to: nodes $u$ and $v$ are said to \emph{be married} to each other if and only if $u$ and $v$ are neighbors, $p_{u}$ points to $v$, and  $p_{u}$ points to $u$.  We also use a variable $m_u$ indicating  the progress of $u$'s marriage.
 $m_u\in\{0,1,2\}$. 

Each node $u$ has a four bit register $r_{uv}$ for each of its neighbors $v$. The first two bits $r_{uv}.p$ can take the value Idle if $u$ points to $null$ (ie $p_u=null$), You if it points to $v$, and Other if it points to a node $\neq v$. The last two bits $r_{uv}.m$ can be 0, 1 or 2, indicating the progress of u's marriage.



\paragraph{Algorithm description:}
The $Seduction$ and $Marriage$ rules implement the selection of an edge for the matching: they set the $p$ variable of a node to a candidate for a marriage. First, the node with the smallest id in a pair executes the $Seduction$ rule, to which the node with the highest id can respond by executing the $Marriage$ rule. This asymetrical process avoids situations with nodes trying to seduce neighbors in a cycle, that could be reproduced by an adversarial daemon.

Under read/write atomicity, an offset is possible between the value of the local variables of a node ($p, m$) and the value of its registers. In order to avoid infinite executions during which the distributed daemon lets a node $u$ attempt to marry a neighbor $v$ just at the same time when $v$ abandons its attempt to marry $u$, and then conversely at the next step, it is necessary to design a mechanism locking progressively a marriage. We achieve that with variable $m$, which takes values in $\{0,1,2\}$: except for faulty initialization, $m_u=0$ means that $u$ did not start locking any marriage, and $m_u\geq 1$ means that $u$ has a neighbor $v$ such that $p_u=v\land p_v=u$. If $m_u=1$ then the marriage lock is in progress and if $m_u=2$ then the lock is done. $m$ is incremented in the execution of rule $Increase$.


The $Reset$ rule ensures that local variables $p$ and $m$ for a node $u$ have consistent values. It is executed when predicate $\PRaba(u)$ or \PRReset(u) is true. $\PRaba(u)$ means that $u$'s marriage process should be restarted if $u$ is trying to marry a node $v<u$ that is not seducing it, or if $u$ is trying to seduce a node that is already married (at least, when the registers of $v$ indicate that). This last case can happen when $v$ is responding to several proposals at the same time, while the first one is provoked by ``bad'' initializations. $\PRReset(u)$ indicates a discrepancy between the steps taken in the locking mechanism by the two processes involved in it. This is due to ``bad'' initializations.\\~\\

\paragraph{Predicates and functions of the algorithm:}

{\small 
\begin{tabbing}

$\FCTpointingto(u,a)\equiv$ \= if $p_u=null$ then return $(\Idle,0)$\\
\> else if $p_u=a$ then return $(\You,m_u)$\\
\> else return $(\Other,m_u)$\\[3pt]


aaa\=aaaaaaaaiaaaaa\=ax\=aaaaaaa\=aa\=aaaaaii\=aa\=aa\=\hspace*{10cm}\=\kill
$\PRaba(u)\equiv [p_u\neq null\land\,(r_{p_{_u}\!u}.p \neq You \land (u>p_u \lor m_{u}\neq 0 ))$\\
\>\>\>\> $\lor (r_{p_{_u}\!u}=(Other,2) \land u<p_u)\,]$\\[3pt]

$\PRReset(u)\equiv (p_u\neq null)\land(r_{p_{_u}\!u}.p=You) \land  ($\\
\>\>\> $(m_{u}=0 \land r_{p_{_u}\!u}.m =2)$\\
\>\>$\lor$\>$(m_{u}=2 \land  r_{p_{_u}\!u}.m =0)$\\
\>\>$\lor$\>$(m_{u}=0 \land  r_{p_{_u}\!u}.m =1 \land u>p_u)$\\
\>\>$\lor$\>$(m_{u}=1 \land  r_{p_{_u}\!u}.m =0 \land u<p_u)$\\
\>\>$\lor$\>$(m_{u}=1 \land  r_{p_{_u}\!u}.m =2 \land u<p_u)$\\
\>\>$\lor$\>$(m_{u}=2 \land  r_{p_{_u}\!u}.m =1 \land u>p_u))$
\end{tabbing}
}

\paragraph{Rules for each node $u$:}

{\small 
\begin{tabbing}

$\mathbf{\forall a\in N(u),}$\\[3pt]

aaaaa\=aaaaaaaaiaaaaa\=ax\=aaaaaaa\= \kill 
\>\textbf{Write(a)}\> \textbf{::}
\> $r_{ua} \neq \FCTpointingto(u,a)$ \\
\>\>\>\> $\to$ $r_{ua}:=\FCTpointingto(u,a)$\\[3pt]

aaaaa\=aaaaaaaaiaaaaa\=ax\=taaaiaaaaa\=\kill 
\>\textbf{Seduction(a)}\> \textbf{::}
\> $p_{u}= null$ \> $\land~r_{ua} = \FCTpointingto(u,a)$\\
\>\>\>\>  $\land~r_{au}=(\text{\Idle}, 0) \land (u<a) ~~~ \to (p_{u}, m_{u}):= (a, 0)$\\[3pt]

\>\textbf{Marriage(a)}\> \textbf{::}
\> $p_{u}= null$ \> $\land~r_{ua} = \FCTpointingto(u,a)$\\
\>\>\>\> $\land~r_{au}=(\text{\You}, 0) \land (u>a) ~~~ \to (p_{u}, m_{u}):= (a, 0)$\\[3pt]

\=aaaaaaaaa\=ax\=xxx\=aa\=aa\=aa\=aaaaaii\=aa\=aa\=\hspace*{10cm}\=\kill
\>\textbf{Increase}\> \textbf{::}
\> $p_u\neq null \land r_{up_{_u}} = \FCTpointingto(u,p_u)$\\
\>\>\>\>  $\land$ \> $ (r_{p_{_u}\!u}.p=You) \land  ($\\
\>\>\>\>\>\>\> $(m_{u}=0) \land  [\,(u<p_u \land r_{p_{_u}\!u}.m =1) \lor (u>p_u \land r_{p_{_u}\!u}.m=0)\,]$\\
\>\>\>\>\>\> $\lor$ \> $(m_{u}=1) \land [\,(u<p_u \land  r_{p_{_u}\!u}.m =1) \lor (u>p_u \land  r_{p_{_u}\!u}.m =2)\,]$\\
\>\>\>\>\> $) \to$\\
\>\>\>\>\>\>\>\> $m_{u}:=  m_{u} +1 $\\[3pt]

\=aaaaaa\=ax\=taaaaaaaai\=aaaaaaaaaaaaaaaaaaaaaaaaaaaaaaaaaaaa\=\kill
\>\textbf{Reset}\> \textbf{::}
\> $p_u\neq null$ \> $\land~r_{up_{_u}} = \FCTpointingto(u,p_u)$\\
\>\>\>\> $\land~(\PRaba(u) \lor \PRReset(u))$ \> $\to (p_{u}, m_{u}):=(null, 0)$

\end{tabbing}
}

\paragraph{About the rules:}


A node $u$ has $3\deg(u)+2$ rules: one rule $Write$, $Seduction$ and $Marriage$ for each neighbor, plus one rule $Increase$ and one $Reset$, that are not associated to a link. Given a neighbor $v$ of $u$, $u$ can be activable only for rule $Marriage(v)$ or $Seduction(v)$, because the first one necessitates that $u>v$ and the second that $u<v$. 


\begin{definition}[$v$-$Increase$/$v$-$Reset$, $v$-rule and $R(-)$ rule] 
Let $u$ and $v$ be two nodes. 
We say that node $u$ is activable in the configuration $C$ for a $v$-$Increase$ (resp. $v$-Reset) rule if, in $C$, $u$ is activable for an $Increase$ (resp. $Reset$) rule and $p_u=v$.
We say that node $u$ is eligible in the configuration $C$ for a $v$-rule if $u$ is eligible for one of the following rule: $\{Write(v), Seduction(v)$, $Marriage(v), v$-$Increase, v$-$Reset\}$ in $C$.
Finally, let $R$ be any rule among $Write$, $Marriage$ and $Seduction$. We say that $u$ is eligible for a $R(-)$ rule, if there exists a neighbor of $u$, say $a$, such that $u$ is eligible for a $R(a)$ rule. 
\end{definition}



\begin{observation}\label{lem:LP:Exclusive2}
Let $u$ be a node and $C$ be a configuration. In $C$, we have:
\begin{enumerate}
\item if $p_u=null$ then: 
\begin{itemize}
\itemshort
\item $u$ is not eligible for $Increase$ nor $Reset$;
\item $\forall v\in \Voisin(u): u$ is eligible for at most one rule among the set of rules $\{Write(v)$, $Marriage(v)$, $Seduction(v)\}$;
\end{itemize}
\item\label{lem:LP:notnull} if $p_u\neq null$ then:
\begin{itemize}
\itemshort
\item $u$ is not eligible for $Marriage(-)$ nor $Seduction(-)$; 
\item $u$ is eligible for at most one rule among the set of rules $\{Write(p_u)$, $Increase$, $Reset\}$; moreover, if this rule is an $Increase$ (resp. $Reset$), this is necessarily a $p_u$-$Increase$ (resp. $p_u$-$Reset$);
\item and $\forall x\in \Voisin(u)\setminus \{p_u\}:$  among all the $x$-rules, $u$ can only be eligible for $Write(x)$;
\end{itemize}
\end{enumerate}
\end{observation}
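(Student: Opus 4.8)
The plan is to fix a node $u$ and a configuration $C$ and argue along the dichotomy $p_u=null$ versus $p_u\neq null$, exactly as the statement is organized. The single recurring tool is that several guards share, or flatly contradict, one common clause: the ``register-consistency'' clause $r_{uv}=\FCTpointingto(u,v)$ occurs (positively or negatively) in almost every guard, and the two selection rules both carry the clause $p_u=null$. Whenever two guards demand contradictory truth values for one such shared clause, the two rules cannot be simultaneously eligible, and the corresponding exclusivity follows at once. The proof is thus a structured case analysis whose only real content is one finite enumeration.

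First I would dispatch the easy exclusions in the case $p_u=null$. The guards of $Increase$ and $Reset$ both open with $p_u\neq null$, so neither is eligible. For a fixed neighbor $v$, the guard of $Write(v)$ is $r_{uv}\neq\FCTpointingto(u,v)$, whereas $Seduction(v)$ and $Marriage(v)$ both require $r_{uv}=\FCTpointingto(u,v)$; hence $Write(v)$ excludes the other two. Finally $Seduction(v)$ requires $u<v$ and $Marriage(v)$ requires $u>v$, and since identifiers are distinct exactly one of these can hold, so those two are mutually exclusive as well, giving at most one eligible rule among the three.

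For $p_u\neq null$ the easy parts go as follows. Since $Seduction(a)$ and $Marriage(a)$ carry $p_u=null$, none is eligible for any $a$, which is the first bullet. For the third bullet, fix $x\in\Voisin(u)\setminus\{p_u\}$: among the $x$-rules, $Seduction(x)$ and $Marriage(x)$ are ruled out as above, while $x$-$Increase$ and $x$-$Reset$ require $p_u=x$ by definition of the $v$-rules, contradicting $x\neq p_u$; so only $Write(x)$ can be eligible. The ``moreover'' clause of the second bullet is immediate: when $p_u\neq null$ is a fixed value, any eligible $Increase$ (resp. $Reset$) is by definition a $p_u$-$Increase$ (resp. $p_u$-$Reset$). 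It then remains to prove that at most one of $Write(p_u)$, $Increase$, $Reset$ is eligible, and as before $Write(p_u)$ demands $r_{up_u}\neq\FCTpointingto(u,p_u)$ while both $Increase$ and $Reset$ demand equality, so $Write(p_u)$ is excluded by either of the other two.

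I expect the main obstacle to be the mutual exclusion of $Increase$ and $Reset$, the only pair whose guards agree (positively) on the register-consistency clause. Here the plan is this: $Increase$ requires $r_{p_u u}.p=You$, so it suffices to show the $Reset$ guard $\PRaba(u)\lor\PRReset(u)$ cannot hold together with $r_{p_u u}.p=You$ and the $m$-condition of $Increase$. For the $\PRaba(u)$ disjunct, both of its branches force $r_{p_u u}.p\neq You$ (the first branch says so directly, the second forces $r_{p_u u}=(Other,2)$), contradicting $Increase$ immediately. For the $\PRReset(u)$ disjunct, which itself requires $r_{p_u u}.p=You$, I would reduce everything to the finite family of triples $(m_u,\,r_{p_u u}.m,\,\operatorname{sign}(u-p_u))$: I list the four triples admitted by the $m$-part of $Increase$ against the six admitted by $\PRReset(u)$, and check case by case that the two lists are disjoint, each near-collision being separated either by the value of $r_{p_u u}.m$ or by the sign of $u-p_u$. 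This short but careful enumeration is the crux; once it is settled, $Increase$ and $Reset$ are exclusive and the observation is proved.
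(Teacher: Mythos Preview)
Your proof is correct and complete. The paper states this as an Observation without proof, treating it as an immediate consequence of the guards; your careful case analysis---in particular the enumeration showing that the four $(m_u,\,r_{p_uu}.m,\,\mathrm{sign}(u-p_u))$ triples admitted by $Increase$ are disjoint from the six admitted by $\PRReset(u)$---is exactly the verification the paper leaves implicit.
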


\paragraph{Execution examples:}
Below is an execution of the algorithm under the adversarial distributed daemon. Figure~\ref{fig:execution:exemple}a shows the initial state of the execution. Node identifiers are indicated inside the circles. Black arrows show the content of the local variable $p$ and the absence of arrow means that $p=null$. $s{:}Write(t)$ means that node $s$ executes the $Write(t)$ rule. 

Consider an initial configuration (Figure~\ref{fig:execution:exemple}a in which variable and register values as follows: $(p_s,m_{s})=(p_t,m_{t})=(null,0)$, $r_{st}=(You,2)$ and $r_{ts}=(Idle,0)$. Thus, initially, nodes $s$ and $t$ are not matched. Since the local variables of $s$ are not consistent with register $r_{st}$, node $s$ executes $Write(t)$ in order to set $r_{st}=(Idle,0)$  (Figure~\ref{fig:execution:exemple}b). In this execution, we take $s<t$.

Now, since $s$ and $t$ are not matched they can start a selection process in order to marry. The node with the smallest identifier, $s$, starts the process and thus, since $r_{ts}=(Idle,0)$, $s$ executes a $Seduction(t)$ rule in order to set $p_s$ to $t$. $s$ is then eligible to execute a series of $Write(v)$ rules to update its registers. Consider an execution it executes at some point $Write(t)$ (Figure~\ref{fig:execution:exemple}c).

Once register $r_{st}$ is updated, node $t$ answers to the proposition of $s$ by executing a $Marriage(s)$ rule, setting $p_t=s$. It is then eligible to execute a series of $Write(v)$ rules to update its registers. As long as $t$ does not update its $r_{ts}$ register, the process of locking the marriage cannot start since $s$ needs $r_{ts}.p=You$ in order to start increasing its $m$ variable. So assume $t$ updates its $r_{ts}$ register with a $Write(s)$ (Figure~\ref{fig:execution:exemple}d).

From this point, both nodes point towards each other. 
The locking process of the marriage starts from this point. 
First, the node with the highest identifier, that is $t$, sets its $m$ variable to $1$ with a $s$-$Increase$ and then updates its registers (see Figure~\ref{fig:execution:exemple}e). Then node $s$ executes a $t$-$Increase$ and sets $m_s=1$ followed by a $Write(t)$ rule to update its register  (Figure~\ref{fig:execution:exemple}f). It executes yet another $t$-$Increase$ rule to set $m_s=2$ (Figure~\ref{fig:execution:exemple}g). This execution of two consecutive $t$-$Increase$ by $s$ guarantees that $t$ has correct register values. 
Finally, after registers have been updated accordingly, $t$ executes a last $s$-$Increase$ rule to set $m_t=2$. At this point the matching of $s$ and $t$ is locked.

One might wonder why the $Increase$ rule is not alternately executed by $s$ and $t$. Indeed, $s$ executes two consecutive $t$-$Increase$ to set its $m$ variable from $0$ to $1$ and then to $2$, while $t$ does not change its $m$ value. Actually, the algorithm does not converge if nodes would perform the $Increase$ rule alternately. These two consecutive $Increase$ are a key point on the lock process of a marriage. 


\newcommand{\stylechaine}{
\tikzstyle{legend}=[rectangle, thin,minimum width=2.5cm, minimum height=0.8cm,black]
	\tikzstyle{vertex}=[draw,circle,minimum size=0.3cm,fill=black,inner sep=0pt]
	\tikzstyle{mvertex}=[draw,circle,minimum size=0.3cm,fill=red,inner sep=0pt]
	\tikzstyle{vertex retour}=[draw,rectangle,minimum size=0.25cm,fill=black,inner sep=0pt]
	\tikzstyle{matched edge} = [draw,line width=1pt,-,red]
	\tikzstyle{edge} = [draw,thick,-,black!40]
	\tikzstyle{medge} = [draw,thick,-,red!40]
	\tikzstyle{pointer} = [draw,  thick,->,>=stealth,bend left=40,black,shorten >=1pt]
}
\newcommand{\drawPointeur}[2]{\draw[pointer] (#1) to [bend left=45] (#2);}

\newcommand{\drawTopologie}{
\stylegraphe
\node[node] (v-2) at (2.8,0.5) {$s$};
\node[node] (v-1) at (1.4,0.5) {$t$};
\draw[edge] (v-1)--(v-2);
}

\newcommand{\drawVariableS}[3]{
 \node[] (S) at (2.4,0.6) {\tiny (#1,#2)};
\node[legend]  at (2.6,0.25) {\small $m_s=#3$};}
 
 \newcommand{\drawPointeurS}{\draw[pointer] (v-2.north) to [bend right=25] (2.3,0.7);}
 \newcommand{\drawPointeurT}{\draw[pointer] (v-1.north) to [bend left=25] (1.9,0.7);}

\newcommand{\drawVariableT}[3]{
\node[] (T) at (1.8,0.6) {\tiny (#1,#2)};
\node[legend]  at (1.5,0.25) {\small $m_t=#3$};}

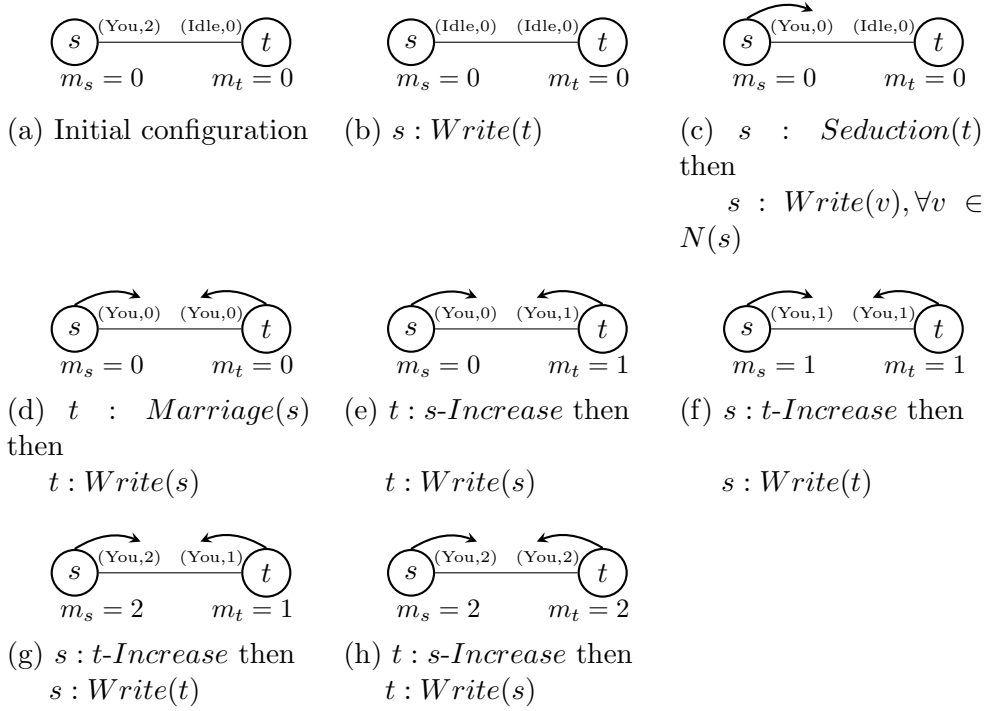
\begin{figure}[h]
\centering
\begin{tabular}{p{4cm}p{4cm}p{4cm}}
 \begin{tikzpicture}[xscale =  -1.8,yscale=2]
\drawTopologie
\drawVariableS{You}{2}{0}
\drawVariableT{Idle}{0}{0}
\end{tikzpicture}

&	
 \begin{tikzpicture}[xscale =  -1.8,yscale=2]
\drawTopologie
\drawVariableS{Idle}{0}{0}
\drawVariableT{Idle}{0}{0}
\end{tikzpicture}
&
\begin{tikzpicture}[xscale =  -1.8,yscale=2]
\drawTopologie
\drawVariableS{You}{0}{0}
\drawVariableT{Idle}{0}{0}
\drawPointeurS
\end{tikzpicture}
\\[-2em]

(a) Initial configuration  &(b)  $s:Write(t)$ & (c)  $s:Seduction(t)$ then \\ 
& & \hspace*{12pt} $s:Write(v), \forall v\in N(s)$ \\[1em]

\begin{tikzpicture}[xscale =  -1.8,yscale=2]
\drawTopologie
\drawVariableS{You}{0}{0}
\drawVariableT{You}{0}{0}
\drawPointeurS  \drawPointeurT
\end{tikzpicture}
&	
 \begin{tikzpicture}[xscale =  -1.8,yscale=2]
\drawTopologie
\drawVariableS{You}{0}{0}
\drawVariableT{You}{1}{1}
\drawPointeurS \drawPointeurT
\end{tikzpicture}
&
 \begin{tikzpicture}[xscale =  -1.8,yscale=2]
\drawTopologie
\drawVariableS{You}{1}{1}
\drawVariableT{You}{1}{1}
\drawPointeurS \drawPointeurT
\end{tikzpicture}
\\[-1em]

(d) $t:Marriage(s)$ then & (e) $t:s$-$Increase$ then  & (f) $s:t$-$Increase$ then\\
\hspace*{12pt}    $t:Write(s)$ &  \hspace*{12pt}   $t:Write(s)$ &  \hspace*{12pt}   $s:Write(t)$ \\[1em]

 \begin{tikzpicture}[xscale =  -1.8,yscale=2]
\drawTopologie
\drawVariableS{You}{2}{2}
\drawVariableT{You}{1}{1}
\drawPointeurS  \drawPointeurT
\end{tikzpicture}
&	
 \begin{tikzpicture}[xscale =  -1.8,yscale=2]
\drawTopologie
\drawVariableS{You}{2}{2}
\drawVariableT{You}{2}{2}
\drawPointeurS \drawPointeurT
\end{tikzpicture}
& \\[-1em]

(g) $s:t$-$Increase$ then & (h)  $t:s$-$Increase$ then\\
\hspace*{12pt} $s:Write(t)$ &  \hspace*{12pt} $t:Write(s)$

	
\end{tabular}
\caption{A typical execution of the algorithm. The absence of arrow means that the $p$-variable is equal to $null$. }
\label{fig:execution:exemple}
\end{figure}

%
%
%
%

\section{Proof}
\subsection{State of an edge}
We first focus on an edge $({\minN},{\maxN})$ of the matching $\mathcal{M} = \{ (a,b) \in E : p_a=b \land p_b=a\}$ built by our algorithm when ${\minN} < {\maxN} $. 
In particular, we focus on values of local variables and registers of this edge in some chosen configurations.  

\begin{definition}\label{def:state}
Let $({\minN},{\maxN})$  be an edge with ${\minN} < {\maxN}$. We say that in a configuration $C$, the edge $({\minN},{\maxN})$ is in state $(You,\alpha,\beta)$ if 
{$(p_{\minN}={\maxN} \land p_{\maxN}={\minN}) \land  (m_{\minN}=\alpha \land m_{\maxN}=\beta)$}
\end{definition}

If an edge $({\minN},{\maxN})$ is in state  $(You,\alpha,\beta)$, then this edge belongs to the matching. Unfortunately, due to some ``bad'' initialization for instance, this edge can be removed from the matching at some point of the execution. In the following, we characterize an edge that belongs to the matching and that will forever remain in it.

A correct state corresponds to the situations appearing in the Figure~\ref{fig:execution:exemple} starting from step (d). Starting from a configuration where edge $(s,t)$ is in a correct state, the two nodes, one after the other, execute  $Increase$ and $Write$ rules. A link is in an updated correct state when all registers of the edge are updated (and so exactly one node among  $s$ and $t$ is eligible for an $Increase$), while it is toUpdate when the register of one of the two nodes is not up to date (and so exactly one node among  $s$ and $t$ is eligible for a $Write$).

\begin{definition}[Updated correct state] \label{def:updatedCS} 
Consider an edge $({\minN},{\maxN})$ with ${\minN} < {\maxN}$ in state $(You,\alpha,\beta)$ in a configuration $C$. This link is in an \emph{updated correct state} if \\
\centerline{$(r_{{\minN}{\maxN}}=(You,\alpha) \land  r_{{\maxN}{\minN}}=(You,\beta)) \land  (\alpha,\beta)\in\{(0,0),(0,1),(1,1),(2,1),(2,2)\}$}

\end{definition}

\begin{definition}[toUpdate correct state]\label{def:toUpdateCS} 
Let $({\minN},{\maxN})$  be an edge with ${\minN} < {\maxN}$ in the state $(You,\alpha,\beta)$ in a configuration $C$. This state is said to be a \emph{toUpdate correct state}  if 
\begin{eqnarray*}
		\begin{split}
		[(\alpha,\beta)\in\{(0,1),(2,2)\} \land (r_{{\minN}{\maxN}}=(You,\alpha) \land  r_{{\maxN}{\minN}}=(You,\beta-1))]~ \\
		\lor ~ [(\alpha,\beta)\in\{(1,1),(2,1)\} \land (r_{{\minN}{\maxN}}=(You,\alpha-1) \land  r_{{\maxN}{\minN}}=(You,\beta))]~\\
		\end{split}
\end{eqnarray*}
\end{definition}

\begin{definition}[Correct state]\label{def:toUpdateCS} 
Let $({\minN},{\maxN})$  be an edge with ${\minN} < {\maxN}$ in the state $(You,\alpha,\beta)$ in a configuration $C$. This state is said to be \emph{correct} if the state is an updated or a toUpdate correct state.
\end{definition}

All four previous definitions deal with an edge in which the first node has a smaller identifier than the second node. In the following, we will write $(s,t)$ to denote such an edge. This constraint is due to the fact that nodes execute their $Increase$ rule one after the other in a specific order, and a link in state $(You, 0,1)$ can be correct while a link in state $(You,1,0)$ never is.
When we do not make any assumption on which node has the smallest identifier in an edge, we use the notation $(u,v)$. 

We now state that a node in an edge in a correct state is only activable for $Increase$ and $Write$ and that an edge in a correct state will forever remain in a correct state.
 


\begin{lemma}\label{lem:correctStateLP1}
Let $({\minN},{\maxN})$  be an edge with ${\minN} < {\maxN} $. Let $C$ be a configuration. If $(s,t)$ is in a correct state $(You, \alpha, \beta)$ in $C$ then 
neither $s$ nor $t$ is eligible for $Seduction(-)$, $Marriage(-)$ or $Reset$ in $C$;
\end{lemma}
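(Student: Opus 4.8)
The plan is to handle the three forbidden rules in turn, exploiting two structural facts about a correct state: both endpoints have a non-$null$ pointer, and both ``incoming'' registers carry $\You$ in their $p$-field.

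First, for $Seduction(-)$ and $Marriage(-)$ I would argue directly from Definition~\ref{def:state}: an edge in state $(You,\alpha,\beta)$ satisfies $p_s=t$ and $p_t=s$, so $p_s\neq null$ and $p_t\neq null$. The second item of Observation~\ref{lem:LP:Exclusive2} then immediately yields that neither $s$ nor $t$ is eligible for $Seduction(-)$ or $Marriage(-)$ (equivalently, both guards open with the conjunct $p_u=null$, which already fails here). This settles two of the three rules with no computation.

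For $Reset$, whose guard is $p_u\neq null\land r_{up_u}=\FCTpointingto(u,p_u)\land(\PRaba(u)\lor\PRReset(u))$, it suffices to show that $\PRaba(u)$ and $\PRReset(u)$ are both false for $u\in\{s,t\}$; the register conjunct then becomes irrelevant. The preliminary fact I would establish is that $r_{ts}.p=\You$ and $r_{st}.p=\You$ in any correct state: reading off Definition~\ref{def:updatedCS} and Definition~\ref{def:toUpdateCS}, in every listed updated and toUpdate case both $r_{st}$ and $r_{ts}$ have the form $(\You,\cdot)$. Since $p_s=t$ gives $r_{p_s s}=r_{ts}$ and $p_t=s$ gives $r_{p_t t}=r_{st}$, this means $r_{p_u u}.p=\You$ for $u\in\{s,t\}$. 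That alone kills $\PRaba(u)$: its first disjunct requires $r_{p_u u}.p\neq\You$ and its second requires $r_{p_u u}=(\Other,2)$, i.e.\ $r_{p_u u}.p=\Other$, and both contradict $r_{p_u u}.p=\You$.

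It remains to falsify $\PRReset(u)$. The predicate also opens with $r_{p_u u}.p=\You$ (true here) and then demands one of six discrepancy patterns relating $m_u$ and $r_{p_u u}.m$; two of these are guarded by $u<p_u$ and two by $u>p_u$. Since $s<t$, for $u=s$ the $u>p_u$ patterns vanish and for $u=t$ the $u<p_u$ patterns vanish. I would then test the surviving patterns against the admissible values: by Definitions~\ref{def:updatedCS} and~\ref{def:toUpdateCS}, a correct state has $(\alpha,\beta)\in\{(0,0),(0,1),(1,1),(2,1),(2,2)\}$, with $r_{ts}.m\in\{\beta-1,\beta\}$ and $r_{st}.m\in\{\alpha-1,\alpha\}$ according to whether the state is updated or toUpdate. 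Substituting these finitely many cases (with $m_s=\alpha$ and $m_t=\beta$) shows that the admissible pairs $(m_s,r_{ts}.m)$ and $(m_t,r_{st}.m)$ never match a $\PRReset$ pattern, so $\PRReset(u)$ is false and $Reset$ is not activable at $s$ or $t$. The only genuine work is this final enumeration; the simplifications that keep it short are that $s<t$ discards half the disjuncts at each endpoint, and that in a correct state each register $m$-component stays within distance one of its own $m$-variable and in a controlled direction, so the admissible pairs form a short list disjoint from the patterns $\PRReset$ flags. The mild pitfall to watch is the bookkeeping distinguishing the updated variant from the toUpdate variant (which register lags by one step), but both variants land in the same admissible list, so the conclusion is uniform.
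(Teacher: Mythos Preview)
Your proposal is correct and mirrors the paper's own proof: dismiss $Seduction(-)$ and $Marriage(-)$ from $p_s,p_t\neq null$, dismiss $\PRaba$ because both relevant registers have $p$-field $\You$, and falsify $\PRReset$ by enumerating the admissible $(m_u,r_{p_u u}.m)$ pairs at each endpoint and checking them against the six discrepancy patterns. The paper carries out exactly this enumeration, listing $(m_s,r_{ts}.m)\in\{(0,0),(0,1),(1,1),(2,1),(2,2)\}$ for $s$ and the symmetric list for $t$; your description of the case split, including the observation that $s<t$ halves the relevant patterns at each side, is just a slightly more expository rendering of the same argument.
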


\begin{proof}
Neither $s$ nor $t$ are eligible for a $Seduction(-)$ or a $Marriage(-)$ rule in $C$, since $p_s\neq null$ and $p_t\neq null$ in $C$. 
Since $p_s=t$, then $s$ is not eligible for a $x$-$Reset$ if $x\neq t$.  We now study the case of a $t$-$Reset$. $\PRaba(s)$ is False since $r_{ts}.p=You$. 
In $C$, we have: $(m_s, r_{ts}.m)\in \{ (0,0), (0,1), (1,1), (2,1), (2,2) \}$ and $r_{ts}.p=You$ and $s<t$. So, $\PRReset(s)$, does not hold in $C$. Thus, $s$ is not eligible for $Reset$ in $C$.
Since $p_t=s$, then $t$ is not eligible for a $x$-$Reset$ if $x\neq s$.  We now study the case of a $s$-$Reset$. $\PRaba(t)$ is False since $r_{st}.p=You$. 
In $C$, we have: $(m_s, r_{ts}.m)\in \{ (0,0), (1,0), (1,1), (1,2), (2,2) \}$ and $r_{st}.p=You$ and $s<t$. So, $\PRReset(t)$, does not hold in $C$. Thus, $t$ is not eligible for $Reset$ in $C$. \qed
\end{proof}

In fact, our algorithm is designed in such a manner  that, once an edge is in a correct state, it remains in a correct state forever.

\begin{lemma}\label{lem:correctStateLP2}
Let $({\minN},{\maxN})$  be an edge with ${\minN} < {\maxN} $. Let $C$ be a configuration. If $(s,t)$ is in a correct state $(You, \alpha, \beta)$ in $C$ then 
\begin{itemize}
\item $(s,t)$ forever remains in a correct state; 
\item  if neither $s$ nor $t$ is eligible for any rule in $C$, then $({\minN},{\maxN})$ is in the updated correct state  $(You, 2, 2)$ in $C$.
\end{itemize}
\end{lemma}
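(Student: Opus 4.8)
The plan is to prove the two bullets by a case analysis on the five possible values of $(\alpha,\beta)$ admitted by the correct states. Since Lemma~\ref{lem:correctStateLP1} already guarantees that neither $s$ nor $t$ is eligible for $Seduction(-)$, $Marriage(-)$ or $Reset$, the only rules that can fire on $s$ or $t$ are $Write(-)$ and $Increase$. By Observation~\ref{lem:LP:Exclusive2}(\ref{lem:LP:notnull}), the only $Write$ rule that can be activable on $s$ is $Write(t)$ (and symmetrically $Write(s)$ on $t$), since $p_s=t$ and $p_t=s$, and any $Increase$ must be a $t$-$Increase$ on $s$ or an $s$-$Increase$ on $t$. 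So the whole argument reduces to tracking how the pair of registers $(r_{st},r_{ts})$ and the pair $(m_s,m_t)$ evolve under these few rules, and checking that we never leave the list of correct states.

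First I would handle the \emph{updated correct state}, where $(r_{st},r_{ts})=((You,\alpha),(You,\beta))$ and $(\alpha,\beta)\in\{(0,0),(0,1),(1,1),(2,1),(2,2)\}$. Here both registers agree with the local variables, so no $Write$ is activable; only $Increase$ can fire, and I would check for each of the five pairs exactly which $Increase$ guard is satisfied. Using $s<t$, the $Increase$ rule lets $s$ move when $r_{ts}.m$ is one step ahead and $t$ move when $r_{st}.m$ catches up; a direct substitution into the $Increase$ guard shows that $(0,0)\to(0,1)$ via $t$-$Increase$ on $t$\ldots I would be careful here with the asymmetry: the point made in the execution example is that the high-id node increments first to $1$, then the low-id node increments twice $0\to1\to2$, and finally the high-id node increments to $2$. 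Concretely I would verify that from each updated state the unique activable $Increase$ produces a state whose registers are now stale in exactly one coordinate, i.e.\ a \emph{toUpdate} correct state, and that no $Increase$ is activable from $(2,2)$ (both $m$ values maxed and registers agreeing), giving a silent edge.

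Next I would treat the \emph{toUpdate correct state}, where exactly one register lags the local variable by one in its $m$-component. Here no $Increase$ is activable (the lagging register makes the $Increase$ guard read an out-of-date $r_{p_u u}.m$ that blocks it), so the only move is the $Write$ that refreshes the stale register; firing it reproduces the matching state $(You,\alpha,\beta)$ but now with both registers up to date, i.e.\ returns us to the corresponding \emph{updated} correct state. Combining the two halves shows the set of correct states is closed under $\mapsto$, proving the first bullet. For the second bullet, note that every correct state other than the updated $(You,2,2)$ has at least one activable rule ($Write$ in the toUpdate case, $Increase$ in the updated cases $(0,0),(0,1),(1,1),(2,1)$), so if neither $s$ nor $t$ is eligible for any rule the edge must be in $(You,2,2)$; I would also confirm that in $(You,2,2)$ with both registers $(You,2)$ no $Write$ and no $Increase$ guard holds, so this state is genuinely stable.

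The main obstacle I anticipate is the bookkeeping of the $Increase$ guard under the $u<p_u$ versus $u>p_u$ split: the guard treats $s$ and $t$ asymmetrically, and the two consecutive increments on the low-id node mean the transitions are not a simple alternation, so I must check each of the five updated states and each of the four toUpdate states against the exact disjuncts of the $Increase$ guard and verify that the resulting $(r_{st},r_{ts},m_s,m_t)$ still matches one of the listed correct configurations. The rest is a mechanical but somewhat tedious enumeration.
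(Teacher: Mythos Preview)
Your approach is essentially the paper's: use Lemma~\ref{lem:correctStateLP1} to restrict attention to $Write$ and $Increase$, then do a case analysis over the nine correct states (five updated, four toUpdate), checking which relevant rules are activable and that each transition lands in another correct state. The paper carries this out via two explicit tables, one per kind of correct state, and verifies that in every state except updated $(You,2,2)$ exactly one of $s,t$ has an activable relevant rule, which also yields the second bullet.

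One small correction: your appeal to Observation~\ref{lem:LP:Exclusive2}(\ref{lem:LP:notnull}) to conclude that ``the only $Write$ rule that can be activable on $s$ is $Write(t)$'' is a misreading. The observation says that among the $x$-rules for $x\neq p_u$, the only one $u$ can be eligible for is $Write(x)$; it does \emph{not} rule out $Write(x)$ itself. So $s$ may well be activable for $Write(x)$ with $x\neq t$. This does not break your argument, since such a write touches none of $p_s,p_t,m_s,m_t,r_{st},r_{ts}$ and hence leaves the edge state unchanged; but you should state that explicitly (as the paper does) rather than claim those rules are not activable. Relatedly, your closing remark that updated $(You,2,2)$ is ``genuinely stable'' overreaches: $s$ or $t$ may still owe $Write(x)$ moves to other neighbors; what you need (and what the second bullet asserts) is only the one-directional implication that if neither node is eligible for \emph{any} rule then the edge is in updated $(You,2,2)$.
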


\begin{proof}
First, observe that the correct state definition only depends on the value of the following variables/registers $(p_s,p_t,m_s,m_t,r_{st},r_{ts})$.  Only $s$ and $t$ can write in these variables/registers. So, whatever a node $x\notin\{s,t\}$ executes in $C\mapsto C'$, this move cannot change the state of $(s,t)$ in $C'$. 
From Lemma~\ref{lem:correctStateLP1} $s$ and $t$ can only execute $Write$ or $Increase$,  
so the $p$-values will not change in $C \mapsto C'$ and then the state of $(s,t)$  in $C'$ only depends on the value of the following quadruplet $(m_s,m_t,r_{st}.m,r_{ts}.m)$ in $C'$.
Moreover, observe that if $s$ executes an $Increase$, then it is a $t$-$Increase$ since $p_s=t$ in $C$. In the same way, if $t$ executes an $Increase$, then it is a $s$-$Increase$ since $p_t=s$ in $C$. 
Finally, observe that if $s$ or $t$ executes a $Write(x)$ rule, with $x\notin \{s,t\}$, then this move cannot change the state of $(s,t)$ in $C'$. 

Thus, we are now going to perform a case study: we check for all possible correct state in $C$, which rules $s$ (resp. $t$) is eligible for among the rules $t$-$Increase$ and $Write(t)$ (resp. $s$-$Increase$ and $Write(s)$). e call these rules the \emph{relevant} rules since these are the only one that can change the state of $(s,t)$. For all possible transitions $C\mapsto C'$ that contains at least one of these rules, we prove that $(s,t)$ is in a correct state in $C'$. The two following tables present this case study. 
 
The $\alpha$ and $\beta$ values in $C$ are given in column 1. Column 2 gives the values of the quadruplet $(m_s,m_t,r_{st}.m, r_{ts}.m)$, according to the values of $\alpha$ and $\beta$. Columns 3 and 4 give rules $s$ and $t$ are eligible for. 

Observe that at each line, there is at most one node among $s$ and $t$ that is eligible for a relevant rule in $C$. If this node does not perform any rule in the transition starting in $C$, then the state of edge $(s,t)$ remains constant and  the proof is done. Otherwise, we obtain the considered configuration in the tables, called $D$. Thus, in column 5 and 6, we give the state that is reached after $s$ or $t$ performs its rule.   Observe that the last line of table 1 does not contain any value in the last two columns because there is no such a $D$ configuration since neither $s$ nor $t$ is eligible for a relevant rule. 

In the following table, we assume $(s,t)$ is in an updated correct state in $C$: (\emph{toUpdateCS} means toUpdate correct state)\\

\centerline{\small
\begin{tabular}[h]{|c||c|c|c||c|c|}\hline
\multicolumn{4}{|c||}{in $C$} & \multicolumn{2}{c|}{in $D$} \\\hline
\multirow{2}{*}{$(\alpha, \beta)$}{} & \multirow{2}{*}{$(m_s,m_t,r_{st},r_{ts})$}{} & \multicolumn{2}{c||}{relevant rules eligibility} & \multirow{2}{*}{$(m_s,m_t,r_{st},r_{ts})$}{} & \multirow{2}{*}{state of $(s,t)$}{}\\\cline{3-4}
						    &		& for $s$			& for $t$			&&\\\hline\hline
$(0,0)$ & $(0,0,0,0)$ & $\emptyset$ & $s$-Increase & $(0,1,0,0)$ & toUpdateCS $(You, 0,1)$\\\hline
$(0,1)$ & $(0,1,0,1)$ & $t$-Increase & $\emptyset$ & $(1,1,0,1)$ & toUpdateCS $(You, 1,1)$\\\hline
$(1,1)$ & $(1,1,1,1)$ & $t$-Increase & $\emptyset$ & $(2,1,1,1)$ & toUpdateCS $(You, 2,1)$\\\hline
$(2,1)$ & $(2,1,2,1)$ & $\emptyset$ & $s$-Increase & $(2,2,2,1)$ & toUpdateCS $(You, 2,2)$\\\hline
$(2,2)$ & $(2,2,2,2)$ & $\emptyset$ & $\emptyset$ & - & - \\\hline
\end{tabular}
}
~\\

In table 2, we assume $(s,t)$ is in a toUpdate correct state in $C$:  (\emph{updatedCS} means updated correct state)\\

\centerline{\small
\begin{tabular}[h]{|c||c|c|c||c|c|}\hline
\multicolumn{4}{|c||}{in $C$} & \multicolumn{2}{c|}{in $D$} \\\hline
\multirow{2}{*}{$(\alpha, \beta)$}{} & \multirow{2}{*}{$(m_s,m_t,r_{st},r_{ts})$}{} & \multicolumn{2}{c||}{relevant rules eligibility} & \multirow{2}{*}{$(m_s,m_t,r_{st},r_{ts})$}{} & \multirow{2}{*}{state of $(s,t)$}{}\\\cline{3-4}
						    &		& for $s$			& for $t$			&&\\\hline\hline
$(0,1)$ & $(0,1,0,0)$ & $\emptyset$ & $Write(s)$ & $(0,1,0,1)$ & updatedCS $(You, 0,1)$\\\hline
$(1,1)$ & $(1,1,0,1)$ & $Write(t)$ & $\emptyset$ & $(1,1,1,1)$ & updatedCS $(You, 1,1)$\\\hline
$(2,1)$ & $(2,1,1,1)$ & $Write(t)$ & $\emptyset$ & $(2,1,2,1)$ & updatedCS $(You, 2,1)$\\\hline
$(2,2)$ & $(2,2,2,1)$ & $\emptyset$ & $Write(s)$ & $(2,2,2,2)$ & updatedCS $(You, 2,2)$\\\hline
\end{tabular}
}
\end{proof}

\begin{corollary}\label{coro:correctStateLP}
Let $({\minN},{\maxN})$  be an edge with ${\minN} < {\maxN} $. Let $C$ be a configuration. If $(s,t)$ is in the correct state $(You, \alpha, \beta)$ in $C$ then 
neither $s$ nor $t$ is eligible for $Seduction(-)$, $Marriage(-)$ or $Reset$ from $C$.
\end{corollary}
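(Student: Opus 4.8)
The plan is to read the word ``from $C$'' in the statement as the strengthening of Lemma~\ref{lem:correctStateLP1}: whereas that lemma asserts the non-eligibility only in the configuration $C$ itself, the corollary claims it in \emph{every} configuration $C'$ reachable from $C$, i.e. for all $C'$ with $C\mapsto^*C'$. With this reading the corollary is an immediate consequence of combining the two preceding results, and the proof is essentially an invariance argument along the execution.

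First I would invoke the first bullet of Lemma~\ref{lem:correctStateLP2}, which guarantees that once $(s,t)$ is in a correct state in $C$, it forever remains in a correct state. Formally I would argue by induction on the length of an execution prefix $C=C_0\mapsto C_1\mapsto\cdots\mapsto C_k=C'$: the base case is the hypothesis on $C$, and the inductive step applies Lemma~\ref{lem:correctStateLP2} to the transition $C_i\mapsto C_{i+1}$ to conclude that $(s,t)$ is again in a (possibly different) correct state $(You,\alpha',\beta')$ in $C_{i+1}$. This gives that $(s,t)$ is in a correct state in every configuration $C'$ reachable from $C$.

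Next I would apply Lemma~\ref{lem:correctStateLP1} pointwise: for each such reachable $C'$, the hypothesis of that lemma (that $(s,t)$ is in a correct state in $C'$) is met by the previous paragraph, so neither $s$ nor $t$ is eligible for $Seduction(-)$, $Marriage(-)$ or $Reset$ in $C'$. Since $C'$ was an arbitrary configuration reachable from $C$, and in particular $C$ itself is reachable from $C$, this establishes the non-eligibility from $C$, which is exactly the claim.

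I do not expect any real obstacle here, as the corollary is a direct packaging of Lemmas~\ref{lem:correctStateLP1} and~\ref{lem:correctStateLP2}. The only point requiring care is to justify that the correct-state invariant is genuinely preserved by \emph{every} transition, including moves of nodes $x\notin\{s,t\}$; this is already handled inside the proof of Lemma~\ref{lem:correctStateLP2} (the correct-state predicate depends only on $p_s,p_t,m_s,m_t,r_{st},r_{ts}$, which only $s$ and $t$ can modify), so I would simply cite that fact rather than re-derive it.
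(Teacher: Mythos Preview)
Your proposal is correct and matches the paper's intent: the corollary is stated without proof in the paper precisely because it is the direct combination of Lemma~\ref{lem:correctStateLP1} (non-eligibility in a single configuration) with the invariance part of Lemma~\ref{lem:correctStateLP2} (the correct state persists forever), applied pointwise to every reachable configuration. There is nothing to add.
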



%
%
%
%
%
%

\subsection{Correctness Proof }

\label{sec:correctness}

\begin{definition} \label{defMM1}
A configuration is called \emph{stable} if no node can execute a rule in this configuration.
\end{definition}

In particular, a configuration is stable iff all guards are false.

We now show that if our algorithm reaches a stable configuration then $p$-values define a maximal matching. The matching built  by the algorithm is $\mathcal M=\{ (u,v) \in E : p_u=v \land p_v=u\}$.


\begin{lemma}\label{lem:correct:1}
Let $u$ be a node. In any stable configuration \\
\centerline{$\forall v\in N(u) : r_{uv} =\FCTpointingto(u,v)$}
\end{lemma}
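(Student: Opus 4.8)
The plan is to read the conclusion directly off the guard of the $Write$ rule, which is designed precisely to restore the invariant $r_{uv}=\FCTpointingto(u,v)$. Recall that for each neighbor $a\in N(u)$ the rule $Write(a)$ has guard $r_{ua}\neq\FCTpointingto(u,a)$ and action $r_{ua}:=\FCTpointingto(u,a)$. So $Write(a)$ is activable in a configuration if and only if the register $r_{ua}$ disagrees with the value $\FCTpointingto(u,a)$ computed from $u$'s local variables $p_u$ and $m_u$.

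First I would invoke the definition of a stable configuration (Definition~\ref{defMM1}): no node can execute any rule, equivalently every guard is false. In particular, for the fixed node $u$ and an arbitrary neighbor $v\in N(u)$, the guard of $Write(v)$ is false. Since that guard is literally the predicate $r_{uv}\neq\FCTpointingto(u,v)$, its falsity is exactly the equality $r_{uv}=\FCTpointingto(u,v)$. As $v$ was an arbitrary neighbor and there is one $Write$ rule per neighbor, this yields the conclusion $\forall v\in N(u): r_{uv}=\FCTpointingto(u,v)$ in the stable configuration.

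There is essentially no obstacle here: the statement is an immediate consequence of negating a single guard. The only point worth being explicit about is that $\FCTpointingto(u,v)$ depends solely on $u$'s own variables $p_u,m_u$ (and the argument $v$), so the guard of $Write(v)$ references only quantities local to $u$ together with its own register $r_{uv}$; thus nothing external can re-enable it, and stability of the configuration genuinely forces the equality for every neighbor simultaneously.
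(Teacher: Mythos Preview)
Your proof is correct and follows essentially the same argument as the paper: both observe that if $r_{uv}\neq\FCTpointingto(u,v)$ for some $v\in N(u)$, then the guard of $Write(v)$ is true and the configuration is not stable. The paper states this in one line as a contrapositive; your version unwinds the same reasoning in slightly more detail.
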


\begin{proof}
Let $C$ be a stable configuration. If $\exists v\in N(u) : r_{uv} \neq \FCTpointingto(u,v)$ in $C$, then $u$ is eligible for a $Write(v)$ rule and $C$ is not stable. 
\end{proof}

\begin{lemma}
\label{lem:correct:2}
Let $u$ be a node. In any stable configuration:\\
\centerline{$p_{u}=null \Rightarrow \forall v\in N(u) : p_{v} \not\in \{null,u\}$}
\end{lemma}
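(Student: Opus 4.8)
The plan is to argue by contradiction. Assume $C$ is a stable configuration, $p_u=null$, and that some neighbor $v\in N(u)$ violates the conclusion, meaning either $p_v=null$ or $p_v=u$. In each case I will exhibit a guard that is true in $C$, contradicting stability. Throughout I will lean on Lemma~\ref{lem:correct:1}: in a stable configuration every register equals its correct value. In particular, since $p_u=null$ we get $r_{uv}=\FCTpointingto(u,v)=(\Idle,0)$, while $r_{vu}$ equals $(\Idle,0)$ if $p_v=null$ and $(\You,m_v)$ if $p_v=u$.

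First I would treat the case $p_v=null$. Then both $r_{uv}$ and $r_{vu}$ read $(\Idle,0)$, so both nodes are single. Whichever of $u,v$ carries the smaller identifier can fire a $Seduction$ rule toward the other: if $u<v$ the guard of $Seduction(v)$ holds for $u$ (the conjuncts $p_u=null$, $r_{uv}=\FCTpointingto(u,v)$, $r_{vu}=(\Idle,0)$ and $u<v$ are all satisfied), and symmetrically $Seduction(u)$ holds for $v$ when $v<u$. Either way $C$ is not stable, so $p_v\neq null$.

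The more delicate case is $p_v=u$, where $v$ points to $u$ but $u$ points to $null$; here $r_{vu}=(\You,m_v)$ while $r_{uv}=(\Idle,0)$. I would split according to the identifier order and the value of $m_v$. If $u>v$ and $m_v=0$, then $r_{vu}=(\You,0)$ and the guard of $Marriage(v)$ becomes true for $u$, giving the contradiction. In every remaining sub-case (namely $u<v$, or $u>v$ with $m_v\neq 0$) I would instead show $v$ is activable for $Reset$ via $\PRaba(v)$: since $p_v=u$ we have $r_{p_v v}.p=r_{uv}.p=\Idle\neq\You$, and the clause $(v>p_v\lor m_v\neq 0)$ holds because $v>u=p_v$ when $u<v$ and because $m_v\neq 0$ otherwise. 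Hence the first disjunct of $\PRaba(v)$ is true, and together with $p_v\neq null$ and register consistency the $Reset$ guard is satisfied for $v$.

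The main obstacle I anticipate is exactly this last case analysis: when $p_v=u$, the node that turns out to be activable depends on both the identifier order and on $m_v$, so one must check that $\PRaba$ — specifically its disjunction $(v>p_v\lor m_v\neq 0)$ — covers precisely the configurations not already handled by $Marriage$. Verifying that no sub-case slips through (confirming that $Marriage(v)$ requires both $m_v=0$ and $u>v$, while $\PRaba(v)$ catches both $u<v$ and the pair $u>v,\,m_v\neq 0$) is the crux; the rest is routine guard-checking supported by Lemma~\ref{lem:correct:1}.
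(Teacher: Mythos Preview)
Your proposal is correct and follows essentially the same approach as the paper: both use Lemma~\ref{lem:correct:1} to determine the register contents, handle $p_v=null$ via an enabled $Seduction$, and for $p_v=u$ split into the single sub-case where $Marriage(v)$ is enabled at $u$ (namely $u>v$ and $m_v=0$) versus the remaining sub-cases where $\PRaba(v)$ enables $Reset$ at $v$. The only difference is cosmetic ordering of the case split.
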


\begin{proof}
Let $C$ be a stable configuration where $p_u=null$. Consider a neighbor $v$. After Lemma~\ref{lem:correct:1}, $r_{uv} = (Idle,0)$.
 
If $p_{v}= null$ then we can assume without loss of generality that $u<v$. Then, according to Lemma~\ref{lem:correct:1}, $r_{vu} = (Idle,0)$. Thus $u$ is eligible for a $Seduction(v)$ rule and $C$ is not stable. 
 
If $p_{v}= u$ and $u<v$ then $\PRaba(v)$ holds since $r_{uv} \neq You$ and then $v$ is eligible for a $Reset$ rule and $C$ is not stable.
Finally, if $p_{v}= u$ and $v<u$ then, according to Lemma \ref{lem:correct:1}, $r_{vu}=(You,m_v)$. Then either $m_v=0$ and so $u$ is eligible for a $Marriage(v)$ rule, or $m_v\neq 0$ and so $\PRaba(v)$ holds and then $v$ is eligible for a $Reset$ rule. In both cases, $C$ is not stable.
\end{proof}
%

\begin{lemma}\label{obs:JC:edge}
Let $(s,t)$ be an edge with $s<t$. In any stable configuration, if $p_{s}=t$ and $p_{t}=s$ then edge $(s,t)$ is in the updated correct state $(You,2,2)$.
\end{lemma}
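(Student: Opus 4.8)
The plan is to show that in a stable configuration where $p_s=t$ and $p_t=s$, the edge $(s,t)$ must be in a correct state, and then invoke Lemma~\ref{lem:correctStateLP2} to pin down exactly which correct state a stable configuration forces. The starting point is Lemma~\ref{lem:correct:1}: since the configuration is stable, no $Write$ rule is activable, so all registers are up to date, i.e.\ $r_{st}=\FCTpointingto(s,t)$ and $r_{ts}=\FCTpointingto(t,s)$. Because $p_s=t$ and $p_t=s$, this gives $r_{st}=(You,m_s)$ and $r_{ts}=(You,m_t)$. Thus the edge is in state $(You,m_s,m_t)$ with $r_{st}=(You,m_s)$ and $r_{ts}=(You,m_t)$, which is exactly the ``registers updated'' situation.

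Next I would use stability to constrain the pair $(m_s,m_t)$. Since no rule is activable, in particular neither $s$ nor $t$ is eligible for $Reset$ nor for $Increase$. I would read off the $\PRReset$ and $\PRaba$ predicates with $r_{ts}.p=r_{st}.p=You$, $r_{ts}.m=m_t$, $r_{st}.m=m_s$, and $s<t$, to eliminate all pairs $(m_s,m_t)\in\{0,1,2\}^2$ that would make $\PRReset(s)$, $\PRReset(t)$, $\PRaba(s)$ or $\PRaba(t)$ true. Then I would apply the $Increase$ guard: with both registers equal to $You$ and up to date, the guard for $s$ (an $s<t$ node with $m_s\in\{0,1\}$) fires unless $m_t$ is incompatible, and symmetrically for $t$; eliminating these activable cases removes the remaining non-$(2,2)$ pairs. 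The surviving pair is $(m_s,m_t)=(2,2)$, which together with the register values gives precisely the updated correct state $(You,2,2)$ of Definition~\ref{def:updatedCS}.

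A cleaner alternative, which I would prefer to avoid re-deriving the guards by hand, is to first argue that the edge is in \emph{some} correct state and then appeal to the second bullet of Lemma~\ref{lem:correctStateLP2}. To see it is in a correct state: the register-updated condition holds by Lemma~\ref{lem:correct:1}, so it suffices to show $(m_s,m_t)$ lies in the updated-correct-state list $\{(0,0),(0,1),(1,1),(2,1),(2,2)\}$. This still requires ruling out the ``bad'' pairs via the $Reset$ guards, but once any correct state is established, Lemma~\ref{lem:correctStateLP2} immediately yields that a configuration in which neither $s$ nor $t$ is eligible for any rule must be the updated correct state $(You,2,2)$, finishing the proof without a fresh $Increase$-guard analysis.

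The main obstacle is the case analysis on $(m_s,m_t)$: the $\PRReset$ predicate has six disjuncts whose activation depends on the ordering $s<t$, and one must check each candidate pair against both nodes' $Reset$ and $Increase$ guards to confirm that every pair other than $(2,2)$ leaves some rule activable (contradicting stability). Care is needed with the asymmetry between $s$ and $t$ (e.g.\ $(0,1)$ is a correct state but still leaves an $Increase$ activable, whereas $(1,0)$ triggers $\PRReset$), so I would organize the elimination by first discarding register-inconsistent pairs, then $\PRReset$-triggering pairs, then $Increase$-activable pairs, and verify $(2,2)$ is the unique stable survivor.
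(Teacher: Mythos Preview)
Your proposal is correct, and the ``cleaner alternative'' you describe is precisely the route the paper takes: it first uses Lemma~\ref{lem:correct:1} to force $r_{st}=(You,m_s)$ and $r_{ts}=(You,m_t)$, then argues by contradiction that the edge must be in a correct state by checking that each of the four remaining pairs $(m_s,m_t)\in\{(2,0),(0,2),(1,0),(1,2)\}$ makes $\PRReset(s)$ true (hence $s$ is eligible for $Reset$), and finally invokes the second bullet of Lemma~\ref{lem:correctStateLP2} to conclude $(You,2,2)$. Your first, more laborious plan (also eliminating via the $Increase$ guard) would work too, but the paper avoids that extra case analysis exactly as you anticipated.
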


\begin{proof}
Consider the state $(You,\alpha,\beta)$ of edge $(s,t)$  in a stable configuration $C$. Observe that if an edge $(s,t)$ is in a correct state, then edge $(s,t)$ is in the updated correct state $(You,2,2)$  from Lemma~\ref{lem:correctStateLP2}, point 2.

We now prove by contradiction that this is the only possible case. Assume that the edge $(s,t)$ is not in a correct state. First observe that from Lemma~\ref{lem:correct:1}, we have $r_{st} =\FCTpointingto(s,t)$, 
and node ${s}$ (resp.  ${t}$) is not eligible for a $Write({t})$ (resp. $Write({s})$) rule.  This implies that if $(s,t)$ is in the state $(You,\alpha,\beta)$, then $r_{st}=(You,\alpha)$ and $r_{ts}=(You,\beta)$.
Thus, according to Definition~\ref{def:updatedCS}, the only remaining possibilities for $(\alpha,\beta)$ are $(2,0)$, $(0,2)$, $(1,0)$ and $(1,2)$. In all of these cases, $s$ is activable for a $Reset$ rule which contradicts the fact that $C$ is a stable configuration.

%
%
\end{proof}

\begin{lemma}
\label{lem:correct:3}
Let $(u,v)$ be an edge. In any stable configuration,  $p_{u}=v$ if and only if  $p_{v}=u$.
\end{lemma}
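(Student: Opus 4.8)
The plan is to prove the biconditional $p_u = v \iff p_v = u$ in any stable configuration by proving each implication separately, and by symmetry it suffices to handle one direction carefully while noting the other is identical up to swapping roles. Without loss of generality I would assume $u < v$ (the case $v < u$ being symmetric after relabelling), so that the edge, if both point at each other, falls under the ordered-edge analysis already developed. The key tools are Lemma~\ref{lem:correct:1} (in a stable configuration every register equals $\FCTpointingto$ of its owner) and the abandonment/reset predicates $\PRaba$ and $\PRReset$, whose falsity is forced by stability.

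First I would prove the forward direction: assume $p_u = v$ and derive $p_v = u$. Since $p_u = v \neq null$, stability rules out $u$ being eligible for $Reset$, so in particular $\PRaba(u)$ is false. By Lemma~\ref{lem:correct:1} applied to $v$, the register $r_{vu}$ equals $\FCTpointingto(v,u)$, so its $p$-component reveals exactly what $p_v$ is relative to $u$: it is $You$ iff $p_v = u$, $Idle$ iff $p_v = null$, and $Other$ iff $p_v$ points elsewhere. I would then show that if $p_v \neq u$ then one of the guards fires. If $r_{vu}.p \neq You$ (i.e. $p_v \neq u$), inspect $\PRaba(u) \equiv [p_u \neq null \land (r_{p_u u}.p \neq You \land (u > p_u \lor m_u \neq 0)) \lor \dots]$: with $p_u = v$ and $r_{vu}.p \neq You$, the clause $(u > v \lor m_u \neq 0)$ must be examined. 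In the subcase $u > v$ the abandonment predicate holds directly; in the subcase $u < v$ I would use that $u$ (the smaller node) is the seducer, so if $m_u \neq 0$ then $\PRaba(u)$ holds, and if $m_u = 0$ with $p_u = v$ and $r_{vu} = (Idle,0)$ then $v$ itself would be eligible for a rule — either $v$ can execute $Marriage(u)$ (if $r_{uv}=(You,0)$ and $v>u$) forcing non-stability, or one of the registers is stale contradicting Lemma~\ref{lem:correct:1}. The goal of this bookkeeping is to exhibit, in every configuration where $p_u = v$ but $p_v \neq u$, an activable rule at $u$ or $v$, contradicting stability.

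By the symmetry of the edge relation, the reverse implication $p_v = u \Rightarrow p_u = v$ is obtained by interchanging the roles of $u$ and $v$ in the argument above (the ordering assumption $u<v$ is only a labelling convenience and the same case split applies with the inequalities reversed). Combining the two implications yields the biconditional.

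The main obstacle I expect is the careful case analysis on the relative identifiers and the exact value of $m_u$ when $p_u = v$ but the partner does not reciprocate: the algorithm deliberately treats the smaller (seducing) and larger (responding) endpoints asymmetrically, so $\PRaba$ fires for different reasons depending on whether $u<v$ or $u>v$, and one must check that in each branch either $\PRaba(u)$ holds, or $v$ is eligible for $Seduction$/$Marriage$/$Reset$, or Lemma~\ref{lem:correct:1} is violated. Handling the borderline case $m_u = 0$ with an unreciprocated pointer — where no reset guard at $u$ fires but instead the neighbor $v$ must be shown activable — is the delicate point, and it is exactly where the structure of the $Seduction$ and $Marriage$ guards (together with stability of $v$) must be invoked rather than the reset predicate alone.
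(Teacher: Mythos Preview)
Your proposal has a genuine gap in the ``borderline'' case you yourself flag as delicate. Concretely, take $u<v$, $p_u=v$, $m_u=0$, and suppose $p_v=w$ for some $w\in N(v)\setminus\{u\}$ with $m_v\in\{0,1\}$. By Lemma~\ref{lem:correct:1} we then have $r_{vu}=(Other,m_v)$ with $m_v\neq 2$, so neither clause of $\PRaba(u)$ fires (the first needs $u>v$ or $m_u\neq 0$; the second needs $r_{vu}=(Other,2)$), and $\PRReset(u)$ is vacuous since $r_{vu}.p\neq You$. Thus $u$ is not activable. Your plan is then to show $v$ is activable, but whether $v$ can execute $Reset$ or $Increase$ depends entirely on $r_{wv}$, i.e.\ on the state of the third node $w$, about which you have assumed nothing. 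Your writeup only treats the subcase $r_{vu}=(Idle,0)$ (i.e.\ $p_v=null$), where indeed $v$ is eligible for $Marriage(u)$; the subcase $r_{vu}.p=Other$ with $m_v\in\{0,1\}$ is simply not addressed, and no purely local argument at $u,v$ can close it.

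The paper's proof handles exactly this case by a global argument: it first uses Lemma~\ref{lem:correct:2} to rule out $p_v=null$, then shows (via Lemma~\ref{obs:JC:edge}) that $p_w\neq v$, and then iterates the same analysis on the edge $(v,w)$. This produces a strictly increasing chain of identifiers $u<v<v_1<v_2<\cdots$ with $p_{v_i}=v_{i+1}$, which by finiteness of the graph must close into a cycle, contradicting the strict ordering. This chain/cycle step is the missing idea in your proposal; without it the case analysis cannot be completed.
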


\begin{proof}

Assume, by contradiction, that $p_u=v$ and $p_v\neq u$. By Lemma~\ref{lem:correct:2}, $p_v\neq null$, thus $\exists v_1\in N(v) : p_v=v_1$ with $v_1 \neq u$. 
 
If $u>v$, after Lemma~\ref{lem:correct:1}, $r_{vu}.p = Other$ (\emph{i.e.}, $\neq You$). So the predicate $\PRaba(u)$ holds and node $u$ is eligible for a $Reset$ rule. Thus the configuration is not stable, which is impossible.
If $u<v$ and $m_v=2$ then Lemma~\ref{lem:correct:1} implies that $r_{vu} = (Other,2)$. Then the predicate $\PRaba(u)$ holds and $u$ is eligible for a $Reset$ rule. Thus the configuration is not stable neither. 

Thus $(p_u=v\land p_v\neq u) \Rightarrow (u<v\land m_v \in \{0,1\} \land \exists v_1\in N(v)\setminus\{u\} : p_v=v_1)$.

Suppose by contradiction $p_{v_1}=v$. From Lemma~\ref{obs:JC:edge}, the edge $(v_{1},v)$ is in updated correct state $(You,2,2)$. This implies that $m_{v}=2$ which contradicts the fact $m_v \in \{0,1\}$. So, $p_{v_1}\neq v$. Using the same argument for edge $(u,v)$, we can deduce: $v < v_1 \land m_{v_1}\in \{0,1\} \land \exists v_2\in N(v_1)\setminus\{v\} : p_{v_1}=v_2)$.
Now we can continue the construction in the same way.  We construct a path $(u, v, v_{1},  v_{2}, \ldots,v_r, \ldots)$  where $\forall i \geq 1: p_{v_i}=v_{i+1} \land v_{i} < v_{i+1} \land m_{v_i} \in \{0,1\}$. Since the number of nodes is finite, there exists a node $v_{y_1}$ that appears at least twice in the path. Thus, this path contains the cycle $(v_{y_1},v_{y_2}.\ldots,v_r,v_{y_1})$ and by construction, we have $v_r<v_{y_1}$  and $v_{y_1} < v_r$. This gives the contradiction.
\end{proof}

From these Lemmas, we deduce:

\begin{theorem}\label{th:correctness}
In any stable configuration, the set of edges $\mathcal{M}= \{ (u,v) \in E : p_u=v \land p_v=u\}$ is a maximal matching.
\end{theorem}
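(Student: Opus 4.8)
The plan is to show that $\mathcal{M}$ is a matching (each node belongs to at most one pair) and that it is maximal (no edge can be added). Both properties should follow almost immediately from the lemmas already established, so the theorem is essentially a bookkeeping assembly of Lemmas~\ref{lem:correct:2} and~\ref{lem:correct:3}. Let $C$ be an arbitrary stable configuration.

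First I would verify that $\mathcal{M}$ is a matching. Suppose a node $u$ belonged to two pairs, say $(u,v)\in\mathcal{M}$ and $(u,w)\in\mathcal{M}$ with $v\neq w$. By definition of $\mathcal{M}$ this forces $p_u=v$ and $p_u=w$, contradicting the fact that $p_u$ is a single value in $\Voisin(u)\cup\{null\}$. Hence every node is in at most one pair, so the pairs of $\mathcal{M}$ are vertex-disjoint and $\mathcal{M}$ is indeed a matching. (One should also note that the edges of $\mathcal{M}$ are genuine edges of $E$, which is immediate since $p_u\in\Voisin(u)$.)

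Next I would establish maximality, which is the more substantive half. Maximality means that for every edge $(u,v)\in E$, at least one endpoint is already matched in $\mathcal{M}$; equivalently, no two adjacent nodes are both single. So suppose for contradiction that $(u,v)\in E$ with neither $u$ nor $v$ matched in $\mathcal{M}$. If $p_u=null$, then Lemma~\ref{lem:correct:2} applied to $u$ gives $p_v\notin\{null,u\}$ for its neighbor $v$; in particular $p_v\neq null$, so $v$ points to some neighbor. By Lemma~\ref{lem:correct:3} (applied to the edge $(v,p_v)$), $p_v=p_v$'s pointer is reciprocated, i.e.\ $p_{p_v}=v$, so $v$ is in fact matched in $\mathcal{M}$ — contradicting the assumption that $v$ is unmatched. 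The symmetric case $p_v=null$ is handled identically by exchanging the roles of $u$ and $v$. This closes the contradiction: no edge can have both endpoints single, so $\mathcal{M}$ is maximal.

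Finally, I would observe that the matching property itself uses Lemma~\ref{lem:correct:3} to reconcile the two possible definitions: the specification demands $p_u\neq null\Rightarrow p_{p_u}=u$, and Lemma~\ref{lem:correct:3} is exactly the statement that pointing is reciprocal in a stable configuration, so every node with $p_u\neq null$ is genuinely married and belongs to $\mathcal{M}$. I do not expect any real obstacle here: all the work has been done in the preceding lemmas, and the theorem is the clean combination of Lemma~\ref{lem:correct:2} (which supplies maximality by forbidding adjacent single nodes) and Lemma~\ref{lem:correct:3} (which supplies the matching structure by forcing reciprocity). The only point requiring a touch of care is making the logical direction of the maximality argument explicit — deriving from ``$u$ single'' via Lemma~\ref{lem:correct:2} that every neighbor is matched — rather than merely restating the specification.
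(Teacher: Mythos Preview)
Your proposal is correct and follows essentially the same approach as the paper: $\mathcal{M}$ is a matching by single-valuedness of $p$, and maximality comes from combining Lemma~\ref{lem:correct:2} (no two adjacent nulls) with Lemma~\ref{lem:correct:3} (reciprocity). One organizational remark: the observation in your final paragraph that $p_u\neq null\Rightarrow u\in\mathcal{M}$ (via Lemma~\ref{lem:correct:3}) is precisely what makes your case split in the maximality argument exhaustive, so it would read more cleanly placed \emph{before} the contradiction rather than after.
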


\begin{proof}
Let $C$ be a stable configuration.
By definition, the constructed set of edge $\mathcal{M}$ is a matching. 
From Lemma  \ref{lem:correct:3}, any  node $u$ such that $ p_u=v$ is in $\mathcal{M}$.
Lemma \ref{lem:correct:2}  implies that in $C$ no edge $(u,v)$  is such that  $ p_u=\bot$ and $ p_u=\bot$. So, $\mathcal{M}$ is maximal.

\end{proof}

\subsection{Convergence Proof}
\label{sec:convergence}

 The three following lemmas put in relation the number of moves of all rules except the $Write$ rule.

\subsection{Relationship between the all rules except the $Write$ rule}
  

\begin{lemma}\label{lem:JC:entre2mar}
Let $(\minN,\maxN)$ be an edge with $\minN<\maxN$. Let $\Exec$ be an execution containing two transitions $C_0 \mapsto C_1\mapsto^*D_0\mapsto D_1$ during which $\maxN$ executes a $Marriage(\minN)$ rule. 
Then  $\minN$ executes a $Seduction(\maxN)$ rule between $C_1$ and $D_0$.
\end{lemma}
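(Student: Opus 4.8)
The plan is to follow the register $r_{st}$ and the pointer $p_t$ through the window $(C_1,D_0)$, and to show that $r_{st}$ cannot stay equal to $(You,0)$ throughout it. The only way to restore the value $(You,0)$ that the second $Marriage(s)$ requires is a $Write(t)$ executed by $s$ while $p_s=t\land m_s=0$, and I will argue that re-entering the state $p_s=t$ inside the window is only possible through a $Seduction(t)$.

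First I would read off the guards of the two $Marriage(s)$ rules. Since $t>s$, each $Marriage(s)$ requires, in the configuration immediately preceding it, $p_t=null$, $r_{ts}=(Idle,0)$ and $r_{st}=(You,0)$, and it sets $(p_t,m_t)=(s,0)$. Hence $r_{st}=(You,0)$ in $C_0$, in $D_0$, and (as $Marriage$ is a move of $t$ that does not write $r_{st}$) also in $C_1$; moreover $p_t=s$ in $C_1$ but $p_t=null$ in $D_0$. As the only rule that sets $p_t$ to $null$ is $Reset$, node $t$ performs at least one $s$-$Reset$ strictly between $C_1$ and $D_0$.

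The crux is to exhibit a configuration $M\in(C_1,D_0)$ with $r_{st}\neq(You,0)$, and I would obtain it from that $s$-$Reset$. Specializing $\PRaba(t)$ and $\PRReset(t)$ to $p_t=s$ and $t>s$, the $\PRaba(t)$ branch forces $r_{st}.p\neq You$ at the reset, while every surviving $\PRReset(t)$ branch forces either $r_{st}.m\neq0$ (together with $r_{st}.p=You$) at the reset, or $m_t=2$. In the first two situations the reset configuration itself has $r_{st}\neq(You,0)$. In the last one, $t$ reached $m_t=2$ from its value $m_t=0$ in $C_1$ by two $s$-$Increase$ moves after $C_1$, and the second of them is enabled only when $t$ reads $r_{st}.m=2$; thus $r_{st}=(You,2)\neq(You,0)$ at that earlier configuration. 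Either way such an $M$ exists in the window.

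Finally I would argue by contradiction. Suppose $s$ performs no $Seduction(t)$ in $(C_1,D_0)$. Because $r_{st}\neq(You,0)$ at $M$ but $r_{st}=(You,0)$ at $D_0$, there is a last transition $G_0\mapsto G_1$ before $D_0$ at which $r_{st}$ passes from a value $\neq(You,0)$ to $(You,0)$; it is a $Write(t)$ by $s$, so $\FCTpointingto(s,t)=(You,0)$, i.e. $p_s=t\land m_s=0$ in $G_0$, and $G_0$ lies strictly after $C_1$. Since $Seduction(t)$ is the only rule able to set $p_s=t$ (because $s<t$ rules out $Marriage(t)$) and none occurs in the window, $p_s$ cannot leave and re-enter the value $t$, so $p_s=t$ throughout $[C_1,G_0]$; then no $Reset$, $Seduction$ or $Marriage$ move fires at $s$ there, so $m_s$ is non-decreasing and, being $0$ at $G_0$, equals $0$ throughout $[C_1,G_0]$. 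Consequently every $Write(t)$ by $s$ in this interval writes $(You,0)$, and since $r_{st}=(You,0)$ already in $C_1$ and only $s$ writes $r_{st}$, the register equals $(You,0)$ in every configuration of $[C_1,G_0]$, contradicting the definition of $G_0$. Hence $s$ executes $Seduction(t)$ in $(C_1,D_0)$. The main obstacle throughout is register staleness: a read may use an outdated value, so one cannot infer $p_s$ from $r_{st}$ directly; the proof must couple the $\PRaba(t)/\PRReset(t)$ guard conditions (to guarantee a genuine change of $r_{st}$ inside the window) with the monotonicity of $m_s$ while $p_s=t$ and with the uniqueness of $Seduction(t)$ as the entry rule to $p_s=t$, and making the $m_t=2$ sub-case of $\PRReset(t)$ also yield a configuration with $r_{st}\neq(You,0)$ is the delicate point.
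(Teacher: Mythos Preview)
Your argument follows essentially the same route as the paper's proof: both locate a $Reset$ executed by $t$ while $p_t=s$, analyse the $\PRaba(t)/\PRReset(t)$ guard to exhibit a configuration in the window where $r_{st}\neq(You,0)$ (with the sub-case $m_t=2$ handled by tracing back to the second $s$-$Increase$ and its guard $r_{st}=(You,2)$), and then argue that restoring $(p_s,m_s)=(t,0)$ after that point forces a $Seduction(t)$, since this is the only rule that can set $p_s$ to $t$ when $s<t$.

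Two small technical slips are worth noting. First, under the distributed daemon $s$ may move concurrently with $t$ in $C_0\mapsto C_1$, so your justification that $r_{st}=(You,0)$ in $C_1$ ``as $Marriage$ is a move of $t$ that does not write $r_{st}$'' is incomplete as stated; the claim does become valid \emph{inside} your contradiction hypothesis, since there you have established $(p_s,m_s)=(t,0)$ at $C_1$, which forces any concurrent move of $s$ in $C_0\mapsto C_1$ to be a $Write(x)$ (anything else would alter $(p_s,m_s)$), and a concurrent $Write(t)$ would already write $(You,0)$. Second, your $m_t=2$ sub-case uses that between $C_1$ and the $Reset$ the only way $m_t$ can rise is by $s$-$Increase$ moves; this relies on $p_t=s$ throughout that interval, which is only guaranteed for the \emph{first} $Reset$ of $t$ after $C_1$, so you should specify that (the paper does exactly this).
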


\begin{proof}
From the $Marriage(\minN)$ rule, we have $r_{\minN\maxN} =(You,0)$ in $C_0$ and $D_0$. Moreover, according to Lemma~\ref{lem:u:increase}, $\maxN$ executes a $Reset$ rule between $C_1$ and $D_0$. 
Let $C_4\mapsto C_5$ be the transition where it does for the first time. Then $p_\maxN=\minN$ from $C_1$ to $C_4$.  

In the first step of the proof, we will prove that there exists two configurations $\gamma$ and $\gamma'$  between $C_0$ and $D_0$ and with $\gamma \mapsto^+ \gamma'$, such that  $(p_{\minN},m_{\minN})\neq(t,0)$ in $\gamma$ and $(p_{\minN},m_{\minN})=(t,0)$ in $\gamma'$. Then, we will prove that $s$ must execute a $Seduction(t)$ rule between $\gamma$ and $\gamma'$.
For the first step of the proof, we study two cases according to the value  of $r_{\minN\maxN}$  in $C_4$.

First, assume that $r_{\minN\maxN} \neq (You,0)$ in $C_{4}$.  So, there exists two transitions $C_2 \mapsto C_{3}$ and $C_6\mapsto C_7$, with $C_0 \mapsto^* C_2 \mapsto^+ C_4$ and 
$C_4 \mapsto^* C_6 \mapsto^+ D_0$ such that: (i) in $C_2 \mapsto C_3$, $\minN$ executes $Write(\maxN)$ to set $r_{\minN\maxN}$ to a couple $\neq (You,0)$ and so $(p_\minN,m_\minN)\neq (\maxN,0)$ in $C_2$; and (ii) in $C_6 \mapsto C_7$, $\minN$ executes $Write(\maxN)$ to set $r_{\minN\maxN}$ to $(You,0)$ and so $(p_\minN,m_\minN)=(\maxN,0)$ in $C_6$. We then have $\gamma=C_2$ and $\gamma'=C_6$. 


Second, if $r_{\minN\maxN} = (You,0)$ in $C_4$, then $\PRaba(t)$ is false and according to the $Reset$ rule and in particular to the $\PRReset(t)$ predicate, $m_{\maxN}= 2$ in $C_4$ (it is the only possible case as $s<t$ and $r_{st}.m=0$). Since $m_t\neq 2$ in $C_1$, there exists a transition $C_2\mapsto C_3$ between $C_1$ and $C_4$ where $\maxN$ executes an $Increase$ rule in order to write 2 in its $m$-variable. Since $p_\maxN=\minN$ from $C_1$ to $C_4$, then $p_t=s$ in $C_2$ and then, according to the $Increase$ rule, in $C_2$ $r_{\minN\maxN}= (You, 2)$.
Since $r_{\minN\maxN} =(You,0)$ in $C_0$ and $C_4$, then there exists two transitions $B_0 \mapsto B_1$ and $B_2\mapsto B_3$, with $C_0 \mapsto^* B_0 \mapsto^+ C_2$ and $C_2 \mapsto^* B_1 \mapsto^+ C_4$ such that: (i) in $B_0 \mapsto B_1$, $\minN$ executes $Write(\maxN)$ to set $r_{\minN\maxN}$ to $(You, 2)$ and so $(p_\minN,m_\minN)= (\maxN,2)$ in $B_0$; and (ii) in $B_2\mapsto B_3$, $\minN$ executes $Write(\maxN)$ to set $r_{\minN\maxN}$ to $(You,0)$ and so $(p_\minN,m_\minN)=(t,0)$ in $B_2$.  We then have $\gamma=B_0$ and $\gamma'=B_2$.

We now prove the second step of the proof, that is $s$ must execute a $Seduction(t)$ rule between $\gamma$ and $\gamma'$.
If $p_s\neq t$ in $\gamma$, and since $s<t$, then $s$ must execute a $Seduction(t)$ rule between $\gamma$ and $\gamma'$ in order to set $t$ in its $p$-variable. Otherwise, we have $p_s=t \land m_s\neq 0$ in $\gamma$. Let us assume that $s$ does not perform any $Seduction(t)$ rule between $\gamma$ and $\gamma'$. Thus, the only two rules to write 0 in its $m$-variable are $Marriage(-)$ and $Reset$. Since $s<t$, $s$ cannot execute a $Marriage(t)$ rule, thus after writing 0 in its $m$ variable, $p_s\neq t$ and we go back to the case 1, leading to the conclusion that $s$ must perform a $Seduction(t)$ rule in order to write $t$ in its $p$-variable.  Finally, in any cases, $s$ must execute a $Seduction(t)$ rule between $\gamma$ and $\gamma'$.
\end{proof}

\begin{lemma}\label{lem:u:BetweenTwoResets}
Let $u$ be a node. Let $\Exec$ be an execution where $u$ executes at least two $Reset$ moves. Let  $C_0 \mapsto C_1\mapsto^*C_2 \mapsto C_3$ be two transitions corresponding to two consecutive $Reset$ rule  executed by $u$.  Then ${u}$ executes a rule in $\{Seduction(-),Mariage(-)\}$ once  between $C_1$ and $C_2$.
\end{lemma}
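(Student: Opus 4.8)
The plan is to track the single variable $p_u$ across the two consecutive $Reset$ moves, exploiting the fact that in the link-register model only $u$ can write its own local variables $p_u$ and $m_u$. First I would record the two boundary facts forced by the $Reset$ rule. Its action sets $(p_u,m_u):=(null,0)$, so $p_u=null$ in $C_1$; its guard requires $p_u\neq null$, so $p_u\neq null$ in $C_2$. Hence $p_u$ passes from $null$ to a non-$null$ value at some transition strictly between $C_1$ and $C_2$, and since the two $Reset$ moves are \emph{consecutive}, $u$ executes no $Reset$ in this window.

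Next I would determine which of $u$'s rules can produce this change. Because no node may modify another node's variables, $p_u$ changes only when $u$ itself moves, so it suffices to inspect $u$'s five rules. $Write(a)$ only writes the register $r_{ua}$ and leaves $p_u$ untouched; $Increase$ requires $p_u\neq null$ and modifies only $m_u$; $Reset$ sets $p_u$ to $null$; and only $Seduction(a)$ and $Marriage(a)$ set $p_u$ to a non-$null$ neighbor, both requiring the precondition $p_u=null$. Since no $Reset$ fires between $C_1$ and $C_2$, the observed transition of $p_u$ from $null$ to non-$null$ must be caused by at least one move in $\{Seduction(-),Marriage(-)\}$; this gives the existence part.

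For uniqueness I would show that after $u$'s first such move, $p_u\neq null$ and stays non-$null$ until $C_2$. By Observation~\ref{lem:LP:Exclusive2}, a node with $p_u\neq null$ is not eligible for $Seduction(-)$ nor $Marriage(-)$, and the only rule resetting $p_u$ to $null$ is $Reset$, which does not occur in the window. Consequently no second $Seduction$ or $Marriage$ move can fire, so $u$ executes exactly one rule of $\{Seduction(-),Marriage(-)\}$ between $C_1$ and $C_2$. The argument is essentially bookkeeping on the guards; the only point requiring care is the \emph{monotonicity} of $p_u$ — that once it is non-$null$ it remains so in the absence of a $Reset$ — which rests precisely on the two facts that $u$ alone writes $p_u$ and that $Seduction$/$Marriage$ are disabled whenever $p_u\neq null$.
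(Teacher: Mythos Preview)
Your proposal is correct and follows essentially the same approach as the paper: both record that the $Reset$ action gives $p_u=null$ in $C_1$ while the $Reset$ guard forces $p_u\neq null$ in $C_2$, and conclude that $u$ must execute at least one $Seduction(-)$ or $Marriage(-)$ in between. For uniqueness, the paper invokes Lemma~\ref{lem:u:increase} (point~2) to say that two such moves would force an intermediate $Reset$, whereas you unfold the same argument directly via Observation~\ref{lem:LP:Exclusive2}; the underlying reasoning is identical.
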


\begin{proof}
 According to the $Reset$ rule, $p_u=null$ in $C_1$ and $p_u\neq null$ in $C_2$. so, $u$ has to execute a rule between $C_1$ and $C_2$ to set a neighbor identifier in its $p$-variable. There are only two rules doing that: the $Seduction$ and the $Marriage$ rules. Thus, $u$ executes such a rule at least once between $C_1$ and $C_2$. 
Now, assume that node ${u}$ executes such a rule more than once between $C_1$ and   $C_2$. Then, from Lemma \ref{lem:u:increase}, ${u}$ executes a $Reset$ rule between $C_{1}$ and $C_{2}$. This contradicts the fact that node $u$ does not execute any $Reset$ rule between $C_{1}$ and $C_{2}$.
Thus, the lemma holds. 
\end{proof}

\begin{lemma}\label{lem:u:3increase}
Let $u$ be a node. Let $\Exec$ be an execution where $u$ executes at least three $Increase$ moves. Let  $C_0 \mapsto C_1$, and $C_2 \mapsto C_3$ and $C_4 \mapsto C_5$  be three transitions corresponding to three consecutive $Increase$ rules  executed by $u$.  Then ${u}$ executes a  $Reset$ rule   once  between $C_0$ and $C_5$.
\end{lemma}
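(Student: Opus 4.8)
The plan is to argue by contradiction: assume $u$ executes no $Reset$ rule between $C_0$ and $C_5$, and track the value of $m_u$ across the three consecutive $Increase$ moves until the guard of the third one becomes impossible to satisfy.

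First I would recall which rules can modify $m_u$: only $Increase$, whose action is $m_u := m_u+1$ and whose guard requires $m_u\in\{0,1\}$, together with $Seduction$, $Marriage$ and $Reset$, each of which sets $m_u := 0$. Among these, $Seduction$ and $Marriage$ are guarded by $p_u=null$, while the only rule that assigns $p_u:=null$ is $Reset$. Since the first $Increase$ at $C_0\mapsto C_1$ is guarded by $p_u\neq null$, we have $p_u\neq null$ in $C_0$, and no rule other than $Reset$ can subsequently falsify this. Hence, under the hypothesis that $u$ performs no $Reset$ between $C_0$ and $C_5$, the property $p_u\neq null$ holds throughout the interval, so $u$ performs neither $Seduction$ nor $Marriage$ in it. Consequently $m_u$ is modified only by the three $Increase$ moves.

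Next I would track $m_u$. Because the three $Increase$ moves are consecutive (no $Increase$ in between) and nothing else alters $m_u$, the value of $m_u$ stays constant from one $Increase$ to the next. Writing $a$ for the value of $m_u$ in $C_0$, the guard forces $a\in\{0,1\}$ and $m_u=a+1$ in $C_1$, a value unchanged up to $C_2$. The second $Increase$ requires $m_u\in\{0,1\}$ in $C_2$, i.e. $a+1\in\{0,1\}$, which forces $a=0$, so $m_u=1$ in $C_2$ and $m_u=2$ in $C_3$; this value is then unchanged up to $C_4$, giving $m_u=2$ in $C_4$. But the third $Increase$ requires $m_u\in\{0,1\}$ in $C_4$, contradicting $m_u=2$. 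This contradiction shows that $u$ must execute at least one $Reset$ between $C_0$ and $C_5$.

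I do not expect any single step to be a serious obstacle, since the argument is essentially bookkeeping on $m_u$ and $p_u$. The one point requiring care is the invariant that $p_u\neq null$ persists once established, so that $Seduction$ and $Marriage$ are genuinely excluded under the no-$Reset$ hypothesis: this must be justified by checking that each rule $u$ can execute in the interval ($Write$ and $Increase$, both of which leave $p_u$ untouched) preserves $p_u\neq null$, which is exactly where the guard $p_u=null$ of $Seduction$/$Marriage$ and the fact that only $Reset$ assigns $p_u:=null$ are used. I would also note that ``once'' must here be read as ``at least once'': two $Reset$ moves in the interval are genuinely possible, since a $Reset$ followed by a $Seduction$ or $Marriage$ can reset $m_u$ to $0$ before each of the later $Increase$ moves, so no uniqueness claim should be attempted.
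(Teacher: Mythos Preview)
Your proof is correct and follows essentially the same contradiction argument as the paper. The only difference is cosmetic: the paper invokes its auxiliary Lemma~\ref{lem:u:increase} (point~1) to rule out $m_u=1$ after the second and third $Increase$, then observes the forbidden drop $m_u=2\to m_u=1$ between $C_3$ and $C_4$; you instead inline the underlying reasoning by establishing the invariant $p_u\neq null$ (hence no $Seduction$/$Marriage$) and tracking $m_u$ directly until the guard of the third $Increase$ fails. Your version is slightly more self-contained, and your closing remark that ``once'' must be read as ``at least once'' is a valid observation about the statement.
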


\begin{proof}
We now prove this lemma, by contradiction. Let us assume node ${u}$ does not executes an $Reset$ rule a  between $C_0$ and $C_5$. 

According to the $Increase$ rule, $m_u\in \{1,2\}$ in $C_3$ and $C_5$. According to Lemma~\ref{lem:u:increase}, if $m_u=1$ in $C_3$ (resp. $C_5$) then $u$ executes a $Reset$ rule between $C_1$ and $C_2$ (resp. $C_3$ and $C_4$). This contradicts the fact that node $u$ does not execute any $Reset$ rule between $C_{1}$ and $C_{2}$. Thus, $m_u= 2$ in $C_3$ and $C_5$. And so $m_u= 2$ in $C_3$ and $m_u=1$ in $C_4$. 

There is only one way to decrease the value of an $m$ variable: to write 0. Thus $u$ has to execute an $Reset$ rule between $C_3$ and $C_4$ in order to decrease the value  $m_u$. However this contradicts the fact that node $u$ does not execute any $Reset$ rule between $C_0$ and $C_5$.
\end{proof}

These lemmas bound the number of $Marriage$ (Lemma~\ref{lem:JC:entre2mar}), $Reset$  (Lemma~\ref{lem:u:BetweenTwoResets}) and $Increase$  (Lemma~\ref{lem:u:3increase}) in function of the number of $Seduction$. Then an upper bound on the number of $Write$ follows since one modification of the local variables of $u$ leads $u$ to execute at most $deg(u)$ $Write$.  So, in  the following, we present the sketch of the proof leading to an upper bound on the number of $Seduction$.
In the Lemma below, we state that the number of $Seduction(t)$ by $s$ is strongly connected to the number of times that $t$ writes $2$ in $m_{t}$. 

\begin{lemma}\label{lemma:nb:seduction}
Let $({\minN},{\maxN})$ be an edge with ${\minN}<{\maxN}$. 
Let $\Exec$ be an execution containing three transitions $D_0 \mapsto D_1\mapsto^*D_{2}\mapsto D_{3}\mapsto^*D_{4}\mapsto D_{5}$ where $\minN$ executes a $Seduction(\maxN)$ rule. Then there exists a transition $D\mapsto D'$ between $D_2$ and $D_4$ where $\maxN$ executes a $Write(\minN)$ rule and with  in $D$: $p_t\neq null$ and $m_t=2$.
\end{lemma}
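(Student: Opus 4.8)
The plan is to concentrate on the last two of the three seductions, namely the transitions $D_2\mapsto D_3$ and $D_4\mapsto D_5$ (I expect the first seduction to be unnecessary). First I would read off the boundary data from the guard of $Seduction(t)$: in $D_2$ we have $r_{ts}=(\Idle,0)$ and $p_s=null$, in $D_3$ we have $(p_s,m_s)=(t,0)$, and in $D_4$ we have $p_s=null$ again. Since $p_s=t$ in $D_3$ but $p_s=null$ in $D_4$, and the only rule that sets a non-null $p$ back to $null$ is $Reset$, node $s$ must execute a $Reset$ while pointing to $t$; I take the first such transition $F\mapsto F'$ after $D_3$, so that $p_s=t$ holds throughout the segment between $D_3$ and $F$ (there $s$ can perform neither $Seduction$ nor $Marriage$, both of which require $p_s=null$, and no $Reset$ before $F$). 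Note $F$ occurs strictly before $D_4$.

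The core is a case analysis on the guard $\PRaba(s)\lor\PRReset(s)$ enabling this $Reset$, specialised to $p_s=t$ and $s<t$, split according to whether $m_s=0$ at $F$ (equivalently, whether $s$ has performed an $Increase$ since $D_3$). If $m_s=0$, then among the surviving disjuncts of $\PRaba(s)$ and $\PRReset(s)$ only $r_{ts}=(\Other,2)$ and $r_{ts}=(\You,2)$ remain, both of which have $r_{ts}.m=2$ and $r_{ts}.p\neq\Idle$, which is exactly what I want. If instead $s$ has increased, then because for $s<t$ every $Increase$ of $s$ requires $r_{ts}=(\You,1)$, that register value was present at the increase; and since $r_{ts}=(\Idle,0)$ in $D_2$, this value was written by $t$ after $D_2$, so $t$ reached $m_t=1$ with $p_t=s$ strictly after $D_2$.

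It then remains to show that in this second case the ``discrepancy'' resets cannot happen. The key sub-step is that from the moment $t$ reaches $m_t=1$ until $F$ one has $r_{st}.p=\You$: indeed $t$'s $Increase$ to $m_t=1$ needs $r_{st}=(\You,0)$, and afterwards every write of $s$ to $r_{st}$ carries $p_s=t$, so $r_{st}.p$ stays $\You$ up to $F$. Now while $m_t=1$ the predicate $\PRReset(t)$ is false (for $p_t=s,\,t>s$), so the only reset-enabling predicate of $t$ is $\PRaba(t)$, which reduces to $r_{st}.p\neq\You$; hence $t$ cannot $Reset$ while $m_t=1$. Consequently $p_t=s$ persists, $m_t$ cannot decrease, and $m_t\in\{1,2\}$ up to $F$, giving $r_{ts}.p=\You$ and $r_{ts}.m\in\{1,2\}$ at $F$. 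Finally $s$ has no enabled $Reset$ when $r_{ts}=(\You,1)$ (neither $\PRaba(s)$ nor any disjunct of $\PRReset(s)$ holds), so necessarily $r_{ts}=(\You,2)$ at $F$. In every case $r_{ts}$ has $.m=2$ and $.p\neq\Idle$ at $F$, so $t$'s last $Write(s)$ before $F$ was executed with $p_t\neq null$ and $m_t=2$; as $r_{ts}=(\Idle,0)$ in $D_2$ this write lies strictly after $D_2$ and before $F$, hence inside the window between $D_2$ and $D_4$, and taking $D\mapsto D'$ to be that write concludes the argument.

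The main obstacle is precisely eliminating the discrepancy cases, i.e. $\PRReset(s)$ with $r_{ts}=(\You,0)$ or $r_{ts}=(\You,1)$ and $\PRaba(s)$ with $r_{ts}.p\neq\You$: ruling them out is what forces the delicate ``$t$ is locked in once $m_t=1$'' sub-argument. This is also where register staleness must be controlled, and it is exactly the precondition $r_{ts}=(\Idle,0)$ at $D_2$, supplied by the second seduction, that guarantees the relevant $(\You,1)$ value of $r_{ts}$ was genuinely produced after $D_2$ rather than surviving as a stale value from an earlier marriage attempt; getting this timing right is the subtle point of the proof.
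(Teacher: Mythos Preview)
Your argument has a genuine gap, and the speculation that the first seduction is unnecessary is actually wrong.

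The flaw is in the inference ``since $r_{ts}=(\Idle,0)$ in $D_2$, this value was written by $t$ after $D_2$, so $t$ reached $m_t=1$ with $p_t=s$ strictly after $D_2$.'' What follows is only that the \emph{write} of $(\You,1)$ into $r_{ts}$ happened after $D_2$; the \emph{state} $(p_t,m_t)=(s,1)$ that this write reports may well have been present already at $D_2$, inherited from a bad initialization. In that case there is no $Increase$ by $t$ with $r_{st}=(\You,0)$ to anchor your ``$r_{st}.p$ stays $\You$'' invariant, and $t$ can perfectly well $Reset$ via $\PRaba(t)$ (since $r_{st}=(\Idle,0)$ at $D_2$). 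Concretely, start from $(p_s,m_s)=(null,0)$, $r_{st}=(\Idle,0)$, $(p_t,m_t)=(s,1)$, $r_{ts}=(\Idle,0)$. Then in successive steps: $s$ does $Seduction(t)$ while $t$ does $Write(s)$; $s$ does $Write(t)$ while $t$ does $Reset$; $s$ does $Increase$ (reading the stale $r_{ts}=(\You,1)$) while $t$ does $Write(s)$; $s$ does $Write(t)$; then $s$ is enabled for $Reset$ with $m_s=1$ and $r_{ts}=(\Idle,0)$, contradicting your conclusion that $r_{ts}=(\You,2)$ at $F$. After $s$'s $Reset$ and a $Write(t)$ one is back to both nodes idle and $s$ seduces again. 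This is a valid execution with two $Seduction(t)$ and no $Write(s)$ by $t$ with $m_t=2$ in between, so the two-seduction version of the lemma is false.

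The paper's proof uses the first seduction $D_0\mapsto D_1$ essentially. By Lemma~\ref{lem:u:increase}(2) $s$ executes a $Reset$ between $D_1$ and $D_2$, and by Lemma~\ref{lem:lp:write} $t$ executes a $Write(s)$ before that $Reset$. These two facts are exactly the preconditions of Lemma~\ref{petit1} (which in turn feeds Lemma~\ref{lem1:technical}, whose hypothesis ``$t$ executed at least one $s$-rule before $L_0$'' is what rules out the bad-initialization scenario above). Lemma~\ref{petit1} then kills the $m_s>0$ case at $F$ outright, by showing $(s,t)$ is already in a correct state at the moment $(p_s,m_s)=(t,1)$, so $s$ can never $Reset$ there. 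Only the $m_s=0$ case survives, and that case --- which you handled correctly --- forces $r_{ts}.m=2$ at $F$.
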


From the previous Lemma, we know that $t$ has to write $2$ in its $m$-variable for $s$ to reset a previous $Seduction(t)$. The next Theorem gives conditions where the value $2$ in a $m$-variable corresponds to a locked marriage and thus yields  to a situation where $s$ cannot seduce $t$ anymore.

\begin{theorem}\label{lem:mvaut2}
Let $(u,v)$ be an edge. Let $\Exec$ be an execution. 
If $\Exec$ contains two transitions $A_0\mapsto A_1\mapsto^*A_2\mapsto A_3\mapsto^*A_4$ such that : 
\begin{itemize} 
\item in $A_0\mapsto A_1$, $v$ executes a $u$-rule;
\item in $A_2\mapsto A_3$, $u$ executes a $Reset$ rule;
\item and in $A_4$, $(p_u,m_u)=(v,2)$;
\end{itemize}
then the edge $(min(u,v),max(u,v))$ is in a correct state in $A_4$.
\end{theorem}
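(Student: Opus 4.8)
The plan is to reconstruct $u$'s moves between $A_3$ and $A_4$, read off the register values forced by $u$'s two $Increase$ guards, show that these registers faithfully reflect $v$'s current local variables, and conclude that the edge is already in an updated correct state just before $u$'s last increase; correctness in $A_4$ then follows from Lemma~\ref{lem:correctStateLP2}. Write $s=\min(u,v)$ and $t=\max(u,v)$. By the $Reset$ rule $(p_u,m_u)=(null,0)$ in $A_3$, whereas $(p_u,m_u)=(v,2)$ in $A_4$. Since $p_u$ is assigned a neighbour only by $Seduction$ or $Marriage$, there is a \emph{last} transition $\sigma$ strictly between $A_3$ and $A_4$ at which $u$ executes $Seduction(v)$ (if $u<v$) or $Marriage(v)$ (if $u>v$), resetting $(p_u,m_u)=(v,0)$. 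After $\sigma$ we have $p_u=v$ until $A_4$ (otherwise $\sigma$ would not be the last such transition); hence, by Observation~\ref{lem:LP:Exclusive2}, $u$ executes no further $Reset$ and raises $m_u$ from $0$ to $2$ by exactly two $v$-$Increase$ moves $I_1$ and $E$, possibly interleaved with $Write(v)$ moves only.

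Next I would read the guard of the final increase $E$ (the one setting $m_u=2$). It requires $r_{uv}=\FCTpointingto(u,v)=(\You,1)$ and $r_{vu}.p=\You$, together with $r_{vu}.m=1$ when $u<v$ and $r_{vu}.m=2$ when $u>v$; so in $E$ we have $r_{vu}=(\You,\mu)$ with $\mu=1$ if $u=s$ and $\mu=2$ if $u=t$. The whole statement then reduces to a single claim: in $E$, $p_v=u$ and $m_v=\mu$. Granting it, the edge $(s,t)$ is in state $(\You,1,1)$ (if $u=s$) or $(\You,2,1)$ (if $u=t$); in both cases the two registers $r_{st},r_{ts}$ have exactly the values required by Definition~\ref{def:updatedCS}, so the edge is in an updated correct state in $E$. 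By Lemma~\ref{lem:correctStateLP2} it then stays in a correct state forever, in particular in $A_4$, which is what we want.

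To prove $p_v=u\land m_v=\mu$ in $E$ I would consider $v$'s last $Write(u)$ before $E$, say the transition $w$: by the $Write$ rule $(p_v,m_v)=(u,\mu)$ in $w$, and $r_{vu}$ stays equal to $(\You,\mu)$ from $w$ up to $E$. It is therefore enough to show that $v$ does not change $(p_v,m_v)$ on this interval. As $p_v=u\neq null$, Observation~\ref{lem:LP:Exclusive2} allows $v$ only $Write$, $Increase$ or $Reset$, and only the last two can alter $(p_v,m_v)$. A $v$-$Increase$ by $v$ would require $u$'s register $r_{uv}$ to exhibit the $m$-value prescribed by the $Increase$ guard; but $u$, having been reset in $A_2\mapsto A_3$ and then rebuilt $r_{uv}$ through $(\Idle,0),(\You,0),(\You,1)$ during its own climb, never presents that value on the relevant interval, so no such $Increase$ is enabled. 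A $Reset$ of $v$ is excluded by confronting the commitment of $v$ to $u$ witnessed by the $u$-rule in $A_0\mapsto A_1$ with the earlier lemmas bounding how often a node resets and re-enters (e.g.\ Lemma~\ref{lem:u:BetweenTwoResets}): a reset of $v$ on $(w,E)$ would either drive $r_{vu}$ off $(\You,\mu)$ before $E$, contradicting the value read at $E$, or force a later $Write(u)$ by $v$, contradicting the maximality of $w$.

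The main obstacle is exactly this last step. Read/write atomicity allows $r_{vu}=(\You,\mu)$ in $E$ to be stale, so it does not by itself certify $p_v=u,\,m_v=\mu$ in $E$; one must show that any intervening $Reset$ of $v$ would either have been recorded in $r_{vu}$ before $E$ or would have displaced $v$'s last relevant write. Carrying this out requires locating the interval $(w,E)$ relative to $u$'s reset in $A_2\mapsto A_3$, and it is the only place where the two structural hypotheses — the $u$-rule executed by $v$ before $u$'s reset, and the reset of $u$ itself — are genuinely used. Everything else is a mechanical reading of the $Increase$ guard followed by an appeal to the invariance of correct states in Lemma~\ref{lem:correctStateLP2}.
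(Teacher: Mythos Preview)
Your reduction is sound up to the central claim: if in $E$ (the configuration where $u$ performs its final $v$-$Increase$) one has $(p_v,m_v)=(u,\mu)$, then the edge is in an updated correct state there and Lemma~\ref{lem:correctStateLP2} finishes the proof. The gap is precisely where you locate it, but the sketch you give to fill it does not hold.

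Your dichotomy for excluding a $Reset$ of $v$ on the interval $(w,E)$ is false. A $Reset$ by $v$ modifies only $v$'s local variables $(p_v,m_v)$; it does \emph{not} touch $r_{vu}$. Under the adversarial distributed daemon, $v$ may then be eligible for $Write(u)$ without ever being scheduled for it before $E$. In that scenario $r_{vu}$ remains $(\You,\mu)$ at $E$ (so there is no contradiction with the value read by $u$'s $Increase$ guard), there is no later $Write(u)$ (so no contradiction with the maximality of $w$), and yet $(p_v,m_v)=(null,0)\neq(u,\mu)$ at $E$. Lemma~\ref{lem:u:BetweenTwoResets} does not help here: it only says that between two $Reset$s of $v$ there is a $Seduction$/$Marriage$, which gives no control over the single interval $(w,E)$. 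Your argument against a $v$-$Increase$ on $(w,E)$ has the same weakness: you assume $r_{uv}$ has already been rebuilt by $u$'s post-reset writes, but nothing forces $w$ to lie after $A_3$, so on part of $(w,E)$ the register $r_{uv}$ may still carry arbitrary pre-reset content.

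The paper does not try to certify $(p_v,m_v)$ directly at $E$. It splits on $u\lessgtr v$ and invokes Lemmas~\ref{grand2} and~\ref{petit1}, each of which performs a \emph{two-level} back-and-forth: from $u$'s $Increase$ guard one infers a change in $r_{vu}$, hence a $Write$ by $v$; Lemmas~\ref{lem:technical} and~\ref{lem1:technical} (this is where the hypothesis ``$v$ executed a $u$-rule in $A_0\mapsto A_1$'' is actually consumed) then locate an earlier $Increase$ by $v$, and at \emph{that} configuration $u$'s local state is pinned down because $u$ is strictly between two of its own $Increase$s and can only be doing $Write$s. This yields a configuration where both nodes' local variables and both registers are simultaneously known, namely the updated correct state $(\You,1,1)$ in Lemma~\ref{grand2} and $(\You,0,0)$ in Lemma~\ref{petit1}; Lemma~\ref{lem:correctStateLP2} then propagates correctness forward to $A_4$. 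The missing idea in your attempt is this second hop back to $v$'s own $Increase$, which is what converts a stale register reading into a genuine synchronisation point.
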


\subsection{Number of moves for the  $Seduction$ Rule}

\begin{theorem}\label{theo:nb:seduction}
Let $({\minN},{\maxN})$ be an edge with ${\minN}<{\maxN}$.  The number of $Seduction({\maxN})$ rules executed by node ${{\minN}}$ is in  $O(\Delta)$.
\end{theorem}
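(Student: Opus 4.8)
My plan is to charge each execution of $Seduction(\maxN)$ by $\minN$ (beyond a constant number) to a distinct ``$\maxN$ writes $2$ in $m_\maxN$'' event attached to one of $\maxN$'s neighbours, and then to show, via Theorem~\ref{lem:mvaut2} and Corollary~\ref{coro:correctStateLP}, that almost every such event permanently locks $\maxN$ in a correct state and hence stops $\minN$ from seducing $\maxN$. Concretely, I would first split the sequence of $Seduction(\maxN)$ rules $\sigma_1,\sigma_2,\dots,\sigma_k$ executed by $\minN$ into the overlapping triples $(\sigma_{j-1},\sigma_j,\sigma_{j+1})$ and apply Lemma~\ref{lemma:nb:seduction} to each: it yields, in the \emph{disjoint} gaps between $\sigma_j$ and $\sigma_{j+1}$ for $j=2,\dots,k-1$, a transition $D\mapsto D'$ in which $\maxN$ executes $Write(\minN)$ from a configuration $D$ with $p_\maxN\neq null$ and $m_\maxN=2$. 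Writing $w=p_\maxN$ in $D$, each such gap carries a distinct configuration with $(p_\maxN,m_\maxN)=(w,2)$, and the reduction is simply $k\le 2+\#\{\text{these events}\}$.

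The heart of the argument is to call an event with $(p_\maxN,m_\maxN)=(w,2)$ \emph{clean} when it fits the hypotheses of Theorem~\ref{lem:mvaut2} with $u=\maxN,\ v=w$: namely $w$ performed a $\maxN$-rule, then $\maxN$ executed a $Reset$, then $\maxN$ reached $(w,2)$. For a clean event Theorem~\ref{lem:mvaut2} gives that the edge $(\min(w,\maxN),\max(w,\maxN))$ is in a correct state, so by Lemma~\ref{lem:correctStateLP2} and Corollary~\ref{coro:correctStateLP} it remains correct forever and $\maxN$ never again executes $Seduction$, $Marriage$ or $Reset$; in particular $p_\maxN=w$ is frozen. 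After $\maxN$ has written its register, $r_{\maxN\minN}$ is then frozen with $r_{\maxN\minN}.p\in\{\You,\Other\}$ (it is $\You$ if $w=\minN$, $\Other$ otherwise), so the guard $r_{\maxN\minN}=(\Idle,0)$ of $Seduction(\maxN)$ can never hold again and $\minN$ stops seducing $\maxN$. Consequently no $Seduction(\maxN)$ can occur after the first clean event, so \emph{every} gap event produced above precedes the first clean event and is therefore non-clean.

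It remains to bound the non-clean events. Since $\maxN$ can set $p_\maxN:=w$ only through $Seduction(w)$ or $Marriage(w)$, whose guards force $p_\maxN=null$ beforehand (Observation~\ref{lem:LP:Exclusive2}), every engagement of $\maxN$ with a given neighbour $w$ except possibly the first is preceded by a genuine $Reset$ of $\maxN$, and $w$ has already performed $\maxN$-rules during the previous engagement; the order required by Theorem~\ref{lem:mvaut2} is thus met and the second (and any later) engagement with $w$ reaching $m_\maxN=2$ is clean. Hence each neighbour $w$ of $\maxN$ can account for at most one non-clean $(w,2)$ event --- the one inherited from the arbitrary initial configuration --- so the total number of non-clean events is at most $\deg(\maxN)\le\Delta$, giving $k\le 2+\Delta=O(\Delta)$. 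The main obstacle is precisely this last step: one must verify carefully, by a case analysis on the order of the $Reset$, the $\maxN$-rule of $w$, and the arrival at $m_\maxN=2$ (and on whether $w=\minN$ or $w\neq\minN$), that every engagement other than the initial one genuinely satisfies the hypotheses of Theorem~\ref{lem:mvaut2}, so that the recurrence of non-clean events is really capped at one per neighbour and no single engagement can contribute several distinct gap events.
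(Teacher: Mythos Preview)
Your overall plan is the same as the paper's: harvest, via Lemma~\ref{lemma:nb:seduction}, one ``$(p_{\maxN},m_{\maxN})=(w,2)$ with $Write(\minN)$'' event per gap between consecutive $Seduction(\maxN)$'s, pigeonhole on the neighbour $w$, and then invoke Theorem~\ref{lem:mvaut2} and Corollary~\ref{coro:correctStateLP} to show that once the hypotheses of Theorem~\ref{lem:mvaut2} are met the edge containing $\maxN$ is locked and no further $Seduction(\maxN)$ can occur. Your argument that a clean event blocks all later $Seduction(\maxN)$ is fine (the paper phrases it via Lemma~\ref{lem:LP:entre2sed}: between two $Seduction(\maxN)$'s, $\maxN$ must $Reset$, which Corollary~\ref{coro:correctStateLP} forbids).

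The gap is exactly where you flag it, and your tentative resolution does not work. You assert that ``$w$ has already performed $\maxN$-rules during the previous engagement,'' but the first $w$-engagement may be entirely inherited from the initial configuration with $w$ never acting: $\maxN$ can start at $(p_{\maxN},m_{\maxN})=(w,2)$ and perform $Write(\minN)$ without any move by $w$. Hence the \emph{second} $w$-engagement need not be clean. What one can prove is that when $\maxN$ re-engages with $w$ from scratch (after a $Reset$, via $Marriage(w)$ or $Seduction(w)$, then two $Increase$'s up to $m_{\maxN}=2$), the register $r_{w\maxN}$ must change value between the $Marriage/Seduction$ guard and the second $Increase$ guard, so $w$ performs a $Write(\maxN)$ \emph{during that second engagement}; this is precisely Lemma~\ref{lem:jc:reset}. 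But this $Write(\maxN)$ by $w$ lies \emph{after} the $Reset$ that opened the second engagement, which is the wrong order for Theorem~\ref{lem:mvaut2}. One therefore needs a \emph{third} $w$-event: take $w$'s $Write(\maxN)$ from the interval between the first and second events, and $\maxN$'s $Reset$ from the interval between the second and third, and then $(p_{\maxN},m_{\maxN})=(w,2)$ at the third event --- now the order fits Theorem~\ref{lem:mvaut2}. This yields at most two non-clean events per neighbour $w\neq\minN$ (and the case $w=\minN$ is even easier, since $\minN$'s own $Seduction(\maxN)$ is a $\maxN$-rule and Lemma~\ref{lem:LP:entre2sed} supplies the intervening $Reset$ of $\maxN$), hence $k\le 2\Delta+O(1)$, still $O(\Delta)$. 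So your plan is salvageable, but you must invoke Lemma~\ref{lem:jc:reset} and accept two (not one) non-clean events per neighbour.
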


\begin{proof}
By contradiction. Let $\Exec$ be an execution where $s$ executes a $Seduction(t)$ rule at least $2\Delta+4$ times. We are going to show that such an execution is not possible since after the $(2\Delta+3)^{th}$ $Seduction(t)$ execution, $s$ cannot perform any other $Seduction(t)$ rule. 

For $0\leq i\leq (2\Delta+3)$, let $A_i\mapsto B_i$  be the transition where $s$ executes its $i^{th}$ $Seduction(t)$ rule in $\Exec$.

According to Lemma~\ref{lemma:nb:seduction}, between each couple of configurations $(B_j, A_{j+1})$ where $1\leq j\leq (2\Delta+1)$, there exists a transition $C_j\mapsto D_j$ where $t$ executes a $Write(s)$ rule and with $p_t\neq null$ and $m_t=2$ in $C_j$. 

Since $t$ has at most $\Delta$ neighbors ans since there are $2\Delta+1$ such transitions $C_j\mapsto D_j$, then there exists at least one neighbor of $t$ that appears at least 3 times in $p_t$ among these $C_j$. 
More formally, $\exists x\in \Voisin(t) : \exists \text{ distinct } a,b,c \in [1,..,2\Delta+1] :: p_t=x$ in $C_a$, $C_b$ and $C_c$. Let us assume w.l.o.g that $a<b<c$.
 
 First, let us prove that $x\neq s$. By contradiction. The we consider the three transitions $A_0\mapsto B_0$, $A_a\mapsto B_a$ and $A_{a+1}\mapsto B_{a+1}$. Observe that the execution starting in configuration $A_a$ reaches all the assumptions made in Lemmas~\ref{grand2}. Indeed, before $A_a$, both $s$ executes a $t$rule ($Seduction(t)$) and $t$ executes a Reset by Lemma~\ref{lem:LP:entre2sed}. Thus, when $(p_t,m_t)=(s,2)$ in configuration $C_a$, the edge $(s,t)$ is in a correct state. Thus, by Corollary~\ref{coro:correctStateLP}, from this configuration $s$ cannot execute any $Seduction$, which contradict the fact that it does in transition $A_{b+1}\mapsto B_{b+1}$. So $x\neq s$. 
 
According to Lemma~\ref{lem:jc:reset}, between $C_a$ and $C_b$, $x$ executes a $Write(t)$ rule and between $C_b$ and $C_c$, $t$ executes a $Reset$ rule. Finally, in $C_c$, we have $(p_t,m_t)=(x,2)$. Thus by Theorem~\ref{lem:mvaut2}, the edge $(min(t,x), max(t,x))$ is in a correct state in $C_c$. From Corollary~\ref{coro:correctStateLP}, from this configuration $t$ cannot execute any $Reset$. However, since $s$ executes two $Seduction(t)$ rules in $A_{2\Delta+2}\mapsto B_{2\Delta+2}$  and $A_{2\Delta+3}\mapsto B_{2\Delta+3}$ and by Lemma~\ref{lem:LP:entre2sed}, $t$ executes a $Reset$ between $B_{2\Delta+2}$ and $A_{2\Delta+3}$ which leads the contradiction
\end{proof}

\begin{lemma}\label{lem:u:increase}
Let $u$ be a node. Let $\Exec$ be an execution containing two transitions $C_0 \mapsto C_1$ and $C_2 \mapsto C_3$ with $C_{1} \mapsto^* C_{2}$ where ${u}$ executes a rule. 
\begin{enumerate} 
\itemshort
\item If ${u}$ executes an $Increase$ rule during these two transitions and if $m_{u} = 1$ in $C_{3}$, then ${u}$ executes a $Reset$ rule between $C_{1}$ and $C_{2}$.
\item If ${u}$ executes a $Seduction(-)$ or a $Mariage(-)$ rule during these two transitions, then ${u}$ executes a $Reset$ rule between $C_{1}$ and $C_{2}$.
\end{enumerate}
\end{lemma}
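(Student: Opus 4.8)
The plan is to reduce everything to a bookkeeping argument on how the local pair $(p_u,m_u)$ can evolve, so I would first catalogue the effect of each rule on it. Only four rules touch $(p_u,m_u)$: $Seduction(a)$ and $Marriage(a)$ both perform $(p_u,m_u):=(a,0)$ and both are guarded by $p_u=null$; $Increase$ performs $m_u:=m_u+1$, leaves $p_u$ untouched, and is guarded by $p_u\neq null$ together with $m_u\in\{0,1\}$; and $Reset$ performs $(p_u,m_u):=(null,0)$. The two facts I will lean on are that $Reset$ is the \emph{only} rule writing $null$ into $p_u$, and that the rules writing $0$ into $m_u$ are exactly $Reset$, $Seduction$ and $Marriage$. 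I would also note that no node other than $u$ can modify $p_u$ or $m_u$, since these are local variables of $u$, so I may follow their values along the whole segment $C_1\mapsto^*C_2$.

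For the second item I would argue directly on $p_u$. After $u$ executes the $Seduction(-)$ or $Marriage(-)$ rule in $C_0\mapsto C_1$, the assignment $(p_u,m_u):=(a,0)$ gives $p_u\neq null$ in $C_1$. On the other hand, the $Seduction(-)$ or $Marriage(-)$ rule executed in $C_2\mapsto C_3$ is activable in $C_2$, so its guard yields $p_u=null$ in $C_2$. Hence $p_u$ passes from a non-$null$ value in $C_1$ to $null$ in $C_2$; since $Reset$ is the only rule writing $null$ into $p_u$, node $u$ must execute a $Reset$ somewhere in $C_1\mapsto^*C_2$.

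For the first item I would combine an $m_u$-argument with the same $p_u$-observation. The $Increase$ in $C_0\mapsto C_1$ leaves $m_u\in\{1,2\}$ in $C_1$, and since it requires and preserves $p_u\neq null$, also $p_u\neq null$ in $C_1$. The $Increase$ in $C_2\mapsto C_3$ increments $m_u$, and as $m_u=1$ in $C_3$ this increment must be $0\to 1$, so $m_u=0$ in $C_2$. Thus $m_u$ drops from a positive value in $C_1$ to $0$ in $C_2$, so some move of $u$ writes $0$ into $m_u$ in $C_1\mapsto^*C_2$; by the catalogue this move is a $Reset$, a $Seduction$ or a $Marriage$. If it is a $Reset$ we are done. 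If it is a $Seduction$ or $Marriage$, its guard forces $p_u=null$ just before it, whereas $p_u\neq null$ in $C_1$; as only $Reset$ can set $p_u$ to $null$, a $Reset$ still occurs in $C_1\mapsto^*C_2$.

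The main obstacle is precisely this last subcase of item~1: it is tempting to conclude immediately from ``$m_u$ decreased, hence a $Reset$'', but $Seduction$ and $Marriage$ also write $0$ into $m_u$, so a $Reset$ cannot be read off from the $m_u$-trace alone. The real content is to invoke the guard $p_u=null$ of those two rules, which cannot hold right after the first $Increase$ and can only be produced by a $Reset$; this is exactly where the uniqueness of $Reset$ as the sole writer of $null$ into $p_u$ is used. Everything else is routine inspection of the guards and assignments.
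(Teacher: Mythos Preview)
Your proof is correct and follows essentially the same approach as the paper's own proof: both argue item~2 directly from the $p_u$ value (non-$null$ after the first move, $null$ as a guard of the second, and only $Reset$ writes $null$), and both handle item~1 by first deducing $m_u=0$ in $C_2$, then splitting on whether the zero came from a $Reset$ or from a $Seduction$/$Marriage$, reducing the latter case to the $p_u$ argument again. Your explicit catalogue of which rules write what into $(p_u,m_u)$ makes the bookkeeping slightly more transparent, but the logic is identical.
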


\begin{proof}
We start by proving the first point. According to the $Increase$ rule, $p_u\neq null$ in $C_0$ and $u$ writes $1$ or $2$ in $m_{u}$ during the transition $C_0 \mapsto C_1$. So $m_u\neq 0$ and $p_u\neq null$ in $C_1$. Since $m_{u} = 1$ in $C_{3}$, then $m_{u} = 0$ in $C_{2}$. Thus either $u$ sets its $m$ variable to 0 executing a $Reset$ rule between $C_1$ and $C_2$ and the proof is done, or $u$ executes a $Marriage(-)$ or a $Seduction(-)$ rule, let say in transition $C\mapsto C'$. Then we have $p_u=null$ in $C$. Since $p_u\neq null$ in $C_1$, then $u$ must execute a $Reset$ rule between $C_1$ and $C$. 

We now prove the second point. According to both rules $Seduction$ and $Marriage$, $p_{u}\neq \bot$ in $C_1$ and $p_{u} = \bot$ in $C_2$. Thus $u$ must execute a $Reset$ rule between $C_1$ and $C_2$ in order to set $p_{u}$ to~$\bot$.
\end{proof}


\begin{lemma}\label{lem:LP:entre2sed}
Let $(\minN,\maxN)$ be an edge with $\minN<\maxN$. Let $\Exec$ be an execution containing two transitions $C_0 \mapsto C_1$ and $D_0\mapsto D_1$ with $C_{1} \mapsto^* D_0$ where $\minN$ executes a $Seduction(\maxN)$ rule. 
We have: both $\minN$ and $\maxN$ execute $Reset$ between $C_1$ and $D_0$.
\end{lemma}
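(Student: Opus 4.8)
The plan is to prove the two assertions separately: first that $s$ executes a $Reset$ between $C_1$ and $D_0$, and then, using this, that $t$ does too. The first assertion is immediate from Lemma~\ref{lem:u:increase}: since $s$ executes $Seduction(t)$ both in $C_0\mapsto C_1$ and in $D_0\mapsto D_1$, point 2 of that lemma (applied to the two transitions $C_0\mapsto C_1$ and $D_0\mapsto D_1$) yields a $Reset$ by $s$ between $C_1$ and $D_0$.

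For the second assertion I would first analyse why $s$ is able to reset at all. Let $E\mapsto E'$ be the first $Reset$ executed by $s$ after $C_1$. Since the only rule setting $p_s$ to $null$ is $Reset$, while $Seduction$ and $Marriage$ require $p_s=null$, the value $p_s=t$ fixed in $C_0\mapsto C_1$ is unchanged throughout $[C_1,E]$, and $s$ performs neither $Seduction$ nor $Marriage$ in $[C_1,E)$, so $m_s$ can change there only through $Increase$. I then inspect the guard of the $Reset$ taken in $E$: as $p_s=t$ and $s<t$, the predicate $\PRaba(s)\lor\PRReset(s)$ forces $r_{ts}\neq(Idle,0)$ at some moment of $[C_1,E]$. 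Indeed, $\PRReset(s)$ and the $r_{ts}=(Other,2)$ disjunct of $\PRaba(s)$ each already make $r_{ts}\neq(Idle,0)$ in $E$, while the remaining disjunct of $\PRaba(s)$ needs $m_s\neq0$ in $E$, which (as $m_s=0$ in $C_1$ and only $Increase$ can change it on $[C_1,E)$) forces an earlier $Increase$ whose guard requires $r_{ts}.p=You$. In every case there is a configuration $\tau\in[C_1,E]$ with $r_{ts}\neq(Idle,0)$.

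The core of the argument is then a register-tracking step. The guards of the two $Seduction(t)$ rules give $r_{ts}=(Idle,0)$ in $C_0$ and in $D_0$, while the previous paragraph gives $r_{ts}\neq(Idle,0)$ in $\tau$ with $C_0\mapsto^+\tau\mapsto^* D_0$. Now $r_{ts}$ is written only by $t$, through $Write(s)$ setting it to $\FCTpointingto(t,s)$, a value equal to $(Idle,0)$ exactly when $p_t=null$. Hence the change from $(Idle,0)$ to a non-$(Idle,0)$ value between $C_0$ and $\tau$ is witnessed by a $Write(s)$ of $t$ performed with $p_t\neq null$, and the later change back to $(Idle,0)$ between $\tau$ and $D_0$ is a $Write(s)$ of $t$ performed with $p_t=null$. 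Between these two writes $p_t$ passes from a non-$null$ value to $null$, which only a $Reset$ can achieve; hence $t$ executes a $Reset$. Finally I would check that this $Reset$ lies in $[C_1,D_0]$: it occurs strictly after the first of the two writes, and that write is at earliest the transition $C_0\mapsto C_1$, so the $Reset$ occurs after $C_1$.

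The main obstacle I anticipate is the second paragraph: correctly enumerating the disjuncts of $\PRaba(s)$ and $\PRReset(s)$ under $s<t$ so as to extract a genuine non-$(Idle,0)$ value of $r_{ts}$, and ensuring it occurs at or after $C_1$. A secondary subtlety is the boundary bookkeeping in the last paragraph: taking $C_0$ (rather than $C_1$) as the reference configuration where $r_{ts}=(Idle,0)$ is what avoids having to rule out a $Write(s)$ by $t$ concurrent with $s$'s first $Seduction$, since any such write still precedes the forced $Reset$ of $t$.
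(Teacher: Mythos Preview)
Your proposal is correct and follows essentially the same approach as the paper's proof. Both arguments first obtain $s$'s $Reset$ from Lemma~\ref{lem:u:increase}(2), then analyse the guard of $s$'s first $Reset$ (the paper's $C_4$, your $E$) to produce an intermediate configuration where $r_{ts}\neq(Idle,0)$, and finally use the fact that $r_{ts}=(Idle,0)$ in $C_0$ and $D_0$ to force two $Write(s)$ moves by $t$ with $p_t\neq null$ then $p_t=null$, hence a $Reset$ by $t$ in between. The only cosmetic difference is that the paper splits on whether $r_{ts}=(Idle,0)$ holds in $C_4$, whereas you split on which disjunct of $\PRaba(s)\lor\PRReset(s)$ fires; these amount to the same case analysis, and your boundary bookkeeping for placing $t$'s $Reset$ inside $[C_1,D_0]$ is in fact slightly more explicit than the paper's.
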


\begin{proof}
From the $Seduction(\maxN)$ rule, we have $r_{\maxN\minN} =(Idle,0)$ in $C_0$ and $D_0$. Moreover, according to Lemma~\ref{lem:u:increase}, $\minN$ executes a $Reset$ rule between $C_1$ and $D_0$. 
Let $C_4\mapsto C_5$ be the transition where it does for the first time. 

We now study two cases: in $C_4, r_{\maxN\minN}$ is either equal or different from $(Idle,0)$.

If it is different then there exists two transitions $C_2 \mapsto C_{3}$ and $C_6\mapsto C_7$, with $C_0 \mapsto^* C_2 \mapsto^+ C_4$ and 
$C_4 \mapsto^* C_6 \mapsto^+ D_0$ such that: (i) in $C_2 \mapsto C_3$, $\maxN$ executes $Write(\minN)$ to set $r_{\maxN\minN}$ to a couple $\neq (Idle,0)$ and so $p_\maxN \neq null$ in $C_2$; and (ii) in $C_6 \mapsto C_7$, $\maxN$ executes $Write(\minN)$ to set $r_{\maxN\minN}$ to $(Idle,0)$ and so $p_\maxN=null$ in $C_6$.
Thus, there exists a transition between $C_3$ and $C_6$ where $\maxN$ executes a $Reset$ move. 

Now, if $r_{\maxN\minN}=(Idle,0)$ in $C_4$. Then, $\PRReset(s)$ is false and according to the $Reset$ rule and in particular to the $\PRaba(s)$ predicate, $m_{\minN}\neq 0$ in $C_4$. Moreover, from the $Seduction(\maxN)$ rule, $m_s=0$ in $C_1$. 
Thus there exists a transition $C_2\mapsto C_3$ between $C_1$ and $C_4$ where $\minN$ executes an $Increase$ rule. Since $\minN$ executes its first $Reset$ from $C_1$ in $C_4\mapsto C_5$, then $p_\minN=\maxN$ from $C_1$ to $C_4$, and so $p_s=t$ in $C_2$. According to the $Increase$ rule, in $C_2$ $r_{\maxN\minN}= (You, 1)$.
Since $r_{\maxN\minN} =(Idle,0)$ in $C_0$ and $C_4$, then there exists two transitions $B_0 \mapsto B_1$ and $B_2\mapsto B_3$, with $C_0 \mapsto^* B_0 \mapsto^+ C_2$ and $C_2 \mapsto^* B_2 \mapsto^+ C_4$ such that: (i) in $B_0 \mapsto B_1$, $\maxN$ executes $Write(\minN)$ to set $r_{\maxN\minN}$ to $(You, 1)$ and so $(p_\maxN,m_\maxN)\neq (\minN,1)$ in $B_0$; and (ii) in $B_2\mapsto B_3$, $\maxN$ executes $Write(\minN)$ to set $r_{\maxN\minN}$ to $(Idle,0)$ and so $(p_\maxN,m_\maxN)=(\bot,0)$ in $B_2$. Thus, there exists a transition between $B_1$ and $B_2$ where $\maxN$ executes a $Reset$ move. 
\end{proof}

\begin{lemma}\label{lem:technical}
 Let $({\minN},{\maxN})$ be an edge with ${\minN}<{\maxN}$. Let $\Exec$ be an execution. If $\Exec$ contains two configurations $L_{0}$ and $L_{1}$ with $L_{0} \mapsto^* L_{1}$ and such that:
\begin{itemize}
\item ${\minN}$ executed at least one ${\maxN}$-rule before $L_{0}$; 
\item $r_{{\minN}{\maxN}}=(You,0)$ in $L_{0}$;
\item  $(p_{\minN},m_{\minN})=({\maxN},2)$ in $L_1$;
\end{itemize}
then, ${\minN}$ executes a $t$-$Increase$ to set $m_{\minN}=2$ between $L_0$ and $L_1$.
\end{lemma}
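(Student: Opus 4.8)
The plan is to track the local value $m_s$ across $[L_0,L_1]$ and to pinpoint the $Increase$ that installs the value $2$. First I would note that the only rule able to set $m_s$ to $2$ is an $Increase$ executed from a configuration where $m_s=1$: $Seduction$, $Marriage$ and $Reset$ all write $0$ into $m_s$, while $Write$ leaves $m_s$ unchanged. Since $(p_s,m_s)=(t,2)$ in $L_1$, I would therefore consider the last transition $G\mapsto G'$ with $G'\mapsto^*L_1$ in which $s$ executes an $Increase$ producing $m_s=2$, and then show (i) that such a transition exists, (ii) that it is a $t$-$Increase$, and (iii) that it lies between $L_0$ and $L_1$.

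For (i), I would rule out the degenerate situation in which $m_s=2$ holds continuously from the initial configuration up to $L_1$ with no such $Increase$. If no $Increase$ ever produces $m_s=2$ before $L_1$, then, as $m_s$ can only reach $2$ through such an $Increase$, $m_s=2$ must hold throughout; likewise $p_s=t$ holds throughout, since any change of $p_s$ goes through $Seduction$, $Marriage$ or $Reset$, which would reset $m_s$ to $0$. But then $s$ executes no $t$-rule before $L_0$: $Seduction(t)$ and $t$-$Reset$ would write $0$ into $m_s$, a $t$-$Increase$ is never enabled while $m_s=2$, and a $Write(t)$ would set $r_{st}$ to $(You,2)$ and keep it there (as $\FCTpointingto(s,t)=(You,2)$ would then be stable), contradicting $r_{st}=(You,0)$ in $L_0$. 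This contradicts the first hypothesis, which is precisely where it is used.

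For (ii), I would use maximality of $G\mapsto G'$: the value $m_s=2$ holds on all of $[G',L_1]$, because any earlier departure from $2$ leads to $0$ and coming back to $2$ would need a later $Increase$, contradicting the choice of $G\mapsto G'$. Hence $s$ performs only $Write$ rules on $[G',L_1]$, so $p_s$ is constant there and equals its value $t$ in $L_1$; as $Increase$ does not alter $p_s$, also $p_s=t$ in $G$, so $G\mapsto G'$ is a $t$-$Increase$ with $m_s=1$ in $G$. Reading the guard of $Increase$ at $G$ (the conjunct $r_{sp_s}=\FCTpointingto(s,p_s)$ with $p_s=t$ and $m_s=1$) yields $r_{st}=(You,1)$ in $G$, hence also in $G'$.

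The main obstacle is (iii): showing $L_0\mapsto^*G$, i.e. that this $t$-$Increase$ occurs after $L_0$. I would argue by contradiction. If $G$ were strictly before $L_0$, then $G'\mapsto^*L_0\mapsto^*L_1$, so $m_s=2$ and $p_s=t$ hold on $[G',L_0]$. On that interval $r_{st}$ equals $(You,1)$ at $G'$, and the only rule that can modify it is $Write(t)$, which here would set $r_{st}:=\FCTpointingto(s,t)=(You,2)$; thus $r_{st}\in\{(You,1),(You,2)\}$ throughout $[G',L_0]$ and never equals $(You,0)$, contradicting $r_{st}=(You,0)$ in $L_0$. Therefore $G$ lies between $L_0$ and $L_1$, and $G\mapsto G'$ is the required $t$-$Increase$ setting $m_s=2$.
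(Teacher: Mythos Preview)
Your proof is correct, and it takes a genuinely different route from the paper's argument. The paper splits on whether $p_s=t$ in $L_0$. When $p_s=t$, it looks at the \emph{last} $t$-rule executed by $s$ before $L_0$ (this is where the first hypothesis enters), runs a five-way case analysis on that rule to deduce $m_s\in\{0,1\}$ in $L_0$, and concludes that the jump to $m_s=2$ must happen in $[L_0,L_1]$. When $p_s\neq t$, it locates a $Seduction(t)$ inside $[L_0,L_1]$ resetting $m_s$ to $0$, and argues forward from there. You instead pick the \emph{last} $Increase$ producing $m_s=2$ before $L_1$, show it is a $t$-$Increase$, and use the invariant $r_{st}\in\{(You,1),(You,2)\}$ on $[G',L_1]$ to force it past $L_0$; the first hypothesis is used only to kill the degenerate case where $(p_s,m_s)=(t,2)$ holds from the very beginning. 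Your approach avoids both the case split on $p_s$ in $L_0$ and the rule-by-rule analysis of the last $t$-rule, at the price of tracking the register value $r_{st}$ (which the paper does not need here). Two small points worth making explicit: $Marriage(t)$ is never available to $s$ because $s<t$, so it does not appear among the $t$-rules to be ruled out in step~(i); and the boundary case $G=L_0$ is excluded by the same register argument, since $r_{st}=(You,1)$ in $G$ while $r_{st}=(You,0)$ in $L_0$.
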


\begin{proof} 
In $L_0$ there are two cases concerning the value of $p_s$: either $p_s=t$ or not. We consider the first case where $p_s=t$.
Let $D_0 \mapsto D_1$ be the last transition before $L_0$ in which $s$ executes a $t$-rule. Thus we have $r_{st}=(You,0)$  and $p_s=t$ in $D_1$.

According to the $t$-rule $s$ executes in $D_0 \mapsto D_1$, we can deduce its $(p_s,m_s)$ values in $D_1$.
\begin{itemize}
\item $Write(t)$: $(p_s,m_s)=(t,0)$ in $D_1$;
\item $t$-$Reset$: $(p_s,m_s)=(null,0)$ in $D_1$. This is not possible since $s$ cannot execute any $t$-rule between $D_1$ and $L_0$ and since $p_s=t$ in $L_0$;
\item $Marriage(t)$: not possible since $t>s$.
\end{itemize}

For the last two rules, since $r_{st}=(You, 0)$ in $D_1$ and since $s$ does not perform a $Write(t)$ in $D_0\mapsto D_1$, then $r_{st}=(You, 0)$ in $D_0$. Moreover, according to the $Seduction(t)$ and the $t$-$Increase$ rules, $r_{st}=\FCTpointingto(s,t)$ in $D_0$ if $s$ executes one of these two rules in $D_0\mapsto D_1$. Thus,  $(p_s, m_s)=(t,0)$ in $D_0$.

\begin{itemize} 
\item $Seduction(t)$: not possible since we should have $p_s=null$ in $D_0$;
\item $t$-$Increase$:  $(p_s,m_s)=(t,1)$ in $D_1$.
\end{itemize}

Thus $(p_s,m_s) \in \{(t,0),(t,1)\}$ in $D_1$. Observe now that $s$ cannot execute a $x$-$Increase$ rule for any node $x$, between $D_1$ and $L_0$: by construction it cannot execute it for node $t$. Also, it cannot execute it for any other node, since $p_s=t$ in $D_1$ and it cannot be modified between $D_1$ and $L_0$ (because $s$ cannot execute any $t$-rule between these two configurations). Thus $s$ is not eligible for an $Increase$ rule between $D_1$ and $L_0$. 
We obtain $m_s \in \{0,1\}$ in $L_0$ and $(p_s,m_s) =(t,2)$ in $L_1$. Thus $s$ must execute a $t$-$Increase$ to set $m_s=2$ between $L_0$ and $L_1$ and the proof is done.

We now study the second case where $p_s \neq t$ in $L_0$. Since in $L_1$, $p_s=t$ by assumption, then $s$ must execute a $Seduction(t)$ rule in some transition $C_0 \mapsto C_1$ between $L_0$ and $L_1$ and so $m_s=0$ in $C_1$. Since $(p_s,m_s) =(t,2)$ in $L_1$, then $s$ must execute a $t$-$Increase$ to set $m_s=2$ between $C_1$ and $L_1$. Finally, the proof is done because $L_0\mapsto^+ C_1 \mapsto^+L_1$.
\end{proof}


\begin{lemma}\label{lem1:technical}
Let $({\minN},{\maxN})$ be an edge with ${\minN}<{\maxN}$. Let $\Exec$ be an execution. If $\Exec$ contains two configurations $L_0$ and $L_1$ with $L_0\mapsto^* L_1$ and such that:
\begin{itemize} 
\item ${\maxN}$ executed at least one ${\minN}$-rule before $L_{0}$;
\item $r_{{\maxN}{\minN}}=(Idle,0)$ in $L_{0}$;
\item  $(p_{\maxN},m_{\maxN})=({\minN},1)$ in $L_1$;
\end{itemize}
then, ${\maxN}$ executes a $s$-$Increase$ to set $m_{\minN}=1$ between $L_0$ and $L_1$.
\end{lemma}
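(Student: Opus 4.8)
The plan is to mirror the proof of Lemma~\ref{lem:technical}, swapping the roles of the two endpoints: there we tracked the smaller node $s$ driving $m_s$ up to $2$; here we track the larger node $t$ driving $m_t$ up to $1$, and the relevant register is $r_{ts}$, equal to $(Idle,0)$ in $L_0$. The decisive asymmetry is that, because $s<t$, the only rule by which $t$ can set $p_t$ to $s$ is $Marriage(s)$: $t$ can never execute $Seduction(s)$. As in Lemma~\ref{lem:technical}, I would split on the value of $p_t$ in $L_0$.

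\emph{Case $p_t=s$ in $L_0$.} Using the hypothesis that $t$ executed at least one $s$-rule before $L_0$, let $D_0\mapsto D_1$ be the last such transition. Since no $Write(s)$ occurs strictly after $D_1$, the register $r_{ts}$ (written only by $t$) is unchanged on $[D_1,L_0]$, so $r_{ts}=(Idle,0)$ already in $D_1$. Moreover, since $Marriage(s)$ is the only rule setting $p_t=s$ and it is an $s$-rule, the value $p_t=s$ cannot be created after $D_1$; as $p_t=s$ in $L_0$, it must already hold in $D_1$ (and throughout $[D_1,L_0]$). I would then examine the four possible $s$-rules of $t$, namely $Write(s)$, $Marriage(s)$, $s$-$Increase$ and $s$-$Reset$, and eliminate all but $Marriage(s)$: $s$-$Reset$ would give $p_t=null$ in $D_1$, contradicting $p_t=s$; and both $Write(s)$ and $s$-$Increase$ require $r_{ts}=\FCTpointingto(t,s)=(You,m_t)$ in $D_1$ (because $p_t=s$), contradicting $r_{ts}=(Idle,0)$. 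Hence the last $s$-rule is $Marriage(s)$ and $(p_t,m_t)=(s,0)$ in $D_1$. Because $p_t=s$ on all of $[D_1,L_0]$, any rule changing $m_t$ there is either inapplicable ($Seduction$ and $Marriage$ require $p_t=null$) or an $s$-rule ($s$-$Increase$ or $s$-$Reset$), and no $s$-rule occurs after $D_1$; hence $m_t=0$ in $L_0$. With $(p_t,m_t)=(s,1)$ in $L_1$, node $t$ must therefore perform an $s$-$Increase$ raising $m_t$ to $1$ between $L_0$ and $L_1$.

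\emph{Case $p_t\neq s$ in $L_0$.} Since $p_t=s$ in $L_1$ and $Marriage(s)$ is the only rule setting $p_t=s$, node $t$ executes $Marriage(s)$ in some transition $C_0\mapsto C_1$ with $L_0\mapsto^+ C_1\mapsto^+ L_1$, giving $(p_t,m_t)=(s,0)$ in $C_1$; as $m_t=1$ in $L_1$, node $t$ then performs an $s$-$Increase$ raising $m_t$ to $1$ between $C_1$ and $L_1$, which concludes this case.

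The main obstacle is the elimination in the first case: one must argue cleanly that the pair ``$p_t=s$ and $r_{ts}=(Idle,0)$'' is an outdated-register configuration that only $Marriage(s)$ can produce, and that $m_t$ cannot be disturbed on $[D_1,L_0]$ precisely because every rule touching it while $p_t=s$ is either inapplicable or an $s$-rule. A last small verification, needed in both cases, is that the increment bringing $m_t$ to $1$ is genuinely an $s$-$Increase$, i.e.\ that $p_t=s$ holds at that instant: this is true because $p_t$ cannot leave the value $s$ before $L_1$ without a $Reset$ that would reset $m_t$ to $0$, contradicting $m_t=1$ in $L_1$.
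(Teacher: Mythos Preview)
Your proof is correct and follows essentially the same approach as the paper's: the same case split on $p_t$ in $L_0$, the same use of the last $s$-rule before $L_0$ in the first case, and the same elimination of all $s$-rules except $Marriage(s)$ to force $(p_t,m_t)=(s,0)$ in $D_1$. Your presentation is in fact slightly cleaner, since you explicitly justify that $p_t=s$ and $r_{ts}=(Idle,0)$ already hold in $D_1$ before running the case analysis, and you add the small closing verification that the relevant $Increase$ is indeed an $s$-$Increase$, which the paper leaves implicit.
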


\begin{proof}
In $L_0$ there are two cases concerning the value of $p_t$: either $p_t=s$ or not. We consider the first case where $p_t=s$.
Let $D_0 \mapsto D_1$ be the last transition before $\Exec$ in which $t$ executes a $s$-rule. Thus we have $r_{ts}=(Idle,0)$  and $p_t=s$ in $D_1$.

According to the $s$-rule $t$ executes in $D_0 \mapsto D_1$, we can deduce its $m_t$ value in $D_1$:
\begin{itemize} 
\item $Write(u)$: not possible since this would imply $p_t=s$ in $D_0$ and then $r_{ts}.p=You$ in $D_1$. But $r_{ts}=(Idle,0)$ in $D_1$.
\item  $Reset(s)$: not possible since this would imply $p_t=\bot$ in $D_1$.
\item $Increase(s)$: not possible since this would implies in $D_0$: $\neg PRwriting(s)$ and $p_s=t$ and $r_{ts}=(Idle,0)$ (since only $m_s$ would be modified). And these three conditions are not compatible.  
\item $Seduction(s)$: $t$ cannot perform this rule since $s<t$.
\item $Marriage$: $(p_t,m_t)=(s,0)$ in $D_1$.
\end{itemize}

Thus $(p_t,m_t)=(s,0)$ and $r_{ts}=Idle,0$ in $D_1$.  Since this is the last $s$-rule $t$ executes before $L_0$ then observe that $p_t$ cannot be modified. Thus between $D_1$ and $L_0$ $t$ cannot execute $Seduction$, $Marriage$ or $Reset$ rules. Since $t$ is not eligible for $Write(s)$ then $r_{ts}$ remains equal to $(Idle,0)$ until $L_0$. Observe also that since $p_t=s$ between $D_1$ and $L_0$, $t$ cannot execute $Increase(x)$ for any node $x$. Thus $m_t=0$ remains $True$ between these configurations. This implies that $m_t=0$ in $L_0$.  We obtain that $m_t=0$ in $L_0$ and since $(p_t,m_t)=(s,1)$ in $L_1$ then $t$ must execute an $Increase(s)$ writing $m_t:=1$ between $L_0$ and $L_1$.

We now study the second case where $p_t \neq s$ in $L_0$. Since in $L_1$, $p_t=s$ by assumption, then $t$ must execute a $Marriage(s)$ rule in some transition $C_0 \mapsto C_1$ between $L_0$ and $L_1$ and so $m_t=0$ in $C_1$. Since $(p_t,m_t) =(s,1)$ in $L_1$, then $t$ must execute a $s$-$Increase$ to set $m_t:=1$ between $C_1$ and $L_1$. Finally, the proof is done because $L_0\mapsto^+ C_1 \mapsto^+L_1$.
\end{proof}

\begin{lemma}\label{grand2}
Let $({\minN},{\maxN})$ be an edge with ${\minN}<{\maxN}$. Let $\Exec$ be an execution. 
If $\Exec$ contains two transitions $A_0\mapsto A_1$, $A_2\mapsto A_3$ and a configuration $A_4$, with $A_1\mapsto^* A_2$ and $A_3 \mapsto^* A_4$  and such that : 
\begin{itemize} 
\item in $A_0\mapsto A_1$, $s$ executes a $t$-rule;
\item in $A_2\mapsto A_3$, $t$ executes a $Reset$ rule;
\item and in $A_4$, $(p_t,m_t)=(s,2)$;
\end{itemize}
then the edge $(s,t)$ is in a correct state in $A_4$.
\end{lemma}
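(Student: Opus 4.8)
The plan is to establish that the hypotheses force the edge $(s,t)$ into a configuration from which it must pass through a genuine \emph{correct state} before $m_t$ can reach $2$, and then invoke the earlier lemmas characterizing how $m$-values are built up. The key observation is that in $A_4$ we have $(p_t,m_t)=(s,2)$, which by the $Increase$ rule means $t$ wrote $2$ in $m_t$ via an $s$-$Increase$; by Lemma~\ref{lem:u:increase} (applied to $t$) each time $m_t$ is set to a nonzero value legitimately, there is a controlled history behind it. I would first argue that between the $Reset$ at $A_2\mapsto A_3$ (after which $p_t=null,m_t=0$) and the configuration $A_4$, node $t$ must execute a $Marriage(s)$ rule to reset $p_t=s$, and then one or two $s$-$Increase$ rules to raise $m_t$ from $0$ to $2$.

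Next I would track the register $r_{ts}$ and the local values of $s$ in parallel. Since $t$ reaches $(p_t,m_t)=(s,2)$, the guard of the last $s$-$Increase$ requires $r_{st}.p=You$ and the appropriate $r_{st}.m$ value, so at that moment $p_s=t$ and $m_s$ is consistent with a correct-state pattern. The fact that $s$ executed a $t$-rule in $A_0\mapsto A_1$ (the first hypothesis) supplies the prior history of $s$ needed to apply Lemma~\ref{lem:technical}: I would use it to show that $s$ must have performed a $t$-$Increase$ setting $m_s=2$ at some point, establishing that both endpoints have their $m$-variables and registers aligned in one of the admissible $(\alpha,\beta)$ patterns from Definition~\ref{def:updatedCS} or Definition~\ref{def:toUpdateCS}. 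Symmetrically, Lemma~\ref{lem1:technical} governs how $t$ raised $m_t$ from $0$ to $1$, pinning down the intermediate register values.

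The heart of the argument is then a careful bookkeeping: I would show that in $A_4$ the quadruplet $(m_s,m_t,r_{st}.m,r_{ts}.m)$ together with $p_s=t$, $p_t=s$ matches exactly one of the five updated or four toUpdate patterns, so that $(s,t)$ is in a correct state. The role of the intervening $Reset$ by $t$ is crucial: it guarantees that the $m_t=2$ in $A_4$ is freshly built from $0$ through legitimate $Increase$ moves rather than being a leftover from a ``bad'' initialization, which is precisely what could otherwise produce an incorrect state such as $(You,1,2)$ or $(You,2,0)$ that Lemma~\ref{obs:JC:edge} rules out only in stable configurations.

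The main obstacle I anticipate is handling the interleaving under the distributed daemon: between $A_3$ and $A_4$ the nodes $s$ and $t$ may each take several $Write$ and $Increase$ moves, possibly with $s$ abandoning and re-seducing $t$ in between. I would need to rule out, or correctly account for, the possibility that $s$ executes an intervening $Reset$ that would break the alignment; the tools for this are Lemma~\ref{lem:technical} and Lemma~\ref{lem1:technical}, which constrain exactly when $m$-values of $2$ and $1$ can appear given the register contents, and Lemma~\ref{lem:LP:entre2sed}, which forces a matching $Reset$ on the partner whenever $s$ re-seduces. Threading these constraints together to conclude that the endpoints land in a mutually consistent correct-state pattern, rather than merely in state $(You,\alpha,\beta)$ for arbitrary $\alpha,\beta$, is the delicate step.
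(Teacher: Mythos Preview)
Your outline has the right ingredients—identify the last $Reset$ by $t$, the subsequent $Marriage(s)$, the two $s$-$Increase$ moves raising $m_t$ to $2$, and the use of Lemma~\ref{lem:technical} to pin down when $s$ set $m_s=2$—but the concluding strategy is off. You propose to verify the correct-state quadruplet \emph{directly at $A_4$}. The hypotheses give you only $(p_t,m_t)=(s,2)$ there; you have no direct control over $p_s$, $m_s$, $r_{st}$, $r_{ts}$ at $A_4$, and tracking them forward through an arbitrary interleaving of $Write$ and $Increase$ moves would amount to redoing the case analysis of Lemma~\ref{lem:correctStateLP2} by hand.

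The paper's proof avoids this entirely: it locates the transition $F_0\mapsto F_1$ in which $s$ executes the $t$-$Increase$ setting $m_s=2$ (this is where Lemma~\ref{lem:technical} is invoked, with $L_0$ the first $s$-$Increase$ configuration of $t$ and $L_1$ the moment $s$ writes $r_{st}=(You,2)$). At $F_0$ all six quantities can be read off: $(p_s,m_s)=(t,1)$, $(p_t,m_t)=(s,1)$, $r_{st}=(You,1)$, $r_{ts}=(You,1)$, so $(s,t)$ is in the updated correct state $(You,1,1)$. Then Lemma~\ref{lem:correctStateLP2} propagates this to $A_4$. This also dissolves your anticipated obstacle about intervening $Reset$s by $s$: once a correct state is reached, Corollary~\ref{coro:correctStateLP} forbids $Reset$, $Seduction$, and $Marriage$ for both endpoints, so no re-seduction can occur and Lemma~\ref{lem:LP:entre2sed} is not needed. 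Likewise Lemma~\ref{lem1:technical} plays no role in this lemma (it is used in the companion Lemma~\ref{petit1}). The missing idea, then, is: prove correctness at a well-chosen \emph{intermediate} configuration and invoke preservation, rather than attempting to certify $A_4$ directly.
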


\begin{proof}
Let $C_0 \mapsto C_1$ be the last $Reset$ executed by $t$ between $A_2$ and $A_4$ (we can have $C_0=A_2$). In $C_1$, $p_t=null$ and in $A_4$, $p_t=s$, with $t>s$. Thus $t$ must execute a $Marriage(s)$ rule between $C_1$ and $A_4$ to set $p_t=s$

Let $C_2 \mapsto C_3$ be the last $Marriage(-)$ rule executed by $t$ between $C_1$ and $A_4$. Since $t$ does not perform any $Reset$ from $C_3$ to $A_4$ by construction, then $p_t$ remains contant from $C_3$ to $A_4$. $p_t=s$ in $A_4$, thus $p_t=s$ in $C_3$ and so $t$ performs a $Marriage(s)$ rule in $C_2 \mapsto C_3$.  

Observe that between $C_3$ and $A_4$, we have by construction: $t$ does not perform any $Reset$ nor $Marriage$ and $p_t=s$. Thus, $t$ cannot perform any $Seduction$ rule neither. So, the only rule $t$ can perform between $C_3$ and $A_4$ are $Write(-)$ and $s$-$Increase$ (since $p_t=s$). So the value of $m_t$ can only change by a +1 incrementation between $C_3$ and $A_4$ . 
In $C_3$, $m_t=0$ and in $A_4$, $m_t=2$. Thus, beside the $Write$ rule, $t$ executed exactly two $s$-$Increase$ between $C_3$ and $A_4$. 

Let $C_4 \mapsto C_5$ and $C_6 \mapsto C_7$ be these two $s$-$Increase$ executed by $t$, with $C_5\mapsto^* C_6$. In $C_4 \mapsto C_5$, $t$ sets $m_t=1$ and in $C_6 \mapsto C_7$, $t$ sets $m_t=2$. So in $C_4$, $m_t=0$ and in $C_6$, $m_t=1$. According to the $Increase$ rule, we have: in $C_4$, $r_{st}=(You, 0)$ and in $C_6$, $r_{st}=(You, 2)$. Thus, $s$ performs a $Write(t)$ rule between $C_4$ and $C_6$ to set $r_{st}=(You, 2)$. 

Let $D_0 \mapsto D_1$ be this transition. We thus have $(p_s,m_s)=(t,2)$ in $D_0$ with $C_4\mapsto^* D_0$. 

From Lemma~\ref{lem:technical} -- by setting $C_4=L_0$ and $D_0=L_1$ and considering that $s$ executed a $t$-rule in $A_0\mapsto A_1$ -- there exists a transition $F_0\mapsto F_1$ between $C_4$ and $D_0$ where $s$ executes a $t$-$Increase$ rule to set $m_s=2$.

We are now going to prove that the edge $(s,t)$ is in the updated correct state $(You, 1,1)$ in $F_0$. 

Since $s$ executes a $t$-$Increase$ in $F_0\mapsto F_1$ with $m_s=2$ in $F_1$, then in $F_0$ we have: $(p_s,m_s)=(t,1)$ and, according to the $Increase$ rule, $r_{ts}=(You,1)$. We also have $r_{st}=(You,1)$ otherwise $s$ would have executed a $Write(t)$ instead of an $Increase$ in $F_0\mapsto F_1$.

In $C_4\mapsto C_5$, $t$ executes a $s$-Increase setting $m_t=1$, so $(p_t,m_t)=(s,0)$ in $C_4$ and $(p_t,m_t)=(s,1)$ in $C_5$. Moreover, $t$ can only execute some $Write(-)$ rules between $C_5$ and $C_6$. Thus in $C_4$, $(p_t,m_t)=(s,0)$ and from $C_5$ to $C_6$, $(p_t,m_t)=(s,1)$. 

We have: $C_4\mapsto^* F_0\mapsto^* D_0 \mapsto^+ C_6$ and we know that $r_{st}=(You, 0)$ in $C_4$ and $r_{st}=(You, 1)$ in $F_0$. So $C_4\neq F_0$ and so $C_5\mapsto^* F_0\mapsto^* D_0 \mapsto^+ C_6$. So $(p_t,m_t)=(s,1)$ in $F_0$. 

We finally obtain for $F_0$: $(p_s,m_s)=(t,1)$, $(p_t,m_t)=(s,1)$, $r_{st}=(You, 1)$ and $r_{ts}=(You,1)$. Thus the edge $(s,t)$ is in the updated correct state $(You, 1,1)$ in $F_0$. As $F_0\mapsto^+C_6\mapsto^+A_4$ and according to lemma~\ref{lem:correctStateLP2}, the edge $(s,t)$ is in a correct state in $A_4$.
\end{proof}


\begin{lemma}\label{petit1}
Let $({\minN},{\maxN})$ be an edge with ${\minN}<{\maxN}$. Let $\Exec$ be an execution. 
If $\Exec$ contains two transitions $A_0\mapsto A_1$, $A_2\mapsto A_3$ and a configuration $A_4$, with $A_1\mapsto^* A_2$ and $A_3 \mapsto^* A_4$  and such that : 
\begin{itemize} 
\item in $A_0\mapsto A_1$, $t$ executes a $s$-rule;
\item in $A_2\mapsto A_3$, $s$ executes a $Reset$ rule;
\item and in $A_4$, $(p_s,m_s)=(t,1)$;
\end{itemize}
then the edge $(s,t)$ is in a correct state in $A_4$.
\end{lemma}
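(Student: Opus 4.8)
The plan is to mirror the proof of Lemma~\ref{grand2}, exchanging the roles of the two endpoints. There the high node $t$ climbed to $m_t=2$; here the low node $s$ is the one that is reset and then reaches $m_s=1$, i.e. the \emph{first} increment of the locking sequence, while $t$ is the node performing an $s$-rule. As in Lemma~\ref{grand2}, I would not argue about $A_4$ directly but instead exhibit an \emph{earlier} configuration already in an updated correct state and let Lemma~\ref{lem:correctStateLP2} carry correctness forward to $A_4$. The configuration I expect to reach is the updated correct state $(You,0,0)$, the symmetric counterpart of the $(You,1,1)$ found in Lemma~\ref{grand2}.

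First I would unwind the recent history of $s$. Let $C_0\mapsto C_1$ be the last $Reset$ by $s$ in $[A_2,A_4]$ (possibly $A_2\mapsto A_3$ itself), so $p_s=null$ in $C_1$ while $p_s=t$ in $A_4$; since $s<t$, the only rule that can restore $p_s=t$ is $Seduction(t)$, and I take $C_2\mapsto C_3$ to be the last such $Seduction(t)$ before $A_4$. From $C_3$ onward $s$ performs no $Reset$, $Seduction$ or $Marriage$, so $p_s=t$ is frozen and $m_s$ can only grow by $+1$ through a $t$-$Increase$; as $m_s=0$ in $C_3$ and $m_s=1$ in $A_4$, there is exactly one such increment, say $C_4\mapsto C_5$. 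Reading the $Increase$ guard for $s$ (with $s<t$, $m_s=0$) yields $r_{ts}=(You,1)$ and $r_{st}=(You,0)$ in $C_4$, and reading the $Seduction(t)$ guard yields $r_{ts}=(Idle,0)$ in $C_2$.

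Next I would track the write that turned $r_{ts}$ from $(Idle,0)$ in $C_2$ into $(You,1)$ in $C_4$: since only $t$ writes $r_{ts}$, let $G_0\mapsto G_1$ be the last $Write(s)$ by $t$ before $C_4$, which forces $(p_t,m_t)=(s,1)$ in $G_0$. I would then apply Lemma~\ref{lem1:technical} with $L_0=C_2$ and $L_1=G_0$: its register hypothesis $r_{ts}=(Idle,0)$ holds in $C_2$, its target $(p_t,m_t)=(s,1)$ holds in $G_0$, and the remaining hypothesis that $t$ executed an $s$-rule before $C_2$ is exactly what the transition $A_0\mapsto A_1$ provides (it precedes $C_2$ since $A_1\mapsto^*A_2\mapsto^*C_1\mapsto^*C_2$). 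The lemma then delivers a transition $H_0\mapsto H_1$, lying between $C_2$ and $G_0$, in which $t$ performs an $s$-$Increase$ setting $m_t=1$.

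Finally I would read off the state of $(s,t)$ in $H_0$. The $Increase$ guard for $t$ (with $t>s$, $m_t=0$) gives $p_t=s$, $m_t=0$, $r_{st}=(You,0)$ and $r_{ts}=(You,0)$ in $H_0$; and since $t$ cannot perform an $s$-$Increase$ in $C_2$ itself (there $r_{ts}=(Idle,0)$ cannot equal $\FCTpointingto(t,s)$), we get $C_3\mapsto^*H_0\mapsto^*C_4$, a window throughout which $p_s=t$ and $m_s=0$. Hence in $H_0$ we have $p_s=t$, $m_s=0$, i.e. $(s,t)$ is in the updated correct state $(You,0,0)$; as $H_0\mapsto^*A_4$, Lemma~\ref{lem:correctStateLP2} concludes that $(s,t)$ is in a correct state in $A_4$. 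I expect the main obstacle to be the interleaving bookkeeping rather than any guard computation --- specifically, pinning $H_0$ strictly inside the window $[C_3,C_4]$ (so that $m_s=0$ and $p_s=t$ there) and excluding $H_0=C_2$.
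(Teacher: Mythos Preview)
Your proposal is correct and follows essentially the same approach as the paper's own proof: unwind the history of $s$ to locate the last $Reset$, the last $Seduction(t)$, and the unique $t$-$Increase$; trace the $Write(s)$ by $t$ that produced $r_{ts}=(You,1)$; invoke Lemma~\ref{lem1:technical} to find the $s$-$Increase$ by $t$; read off the updated correct state $(You,0,0)$ at that moment; and propagate via Lemma~\ref{lem:correctStateLP2}. The only cosmetic differences are names ($G_0,H_0$ versus the paper's $D_0,F_0$) and that you take the \emph{last} $Write(s)$ before $C_4$ while the paper simply picks \emph{a} $Write(s)$ setting $r_{ts}=(You,1)$ --- both choices yield $(p_t,m_t)=(s,1)$ and lie in $(C_2,C_4)$, so the arguments coincide.
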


\begin{proof}
Let $C_0 \mapsto C_1$ be the last $Reset$ executed by $s$ between $A_2$ and $A_4$ (we can have $C_0=A_2$). In $C_1$, $p_s=null$ and in $A_4$, $p_s=t$, with $s<t$. Thus $s$ must execute a $Seduction(t)$ rule between $C_1$ and $A_4$ to set $p_s=t$

Let $C_2 \mapsto C_3$ be the last $Seduction(-)$ rule executed by $s$ between $C_1$ and $A_4$. Since $s$ does not perform any $Reset$ from $C_3$ to $A_4$ by construction, then $p_s$ remains contant from $C_3$ to $A_4$. $p_s=t$ in $A_4$, thus $p_s=t$ in $C_3$ and so $s$ performs a $Seduction(t)$ rule in $C_2 \mapsto C_3$.  

Observe that between $C_3$ and $A_4$, we have by construction: $s$ does not perform any $Reset$ nor $Seduction$ and $p_s=t$. Thus, $s$ cannot perform any $Marriage$ rule neither. So, the only rule $s$ can perform between $C_3$ and $A_4$ are $Write(-)$ and $t$-$Increase$ (since $p_s=t$). So the value of $m_s$ can only change by a +1 incrementation between $C_3$ and $A_4$ . 
In $C_3$, $m_s=0$ and in $A_4$, $m_s=1$. Thus, beside the $Write$ rule, $s$ executed exactly one $t$-$Increase$ between $C_3$ and $A_4$. 

Let $C_4 \mapsto C_5$ be this $t$-$Increase$ executed by $s$. In $C_4 \mapsto C_5$, $s$ sets $m_s=1$. So, in $C_4$, $(p_s,m_s)=(t,0)$ and then $r_{ts}= (You, 1)$. Moreover, according to the $Seduction(t)$ rule, in $C_2$,  $r_{ts}= (Idle, 0)$. So there exists a transition $D_0\mapsto D_1$ between $C_2$ and $C_4$ where $t$ executes a $Write(s)$ rule to set $r_{ts}=(You, 1)$. Thus, in $D_0$, $(p_t,m_t)=(s,1)$.

From Lemma~\ref{lem1:technical} -- by setting $C_2=L_0$ and $D_0=L_1$ and considering that $t$ executed a $s$-rule in $A_0\mapsto A_1$ -- there exists a transition $F_0\mapsto F_1$ between $C_2$ and $D_0$ where $t$ executes a $s$-$Increase$ rule to set $m_t=1$.

We are now going to prove that the edge $(s,t)$ is in the updated correct state $(You, 0,0)$ in $F_0$. 

Since $t$ executes a $s$-$Increase$ in $F_0\mapsto F_1$ with $m_t=1$ in $F_1$, then in $F_0$ we have: $(p_t, m_t) = (s, 0)$ and, according to the $Increase$ rule, $r_{st}=(You,0)$. We also have $r_{ts}=(You, 0)$ otherwise $t$ would have perform a $Write(s)$ instead of an $s$-$Increase$ in $F_0\mapsto F_1$. 

Recall, that $C_2 \mapsto^* F_0 \mapsto^+ D_0$. We already know that $r_{ts}=(Idle, 0)$ in $C_2$, so $C_3 \mapsto^* F_0 \mapsto^+ D_0$.

Moreover, we know that between $C_3$ and $C_4$, $s$ does not perform any rule but some $Write(-)$ rule. So $(p_s,m_s)$ remains constant between $C_3$ and $C_4$. $s$ executes a $Seduction(t)$ rule in $C_2\mapsto C_3$, so $(p_s,m_s)=(t,0)$ in $C_3$. Moreover, $C_3\mapsto^* D_0 \mapsto^+ C_4$, so $(p_s,m_s)=(t,0)$ in $D_0$. 

We finally obtain for $F_0$: $(p_t,m_t)=(s,0)$, $(p_s,m_s)=(t,0)$, $r_{ts}=(You, 0)$ and $r_{st}=(You,0)$. Thus the edge $(s,t)$ is in the updated correct state $(You, 0,0)$ in $F_0$. As $F_0\mapsto^+C_4\mapsto^+A_4$ and according to lemma~\ref{lem:correctStateLP2}, the edge $(s,t)$ is in a correct state in $A_4$.

\end{proof}

\begin{theorem}{lem:mvaut2}
Let $(u,v)$ be an edge. Let $\Exec$ be an execution. 
If $\Exec$ contains two transitions $A_0\mapsto A_1$, $A_2\mapsto A_3$ and a configuration $A_4$, with $A_1\mapsto^* A_2$ and $A_3 \mapsto^* A_4$  and such that : 
\begin{itemize} 
\item in $A_0\mapsto A_1$, $v$ executes a $u$-rule;
\item in $A_2\mapsto A_3$, $u$ executes a $Reset$ rule;
\item and in $A_4$, $(p_u,m_u)=(v,2)$;
\end{itemize}
then the edge $(min(u,v),max(u,v))$ is in a correct state in $A_4$.
\end{theorem}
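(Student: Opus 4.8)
The plan is to reduce this theorem to the two asymmetric situations already settled in Lemmas~\ref{grand2} and~\ref{petit1}, by splitting on the relative order of $u$ and $v$. Write $s=\min(u,v)$ and $t=\max(u,v)$, so that the edge to analyse is $(s,t)$ with $s<t$; the hypotheses are phrased for the generic endpoints $u$ and $v$, and translating them into the language of $s$ and $t$ depends on which of the two plays the role of $u$.

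First I would treat the case $u>v$, i.e. $u=t$ and $v=s$. Here the three hypotheses become exactly: $s$ executes a $t$-rule in $A_0\mapsto A_1$; $t$ executes a $Reset$ in $A_2\mapsto A_3$; and $(p_t,m_t)=(s,2)$ in $A_4$. These are verbatim the hypotheses of Lemma~\ref{grand2}, with the same chronological constraints $A_1\mapsto^* A_2$ and $A_3\mapsto^* A_4$, so Lemma~\ref{grand2} directly yields that $(s,t)$ is in a correct state in $A_4$, which is the desired conclusion.

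The case $u<v$, i.e. $u=s$ and $v=t$, is the delicate one: the hypotheses now give $t$ executing an $s$-rule in $A_0\mapsto A_1$, $s$ executing a $Reset$ in $A_2\mapsto A_3$, and $(p_s,m_s)=(t,2)$ in $A_4$. Lemma~\ref{petit1} is tailored to the smaller endpoint, but its target configuration requires $m_s=1$, not $m_s=2$, so it cannot be applied to $A_4$ directly. The key step is therefore to manufacture an intermediate configuration with $m_s=1$. Since $s$ executes a $Reset$ in $A_2\mapsto A_3$ we have $m_s=0$ in $A_3$, while $m_s=2$ in $A_4$; as $m_s$ can only be raised by $+1$ through an $Increase$ (whose guard forbids it from $m_s=2$) and can otherwise only be reset to $0$ by a $Reset$, $Seduction$ or $Marriage$, I would select the \emph{last} transition $A_4'\mapsto Y$ in $A_3\mapsto^* A_4$ at which $s$ sets $m_s:=2$. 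This transition is necessarily an $Increase$, and by maximality $m_s$ stays equal to $2$ from $Y$ up to $A_4$ (otherwise a later $Increase$ would reset it to $2$, contradicting the choice); hence no $Reset$, $Seduction$ or $Marriage$ occurs on that interval, $p_s$ remains constant and equal to its value $t$ in $A_4$, and the $Increase$ is in fact a $t$-$Increase$. Consequently $(p_s,m_s)=(t,1)$ in $A_4'$ with $A_3\mapsto^* A_4'$. Applying Lemma~\ref{petit1} with $A_4'$ playing the role of its last configuration shows that $(s,t)$ is in a correct state in $A_4'$, and Lemma~\ref{lem:correctStateLP2}, stating that a correct state is preserved forever, propagates this to $A_4$.

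The main obstacle is precisely this last case: the $m=2$ versus $m=1$ mismatch forces me to exhibit and justify the intermediate $m_s=1$ configuration, and to argue that between it and $A_4$ nothing can destroy the freshly established correct state, which is why the invocation of the invariance Lemma~\ref{lem:correctStateLP2} is essential rather than cosmetic. The rest is a faithful bookkeeping translation between the $(u,v)$ and $(s,t)$ namings.
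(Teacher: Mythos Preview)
Your proof is correct and follows essentially the same strategy as the paper's: split on the order of $u$ and $v$, invoke Lemma~\ref{grand2} directly when $u>v$, and for $u<v$ locate an intermediate configuration with $(p_s,m_s)=(t,1)$ to which Lemma~\ref{petit1} applies, then propagate the correct state to $A_4$ via Lemma~\ref{lem:correctStateLP2}. The only cosmetic difference is how the $m_s=1$ configuration is found: the paper first isolates the last $Reset$ and last $Seduction(-)$ of $s$ before $A_4$ and then takes the first of the two ensuing $v$-$Increase$ moves, whereas you select the last transition setting $m_s:=2$ and read off $(p_s,m_s)=(t,1)$ just before it; both routes are equally valid.
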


\begin{proof}
 If $u>v$, we conclude immediately using Lemma~\ref{grand2}. Thus assume that $u<v$.
 
  Let $C_0 \mapsto C_1$ be the last $Reset$ executed by $u$ between $A_2$ and $A_4$ (we can have $C_0=A_2$). In $C_1$, $p_u=null$ and in $A_4$, $p_u=v$, with $u<v$. Thus $u$ must execute a $Seduction(v)$ rule between $C_1$ and $A_4$ to set $p_u=v$

Let $C_2 \mapsto C_3$ be the last $Seduction(-)$ rule executed by $u$ between $C_1$ and $A_4$. Since $u$ does not perform any $Reset$ from $C_3$ to $A_4$ by construction, then $p_u$ remains constant from $C_3$ to $A_4$. Since $p_u=v$ in $A_4$, then $p_u=v$ in $C_3$ and so $u$ performs a $Seduction(v)$ rule in $C_2 \mapsto C_3$. 

Observe that between $C_3$ and $A_4$, we have by construction: $u$ does not perform any $Reset$ nor $Seduction$ and $p_u=v$. Thus, $u$ cannot perform any $Marriage$ rule neither. So, the only rule $u$ can perform between $C_3$ and $A_4$ are $Write(-)$ and $v$-$Increase$ (since $p_u=v$). So the value of $m_u$ can only change by a +1 incrementation between $C_3$ and $A_4$ . 
In $C_3$, $m_u=0$ and in $A_4$, $m_u=2$. Thus, beside the $Write$ rule, $u$ executed exactly two $v$-$Increase$ between $C_3$ and $A_4$. Let $C_4 \mapsto C_5$ be the transition in which $u$ executes the first such $v$-$Increase$ rule with $C_4 \geq C_3$. By definition of this rule, we have that in $C_5$, $(p_u,m_u)=(v,1)$ holds. We now apply Lemma~\ref{petit1}. Observe that $C_5$ is after transition $A_0 \mapsto A_1$ in which $v$ executes a $u$-rule and after $A_2 \mapsto A_3$ in which $u$ executes a $Reset$ rule. Thus applying Lemma~\ref{petit1} with transitions $A_1 \mapsto A_2$ and $A_3 \mapsto A_4$ and configuration $C_5$ we get that edge $(u,v)$ is in a correct state in $C_5$. By Lemma~\ref{lem:correctStateLP2}, this implies that configuration $A_4$ is in a correct state.

\end{proof}

\begin{lemma}\label{lem:lp:write}
Let $({\minN},{\maxN})$ be an edge with ${\minN}<{\maxN}$.
Let $\Exec$ be an execution containing the two following transitions:
\begin{itemize} 
\item $C_0\mapsto C_1$ where $s$ executes a $Seduction(t)$ rule;
\item $C_2\mapsto C_3$ where $s$ executes a $t$-$Reset$ rule; 
\item with $C_1\mapsto^* C_2$ and with $s$ does not execute any $Reset$ between $C_1$ and $C_2$. 
\end{itemize}
Then, $t$ executes a $Write(s)$ rule between $C_0$ and $C_2$.
\end{lemma}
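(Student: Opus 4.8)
The plan is to read off from the two triggering rules everything we need about the register $r_{ts}$, which is the only register $t$ can modify (via a $Write(s)$ rule). First I would record the initial fact: since $s$ executes $Seduction(t)$ in $C_0\mapsto C_1$, the guard of that rule forces $r_{ts}=(Idle,0)$ in $C_0$; in particular $r_{ts}.p=Idle\neq You$ in $C_0$. I would also note that $(p_s,m_s)=(t,0)$ in $C_1$ and that $p_s=t$ is preserved throughout $C_1\mapsto^* C_2$: the only rules that can change $p_s$ are $Seduction$, $Marriage$ and $Reset$; the first two require $p_s=null$ and so cannot fire while $p_s=t$, and $s$ executes no $Reset$ on this interval by hypothesis. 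Since $s$ executes a $t$-$Reset$ in $C_2$, we have $p_s=t$ together with $\PRaba(s)\lor\PRReset(s)$ holding in $C_2$.

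Next I would instantiate the two reset predicates under the standing assumptions $s<t$ and $p_s=t$. The predicate $\PRReset(s)$ contains the conjunct $r_{ts}.p=You$, so whenever it holds in $C_2$ we have $r_{ts}.p=You$ there. Using $s<t$, the predicate $\PRaba(s)$ reduces to the disjunction of $(r_{ts}.p\neq You\land m_s\neq 0)$ and $(r_{ts}=(Other,2))$. This splits the argument into a part governed by the register value in $C_2$ and a part governed by $m_s$.

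The two easy cases are those in which $r_{ts}.p\neq Idle$ in $C_2$: namely when $\PRReset(s)$ holds (giving $r_{ts}.p=You$) or when the second disjunct of $\PRaba(s)$ holds (giving $r_{ts}.p=Other$). In either case $r_{ts}.p$ differs in $C_0$ and $C_2$; since $r_{ts}$ is written only by $t$, through the $Write(s)$ rule, $t$ must execute a $Write(s)$ somewhere between $C_0$ and $C_2$, and we are done.

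The main obstacle is the remaining case, the first disjunct of $\PRaba(s)$, where $m_s\neq 0$ in $C_2$ but $r_{ts}.p$ may still equal $Idle$ there, so the register value in $C_2$ alone carries no information. Here I would argue indirectly through the locking mechanism: since $m_s=0$ in $C_1$, $m_s\neq 0$ in $C_2$, and on $C_1\mapsto^* C_2$ the value $m_s$ can be changed only by a $Reset$ (excluded) or by an $Increase$ which, as $p_s=t$ throughout, is necessarily a $t$-$Increase$, there is a first transition $E\mapsto E'$ on this interval in which $s$ performs a $t$-$Increase$ raising $m_s$ from $0$ to $1$. The guard of the $Increase$ rule contains the conjunct $r_{ts}.p=You$, so $r_{ts}.p=You$ holds in $E$. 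Comparing with $r_{ts}.p=Idle$ in $C_0$, and again using that only $t$ writes $r_{ts}$, $t$ must execute a $Write(s)$ between $C_0$ and $E$, hence between $C_0$ and $C_2$. This exhausts all cases and proves the claim.
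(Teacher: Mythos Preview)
Your proof is correct and follows essentially the same approach as the paper. The paper splits on whether $r_{ts}=(Idle,0)$ in $C_2$ (if not, $t$ wrote; if so, deduce $m_s\neq 0$ from the $Reset$ guard and locate a $t$-$Increase$ forcing $r_{ts}.p=You$ earlier), whereas you split on which disjunct of $\PRaba(s)\lor\PRReset(s)$ fires; but these two case splits collapse to the same dichotomy, and the substantive argument in the hard case---tracking $m_s$ back to a $t$-$Increase$ whose guard witnesses a prior change in $r_{ts}$---is identical.
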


\begin{proof}
Since $s$ executes a $Seduction(t)$ rule in $C_0\mapsto C_1$, then $r_{ts}=(Idle, 0)$ in $C_0$. 
Since $s$ executes a $t$-$Reset$ rule in $C_2\mapsto C_3$, then $r_{ts}=(Idle, 0)$ in $C_0$. 
Either $r_{ts}\neq (Idle, 0)$ in $C_2$ and so the proof is done, or $r_{ts} = (Idle, 0)$ in $C_2$. 

In the second case, $m_s=0$ in $C_1$ (by the $Seduction$ rule) and $m_s\neq 0$ in $C_2$. So, $s$ executes an $Increase$ rule between $C_1$ and $C_2$. Let $C_3\mapsto C_4$ be the transition where $s$ does so for the first time. In $C_1$, $p_s=t$ and $m_s=0$, by the $Seduction$ rule. Moreover, by assumption, $s$ does not execute any $Reset$ from $C_1$ to $C_2$. Thus, $s$ can only perform some $Write(-)$ rule or a $t$-$Increase$ rule from $C_1$ to $C_2$. And so, in $C_3$, $p_s=t$ and $m_s=0$ (since $C_3\mapsto C_4$ is the first $Increase$ from $s$). $s$ performs then $m_s:=1$ in $C_3\mapsto C_4$ and so $r_{ts}=(You, 1)$ in $C_3$. Since $r_{ts}=(Idle, 0)$ in $C_0$, the proof is done. 
\end{proof}

\begin{lemma}\label{lem:jc:reset}
Let  $(\minN,\maxN)$ be an edge. Let $\Exec$ be an execution containing two transitions $A_0 \mapsto A_1$ and $A_{2}\mapsto A_{3}$ with $A_{1} \mapsto^* A_{2}$ where $\maxN$ executes a $write(\minN)$ rule. If  we have  
$(p_{\maxN},m_{\maxN}) = (v,2)$ with $v \neq \minN $ in $A_{0}$ and $ A_{2}$,
then $t$ executes a $Reset$ rule and $v$ executes a $Write(\maxN)$ rule between $A_0$ and $A_2$.
\end{lemma}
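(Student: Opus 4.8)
The plan is to read everything off the rule guards, using as the only global input the two facts that $t$ is eligible for $Write({\minN})$ both at $A_0$ and at $A_2$, and that $(p_t,m_t)=(v,2)$ with $v\neq {\minN}$ at both endpoints. I would split the conclusion into its two halves and handle them in turn.

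\emph{Step 1 --- $t$ executes a $Reset$ between $A_0$ and $A_2$.} Since $p_t=v\notin\{null,{\minN}\}$ and $m_t=2$ in $A_0$, the write $A_0\mapsto A_1$ sets $r_{t{\minN}}=\FCTpointingto(t,{\minN})=(Other,2)$, so $r_{t{\minN}}=(Other,2)$ in $A_1$. On the other hand, $t$ being eligible for $Write({\minN})$ in $A_2$ means $r_{t{\minN}}\neq\FCTpointingto(t,{\minN})=(Other,2)$ in $A_2$. As only $t$ may write $r_{t{\minN}}$ and its value differs between $A_1$ and $A_2$, there is a transition $W_0\mapsto W_1$ with $A_1\mapsto^* W_0$ and $W_1\mapsto^* A_2$ in which $t$ executes $Write({\minN})$ and installs a value $\neq(Other,2)$; hence $\FCTpointingto(t,{\minN})\neq(Other,2)$, and therefore $(p_t,m_t)\neq(v,2)$, in $W_0$. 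So $(p_t,m_t)$ leaves the value $(v,2)$ somewhere between $A_0$ and $W_0$. But from the state $(v,2)$ node $t$ is eligible neither for $Seduction(-)$ nor $Marriage(-)$ (they require $p_t=null$; see Observation~\ref{lem:LP:Exclusive2}), nor for $Increase$ (its guard requires $m_t\in\{0,1\}$), while $Write$ never alters $(p_t,m_t)$; the only rule able to change this pair is $Reset$. Thus the first move of $t$ modifying $(p_t,m_t)$ after $A_0$ is a $Reset$, occurring before $W_0$ and hence between $A_0$ and $A_2$.

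\emph{Step 2 --- $v$ executes a $Write(t)$ between $A_0$ and $A_2$.} Let $R_0\mapsto R_1$ be the last $Reset$ of $t$ before $A_2$; it exists by Step~1 and lies after $A_0$, with $(p_t,m_t)=(null,0)$ in $R_1$ and no $Reset$ of $t$ between $R_1$ and $A_2$. Since $p_t=v\neq null$ in $A_2$, node $t$ must restore its pointer to $v$ through exactly one $Seduction(v)$ or $Marriage(v)$ move $S_0\mapsto S_1$ (a second one would require $p_t=null$ again, hence another $Reset$); afterward $(p_t,m_t)=(v,0)$, and since no $Reset$, $Seduction$ or $Marriage$ is available anymore, $m_t$ climbs from $0$ to $2$ only through $Increase$ moves, each of which requires $r_{vt}.p=You$. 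If $S_0\mapsto S_1$ is a $Seduction(v)$ (so $t<v$), then $r_{vt}=(Idle,0)$ in $S_0$, so $r_{vt}.p$ must change to $You$ before the first $Increase$, forcing $v$ to execute $Write(t)$. If it is a $Marriage(v)$ (so $t>v$), then $r_{vt}=(You,0)$ in $S_0$, while the $Increase$ raising $m_t$ from $1$ to $2$ requires, as $t>v$, that $r_{vt}.m=2$, a value differing from the $(You,0)$ left by the $Marriage$; again $v$ must have executed $Write(t)$. In either case this $Write(t)$ by $v$ happens between $S_0$ and $A_2$, hence between $A_0$ and $A_2$.

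The delicate point is Step~2: one has to match the asymmetric $m$-conditions of the $Increase$ guard, which depend on whether $t<v$ or $t>v$, against the register value that the triggering $Seduction$ or $Marriage$ leaves in $r_{vt}$, and argue that in both parities the value $v$ needs $t$ to observe differs from the one just installed, so $v$ is compelled to write. Step~1 is comparatively routine once one notices that $(v,2)$ is a ``frozen'' local state from which only $Reset$ can move $t$.
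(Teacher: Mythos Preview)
Your proof is correct and follows essentially the same two-step decomposition as the paper's own argument: first force a $Reset$ of $t$ by comparing the register value $r_{t{\minN}}=(Other,2)$ written at $A_1$ with the differing value at $A_2$, then from the last such $Reset$ track the unique $Seduction(v)$/$Marriage(v)$ and the subsequent $Increase$ moves to exhibit a change in $r_{vt}$ that only $v$ can produce. Your Step~1 is in fact slightly cleaner than the paper's---you observe directly that from the local state $(p_t,m_t)=(v,2)$ the only enabled rule that alters $(p_t,m_t)$ is $Reset$, whereas the paper splits into the two sub-cases $p_t\neq v$ versus $m_t<2$ at the intermediate write---but the underlying idea and the Step~2 case analysis on $t\lessgtr v$ are the same.
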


\begin{proof}

Since $\maxN$ executes a $Write(\minN)$ rule in  $A_0 \mapsto A_1$ and $A_{2}\mapsto A_{3}$, then $r_{ts}=(Other, 2)$ 
in $A_{1}$ and $A_{3}$. By definition of $Write(\minN)$ rule, we also have $r_{ts}\neq (Other, 2)$ in $A_{2}$.   
Thus $\maxN$ must execute a $Write(\minN)$ rule   between $A_1$ and $A_2$ to modify $r_{ts}$. Let $ F   
\mapsto F'$  be the last transition  before $A_{2}$ in which $t$ executes such a rule.
Thus in $F$, either $(p_{t},m_{t}) = (v,m)$ with $m <2$, or $p_{t} \neq v$.  

First assume that $(p_{t},m_{t}) = (v,m)$ with $m <2$.   Thus, $m_{t} = 2$ in $A_{0}$ and $m_{t} < 2$ in $F$ .
There is only one way to decrease the value of an $m$ variable: to write 0 by executing a $Reset$  rule. On the second case,  $p_{t} \neq v$ in $F$ and $p_{t} = v$ in $A_{0}$. The only way to change the value of a non-null $p$ variable is that $t$ executes a $Reset$ rule between $A_{0}$ and $F$. Then in both cases, $t$ executes  a $Reset$  rule between $A_{0}$ and $A_2$.

Let $C_0 \mapsto C_1$ be the last $Reset$ executed by $t$ between $A_1$ and $A_2$ (we can have $C_0=A_1$). In $C_1$, $m_t=0$ and in $A_2$, $m_t=2$. Thus $t$ must execute an $Increase$ rule between $C_1$ and $A_2$ to set $m_t=2$.

Let $C_5 \mapsto C_6$ be the last $Increase$ rule executed by $t$ between $C_1$ and $A_2$. Since $t$ does not perform any $Reset$ from $C_6$ to $A_3$ by construction, then $p_t$ remains contant from $C_6$ to $A_3$. $p_t=v$ in $A_3$, thus $p_t=v$ in $C_5$ and so $t$ performs a $v$-$Increase$ rule in $C_5 \mapsto C_6$.  

First we assume that $v < t$. Let $C_2 \mapsto C_3$ be the first $Marriage(-)$ rule executed by $t$ between $C_1$ and $C_5$. Since $t$ does not perform any $Reset$ from $C_3$ to $C_{5}$ by construction, then $p_t$ remains contant from $C_3$ to $C_{5}$. $p_t=v$ in $C_5$, thus $p_t=v$ in $C_3$ and so $t$ performs a $Marriage(v)$ rule in $C_2 \mapsto C_3$.  Thus in $C_{2}$, $r_{vt}=(You,0)$, by the $Marriage(v)$ rule.  In $C_{5}$, $r_{vt}=(You,2)$, by the $v$-$Increase$ rule.  Thus,   $v$ has  performed at least once  $Write(t)$ between $C_2$ and $C_5$ and the lemma holds.

Second we assume that $t < v$. Let $C_2 \mapsto C_3$ be the first $Seduction(-)$ rule executed by $t$ between $C_1$ and $C_5$. Since $t$ does not perform any $Reset$ from $C_3$ to $C_{5}$ by construction, then $p_t$ remains contant from $C_3$ to $C_{5}$. $p_t=v$ in $C_5$, thus $p_t=v$ in $C_3$ and so $t$ performs a $Seduction(v)$ rule in $C_2 \mapsto C_3$.  Thus in $C_{2}$, $r_{vt}=(Idle,0)$, by the $Seduction(v)$ rule, and  in $C_{5}$, $r_{vt}=(You,1)$, by the $v$-$Increase$ rule.  Thus, $v$ has  performed at least once $Write(t)$ between $C_2$ and $C_5$ and this concludes the proof.
\end{proof}

\begin{lemma}\label{lemma:nb:seduction}
Let $({\minN},{\maxN})$ be an edge with ${\minN}<{\maxN}$. 
Let $\Exec$ be an execution containing three transitions $D_0 \mapsto D_1$, $D_{2}\mapsto D_{3}$ and $D_{4}\mapsto D_{5}$ with $D_{1} \mapsto^* D_{2}$ and $D_3\mapsto^* D_4$ and where $\minN$ executes a $Seduction(\maxN)$ rule. Then there exists a transition $D\mapsto D'$ between $D_2$ and $D_4$ where $\maxN$ executes a $Write(\minN)$ rule and with  in $D$: $p_t\neq null$ and $m_t=2$.
\end{lemma}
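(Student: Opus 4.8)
The plan is to locate the desired write just before the moment $s$ abandons its second seduction. Since $s$ executes $Seduction(t)$ in $D_2\mapsto D_3$, it has $p_s=t$ in $D_3$, while the guard of the third seduction forces $p_s=null$ in $D_4$. The only rule that can set $p_s$ from $t$ back to $null$ is a $t$-$Reset$ (a $Marriage(t)$ is impossible because $s<t$, and a $Seduction$ needs $p_s=null$), so $s$ must execute a $t$-$Reset$ somewhere between $D_3$ and $D_4$. Let $R_0\mapsto R_1$ be the \emph{first} such reset; then $s$ does not reset on $[D_3,R_0]$, so throughout this interval $p_s=t$ and, by Observation~\ref{lem:LP:Exclusive2}, $s$ executes only $Write$ and $t$-$Increase$ rules. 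Hence $m_s$ is non-decreasing and equals its $D_3$-value, namely $0$.

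The heart of the proof is the claim that in $R_0$ one has $r_{ts}.m=2$ and $r_{ts}.p\in\{You,Other\}$. Granting the claim the conclusion is immediate: by the second-seduction guard, $r_{ts}=(Idle,0)$ in $D_2$, and only $t$ may write $r_{ts}$; therefore the last write of $r_{ts}$ before $R_0$ is a $Write(s)$ by $t$, at some transition $D\mapsto D'$ with $D_2\mapsto^*D\mapsto^*R_0\mapsto^*D_4$, setting $r_{ts}$ to the value $(\ast,2)$ it still carries in $R_0$. Since $\FCTpointingto(t,s)=(\ast,2)$ with $\ast\neq Idle$ forces $p_t\neq null$ and $m_t=2$ in $D$, this is exactly the required transition.

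To prove the claim I would do a case analysis on $m_s$ together with the trigger $\PRaba(s)\lor\PRReset(s)$ that holds in $R_0$ (recall the $Reset$ guard also yields $r_{st}=\FCTpointingto(s,t)=(You,m_s)$). When $m_s=0$, inspecting $\PRaba(s)$ and $\PRReset(s)$ with $s<t$ shows that $s$ can reset only if $r_{ts}\in\{(You,2),(Other,2)\}$, which proves the claim. When $m_s\geq 1$, node $s$ performed at least one $t$-$Increase$, which requires $s$ to have first written $(You,0)$ into $r_{st}$ and $t$ to have written $(You,1)$ into $r_{ts}$ at a moment where $p_t=s$ and $m_t=1$. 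The key structural fact is that $t$ can reach the state $(p_t,m_t)=(s,1)$ only \emph{after} $s$'s first $Write(t)$ of the interval: since $r_{st}=(Idle,0)$ in $D_2$ and $D_3$, before that write a $Marriage(s)$ by $t$ (which needs $r_{st}=(You,0)$) is impossible, so from the instant $t$ is in state $(s,1)$ onward we have $r_{st}.p=You$. But then $t$ can neither reset from $(s,1)$ (its $\PRaba$ disjunct needs $r_{st}.p\neq You$, and $\PRReset$ has no case for $m_t=1$ when $t>s$) nor move $p_t$ away from $s$; its only state-changing move is an $s$-$Increase$ to $(s,2)$, which needs $r_{st}.m=2$ and is then broadcast as $r_{ts}=(You,2)$. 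This rules out every ``hard'' trigger, i.e.\ $s$ resetting with $r_{ts}=(You,0)$, or with $r_{ts}.p\neq You$ and $r_{ts}.m<2$: each would force $t$ to decrement $m_t$ or to abandon $s$ from state $(s,1)$, which is impossible. Hence the surviving triggers all satisfy $r_{ts}.m=2$, establishing the claim.

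The main obstacle is precisely this case analysis. Read/write atomicity lets $r_{ts}$ and $r_{st}$ lag behind the local variables, so one must argue very carefully about the \emph{order} of the writes of $s$ and $t$ inside $[D_2,R_0]$, and exploit the register-consistency preconditions $r_{up_u}=\FCTpointingto(u,p_u)$ of the $Increase$ and $Reset$ rules to forbid the stale configurations that would otherwise let $s$ break off the marriage without $t$ ever broadcasting $m_t=2$. The clean boundary values $r_{st}=r_{ts}=(Idle,0)$ in $D_2$ (given by the seduction guard, and reinforced by the resets guaranteed between consecutive seductions through Lemma~\ref{lem:LP:entre2sed}) are what make these staleness arguments go through.
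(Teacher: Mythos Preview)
Your overall strategy coincides with the paper's: both locate the first $t$-$Reset$ by $s$ after $D_3$ (your $R_0$, the paper's $C_2$), split on $m_s$ there, and in the case $m_s=0$ read off $r_{ts}\in\{(You,2),(Other,2)\}$ directly from the $Reset$ guard, then recover the required $Write(s)$ by $t$ from the change of $r_{ts}$ since $D_2$. That part is correct and essentially identical to the paper.

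The difference is the $m_s\geq 1$ case. The paper does not attempt to prove your register claim in that case; it shows the case is \emph{impossible}. Concretely, it first exploits the \emph{first} seduction $D_0\mapsto D_1$: Lemma~\ref{lem:lp:write} yields a $Write(s)$ by $t$, and Lemma~\ref{lem:u:increase} a subsequent $Reset$ by $s$, both strictly before $D_2$. These are precisely the hypotheses of Lemma~\ref{petit1}, so the instant $(p_s,m_s)=(t,1)$ appears between $D_3$ and $R_0$ the edge is already in a correct state, and by Corollary~\ref{coro:correctStateLP} $s$ can never reset --- contradicting the existence of $R_0$. Thus $m_s=0$ at $R_0$, and only your easy case survives.

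Your inline argument for $m_s\geq 1$ has a genuine gap. The step ``$t$ can reach $(p_t,m_t)=(s,1)$ only after $s$'s first $Write(t)$ of the interval'' presumes that $t$ must execute a $Marriage(s)$ \emph{inside} $[D_2,R_0]$. But $t$'s $Reset$ between $D_1$ and $D_2$ (Lemma~\ref{lem:LP:entre2sed}) does not rule out a subsequent $Marriage(s)$ still before $D_2$, using a stale $r_{st}=(You,0)$ left over from the first seduction; and then your invariant ``$r_{st}.p=You$ from the instant $t$ is in state $(s,1)$'' can fail at $D_2$, where $r_{st}=(Idle,0)$. What is really needed is that $t$ performed a $Write(s)$ before $D_2$, so that the value $r_{ts}=(Idle,0)$ at $D_2$ pins down $t$'s local state there. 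That is exactly what Lemma~\ref{lem:lp:write} supplies and what Lemma~\ref{petit1} (via Lemma~\ref{lem1:technical}) then packages. Once you invoke those, the $m_s\geq 1$ branch collapses to a contradiction and the trigger-by-trigger analysis you flag as ``the main obstacle'' becomes unnecessary.
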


\begin{proof}
According to Lemma~\ref{lem:u:increase}.(2), $s$ executes a $Reset$ between $D_1$ and $D_2$. Let $C_0\mapsto C_1$ be the transition where $s$ does so for the first time. So $s$ executes a $t$-$Reset$ rule in $C_0\mapsto C_1$. And then according to Lemma~\ref{lem:lp:write}, $t$ executes a $Write(s)$ rule between $D_1$ and $C_0$. 

Observe now that the execution starting in configuration $D_2$ reaches all the assumptions made in Lemmas~\ref{petit1}. Indeed, before $D_2$, $t$ executes a $s$-rule and then $s$ executes a $Reset$ rule.

According to Lemma~\ref{lem:LP:entre2sed}, $s$ executes a $Reset$ rule between $D_3$ and $D_4$. Let $C_2\mapsto C_3$ be the  transition where $s$ does so for the first time. So $s$ executes a $t$-$Reset$ in $C_2\mapsto C_3$. In $C_2$, either $m_s>0$ or $m_s=0$. 

If $m_s>0$ in $C_2$ and since $m_s=0$ in $D_3$ by the $Seduction$ rule, then $s$ performs $m_s:=1$ between $D_3$ and $C_2$, let say in transition $B\mapsto B'$. In this transition, $s$ executes an $Increase$ rule. Recall that $p_s=t$ in $D_3$ since $s$ executes a $Seduction(t)$ rule, and that $s$ does not execute any $Reset$ between $D_3$ and $C_2$. Thus $p_s=t$ in $B$ and so $(p_s,m_s)=(t,1)$ in $B'$. By Lemma~\ref{petit1}, the edge $(s,t)$ is in a correct state in $B'$. Thus, by Corollary~\ref{coro:correctStateLP}, from this configuration, $s$ cannot execute any $Reset$, which contradict the fact that it does in $C_2\mapsto C_3$. Then $m_s=0$ in $C_2$. 

Since $(p_s,m_s)=(t,0)$ in $C_2$ with $s<t$  then, according to the $Reset$ rule, $r_{ts}.m=2$ in $C_2$. Moreover, since $r_{ts}.m=0$ in $D_2$ by the $Seduction(t)$ rule, then $t$ executes a $Write(s)$ rule between $D_2$ and $C_2$. This is the transition $D\mapsto D'$ and we have $m_t=2$ and $p_t\neq null$  in $D$.  This conclude the proof. 
\end{proof}

\subsection{Main Result}

Using Lemmas~\ref{lem:JC:entre2mar},~\ref{lem:u:BetweenTwoResets},~\ref{lem:u:3increase}, we can conclude on the time complexity of the algorithm.

\begin{theorem}\label{th:final}
The algorithm  stabilizes in  $O(n\Delta^3)$ moves. 
\end{theorem}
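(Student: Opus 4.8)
The plan is to aggregate the per-rule move bounds established in the convergence section into a global count, summing over all nodes and edges, and then control the number of $Write$ moves as a multiplicative overhead.

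First I would count the $Seduction$ moves. By Theorem~\ref{theo:nb:seduction}, for every ordered edge $(s,t)$ with $s<t$, node $s$ executes $O(\Delta)$ many $Seduction(t)$ rules throughout $\Exec$. Summing over all neighbors $t$ of a fixed node $s$ gives $O(\Delta^2)$ $Seduction(-)$ moves per node, hence $O(n\Delta^2)$ $Seduction$ moves in total. Next I would bound the $Marriage$ and $Reset$ moves against this quantity. Lemma~\ref{lem:JC:entre2mar} shows that between two consecutive $Marriage(s)$ executions by $t$ there must be a $Seduction(t)$ by $s$, so for each ordered edge the number of $Marriage$ moves is at most one more than the number of $Seduction$ moves on that edge; this yields $O(n\Delta^2)$ $Marriage$ moves as well. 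For $Reset$, Lemma~\ref{lem:u:BetweenTwoResets} shows that between two consecutive $Reset$ moves of a node $u$ there is a $Seduction(-)$ or $Marriage(-)$ move by $u$, so the number of $Reset$ moves of $u$ is at most one more than its combined $Seduction$ and $Marriage$ count; summing gives $O(n\Delta^2)$ $Reset$ moves. Finally, Lemma~\ref{lem:u:3increase} shows that between three consecutive $Increase$ moves of $u$ there is a $Reset$ move of $u$, so the number of $Increase$ moves is $O(1)$ times the number of $Reset$ moves, again $O(n\Delta^2)$ in total.

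At this point every rule except $Write$ has a total count of $O(n\Delta^2)$. The remaining step is to convert this into a move bound including $Write$. The key observation, as the excerpt notes, is that each modification of the local variables $(p_u,m_u)$ of a node $u$ can trigger at most $\deg(u)\le\Delta$ subsequent $Write$ moves, one per register $r_{ua}$ for $a\in N(u)$, since a $Write(a)$ is enabled only when $r_{ua}\neq\FCTpointingto(u,a)$ and becomes disabled once executed (it stays disabled until $(p_u,m_u)$ changes again). Every $Write$ is therefore chargeable to a preceding $Seduction$, $Marriage$, $Increase$ or $Reset$ move that changed $(p_u,m_u)$, with a charge multiplicity of at most $\Delta$ per such move. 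Hence the number of $Write$ moves is $O(\Delta)$ times the number of non-$Write$ moves, i.e. $O(n\Delta^3)$, which dominates and gives the claimed total of $O(n\Delta^3)$ moves.

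The main obstacle I anticipate is making the $Write$-charging argument airtight, specifically justifying that between two $Write(a)$ moves by $u$ there must be an intervening change of $(p_u,m_u)$. One must verify that $\FCTpointingto(u,a)$ depends only on $(p_u,m_u)$ (which it does by its definition) so that once $r_{ua}$ is set to the correct value, $Write(a)$ stays disabled until $u$ alters $p_u$ or $m_u$; only the four non-$Write$ rules do that. A secondary subtlety is the bookkeeping of additive ``plus one'' terms when chaining the lemmas (each gives a bound of the form ``at most one more than''), but since we sum $O(n\Delta^2)$ such unit terms these are absorbed into the asymptotic constant, so they do not affect the final $O(n\Delta^3)$ bound.
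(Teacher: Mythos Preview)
Your proposal is correct and follows essentially the same argument as the paper's own proof: chain Theorem~\ref{theo:nb:seduction} with Lemmas~\ref{lem:JC:entre2mar}, \ref{lem:u:BetweenTwoResets}, and \ref{lem:u:3increase} to obtain $O(\Delta^2)$ non-$Write$ moves per node, then multiply by a $\Delta$ overhead for the $Write$ rules. Your treatment of the $Write$-charging step is in fact slightly more explicit than the paper's; the only minor omission is the initial batch of at most $\deg(u)$ $Write$ moves a node may perform before any high-level rule (due to an arbitrary initial configuration), but as you note these additive terms are absorbed into the $O(n\Delta^3)$ bound.
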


\begin{proof}
Let $(s,t)$ be an edge with $s<t$. By Theorem~\ref{theo:nb:seduction}, node $s$ can execute the $Seduction(t)$ rule $\mathcal{O}(\Delta)$ times. By Lemma~\ref{lem:JC:entre2mar}, between two executions of the $Marriage(s)$ rule by node $t$, node $s$ must execute a $Seduction(t)$ rule. This implies that $t$ can execute $\mathcal{O}(\Delta)$ $Marriage(s)$ rules. In total, a node can execute $\mathcal{O}(\Delta)$ $Seduction(-)$ and $Marriage(-)$ rules per neighbor, which gives a $\mathcal{O}(\Delta^2)$ total number of these rules, per node. Let now $u$ be a node. By Lemma~\ref{lem:u:BetweenTwoResets}, between two executions of the $Reset$ rule by node $u$, it must execute a $Marriage(-)$ or $Seduction(-)$ rule. Since we proved that it can do so $\mathcal{O}(\Delta^2)$ times, then it can execute the $Reset$ rule $\mathcal{O}(\Delta^2)$ times as well. Now by Lemma~\ref{lem:u:3increase}, between three executions of the $Increase$ rule, node $u$ must execute the $Reset$ rule. As a consequence, $u$ can execute the $Increase$ rule $\mathcal{O}(\Delta^2)$ times. Altogether, a node can execute at most $\mathcal{O}(\Delta^2)$ $Seduction(-)$, $Marriage(-)$, $Increase$ and $Reset$ rules. Let's call such rules high level rules. Each time it executes such a rule, a node may execute a $Write(-)$ rule $\mathcal{O}(\Delta)$ times to update all its registers. If it does not execute any high level rule, a node can execute at most $\mathcal{O}(\Delta)$ $Write$ rules. Thus in total a node can do $\mathcal{O}(\Delta^3)$ $Write$ rules. Finally, nodes can, in total, execute $\mathcal{O}(n\Delta^2)$ high level rules and $\mathcal{O}(n\Delta^3)$ $Write$ rules which gives a $\mathcal{O}(n\Delta^3)$ bound on the total number of moves.
\end{proof}

\section{Conclusion}
\label{sec:conclusion}
In this paper, we propose the first algorithm to solve the matching problem in link-register model. This algorithm is self-stabilizing, and takes at worst $O(n\Delta^3)$ moves before converging from the worst possible initialization, with the worst possible scheduling of communications. This is to be compared with similar solutions in state model, that converge in $O(m)$ moves. Indeed, asynchronous communications allow executions with a node taking steps in the algorithm ignoring the actual state of its neighbors. Moreover, to discard outdated values of the register, the matching process between two nodes requires a number of steps, to ensure that eventually, the two nodes agree regarding their marriage and will no longer take any move.

\end{document}